\newtheorem{theorem}{Theorem}
\newtheorem{lemma}{Lemma}
\newtheorem{corollary}{Corollary}
\newtheorem{example}{Example}
\newtheorem{assumption}{Assumption}
\def\beginmat{ \left( \begin{array} }
\def\endmat{ \end{array} \right) }
\def\log{{\rm log}}
\def\tr{{\rm tr}}
\DeclareMathOperator{\E}{\mathbb{E}}
\DeclareMathOperator{\Cov}{Cov}
\DeclareMathOperator{\Length}{length}
\DeclareMathOperator{\MN}{\mathrm{MN}}
\newcommand{\etr}{\mathrm{etr}}
\newcommand{\beginsupplement}{
\setcounter{table}{0}
\renewcommand{\thetable}{S\arabic{table}}
\setcounter{figure}{0}
\renewcommand{\thefigure}{S\arabic{figure}}
\setcounter{equation}{0}
\renewcommand{\theequation}{S\arabic{equation}}
\setcounter{lemma}{0}
\renewcommand{\thelemma}{S\arabic{lemma}}
\setcounter{theorem}{0}
\renewcommand{\thetheorem}{S\arabic{theorem}}
\setcounter{section}{0}
\renewcommand{\thesection}{S\arabic{section}}
}
 \newcommand{\noop}[1]{}
\title{Gaussian orthogonal latent factor processes for large incomplete matrices of correlated data}
\providecommand{\keywords}[1]
{
  \textbf{\textit{Keywords---}} #1
}
\date{}
\author{Mengyang Gu\thanks{Department of Statistics and Applied Probability, University of California, Santa Barbara, CA, Email: \href{mailto:mengyang@pstat.ucsb.edu}{mengyang@pstat.ucsb.edu}} \, and Hanmo Li\footnote{
Department of Statistics and Applied Probability, University of California, Santa Barbara, CA 93106, USA, Email: \href{mailto:hanmo@pstat.ucsb.edu}{hanmo@pstat.ucsb.edu} }} 
\begin{document}
\maketitle

	\begin{abstract}
	We introduce  Gaussian orthogonal latent factor processes for modeling and predicting large correlated data.  To handle the computational challenge, we first decompose the likelihood function of the Gaussian random field with 
	{{a}} multi-dimensional input domain into a product of densities at the orthogonal components with {{lower-dimensional}} inputs. The continuous-time Kalman filter is implemented to compute the likelihood function {{efficiently}} without making {{approximations}}.  We also show that the posterior distribution of the factor processes {{is}} independent, as a consequence of prior independence of factor processes and orthogonal factor loading matrix. For studies with large sample {{sizes}}, we propose a flexible way to model the mean, and we derive the marginal posterior distribution to solve identifiability issues in sampling these parameters. Both simulated and real data applications confirm the outstanding performance of this method.  
	\end{abstract}

~~

 \keywords{Orthogonality, marginalization, correlated data, Gaussian processes}

	\section{Introduction}
	Large spatial, spatio-temporal, and functional data are commonly used in various studies, including geological hazard quantification,  engineering,  and medical imaging, to facilitate scientific discoveries. Many data sets are observed on  incomplete matrices  with missing values due to the limitation of {{the}} technique or computational cost. 
	
	Gaussian processes (GPs)  are widely used for modeling correlated data  \citep{banerjee2014hierarchical,cressie1993statistics}. 	Computing the likelihood function from a GP model, however, generally takes $O(N^3_o)$ operations in finding the inverse and determinant of the covariance matrix, where $N_o$ is the number of observations. The computational bottleneck prevents modeling a large correlated data set by GPs directly. Tremendous efforts have been made to approximate a GP model in recent studies,  including, for example, stochastic partial differential equation approach \citep{lindgren2011explicit,rue2009approximate}, hierarchical nearest neighbor methods \citep{datta2016hierarchical}, multi-resolution process \citep{katzfuss2017multi}, local Gaussian process approach \citep{gramacy2015local},  periodic embedding \citep{guinness2017circulant,stroud2017bayesian} and covariance tapering  \citep{kaufman2008covariance}, which have obtained wide attention in recent years. 

	Compared to a large number of studies on approximating GPs, less progress have been made on efficiently computing the likelihood function without approximation. 
	In this work, we propose a flexible and computationally feasible approach to model large incomplete matrix observations of correlated data, called  Gaussian orthogonal latent factor (GOLF) processes. 
    Bayesian inference was derived to assess the uncertainty in parameter estimation and predictions. 
	 GPs with product covariance functions on lattice observations or semiparametric latent factor models  \citep{sacks1989design,kennedy2001bayesian,seeger2005semiparametric} can be represented as full-rank GOLF processes, which permit much smaller computational costs than directly computing the likelihood function and making  predictions. Further reducing the computational cost can be achieved by low-rank  GOLF processes, where the computational cost {{is}} similar to the order of principal component analysis. 




	
	

	We highlight a few contributions of this work.
	We first show that for GPs with product  covariance functions or semiparametric latent factor models, if the latent factor loading matrix is orthogonal,  prior independence of latent factor processes implies posterior independence of factor processes. The new finding allows one to decompose the likelihood function of lattice data into a product of  densities of projected output, which greatly reduces the computational complexity. Separate continuous-time Kalman filters can be applied to compute the posterior distributions of  factor processes at lower dimensional inputs in parallel, which has linear computational operations with respect to the number of observations.
	Second, as {{a}} large number of observations {{provide}} rich information,  we introduce a flexible  way to model the mean function and derive the marginal posterior distribution of the linear coefficients,  
	to solve identifiability issues in posterior sampling. Furthermore, compared with the maximum marginal likelihood estimation of factor loadings derived in  \cite{gu2018generalized},	our approach is applicable to model    observations on incomplete lattice. Finally, we developed Bayesian inference for  uncertainty assessment, which {{is}} critically important for  inverse problems in applications  \citep{kennedy2001bayesian,bayarri2007framework}.

 The purpose of this work are twofold. First, we aim to develop a pipeline of computationally efficient  methods of modeling correlated data with multi-dimensional input without approximating the likelihood function.  
  Properties of GOLF processes derived in this work are useful for developing an efficient approximation algorithm for scenarios with  multi-dimensional input variables. 
 Besides, the nonseparable covariance and coordinate-specific mean coefficients proposed in this work provide flexible choices for models of local information. 
 Second, we primarily focus on applications based on images,  which include  inverse problems by satellite radar interferograms \citep{anderson2019magma}, and  estimating dynamic information from microscopic videos  \citep{cerbino2008differential}. Our approach allows for efficient Bayesian inference in a large sample scenario.


   The rest of the article is organized as follows. In Section \ref{subsec:orthogonal_independence}, we introduce the GOLF model  with an emphasis on {{the}} orthogonal decomposition of the likelihood function and posterior independence of latent factor processes. The flexible mean function, spatial latent factor loading matrix and kernel functions are discussed in Section \ref{subsec:mean_function}-\ref{subsec:kernel}, respectively. We introduce the Markov Chain Monte Carlo (MCMC) algorithm and discuss the computational complexity in Section \ref{subsec:MCMC}. In Section \ref{subsec:KF}, we introduce the continuous-time Kalman filter in computing the likelihood function with linear computational complexity. Section \ref{sec:comparison} compares our approach with other alternatives, and numerical results for comparing these approaches are 
   presented in Section \ref{sec:simulation}-\ref{sec:real_data}. We conclude this work and discuss  several potential extensions in Section \ref{sec:conclusion}. Proofs of lemmas and theorems are given in supplementary materials. The data and code used in this paper are publicly available ($\tt \href{https://github.com/UncertaintyQuantification/GOLF}{https://github.com/UncertaintyQuantification/GOLF}$).

	\section{Gaussian orthogonal latent factor processes}
	\label{sec:GOLF}

	\subsection{Orthogonal decomposition and posterior independence}
	\label{subsec:orthogonal_independence}
	Let $\mathbf y_s(\mathbf x)=(y_{s_{1}}(\mathbf x),...,y_{s_{n_1}}(\mathbf x))^T$ be an $n_1\times 1$ vector of observations at coordinates $\mathbf s=(\mathbf s_1,...,\mathbf s_{n_1})^T$ with $\mathbf s_i \in \mathbb R^{p_1}$ for $i=1,...,n_1$ and input $\mathbf x \in \mathbb R^{p_2}$. For {{spatially}} correlated data, for instance, $s$ and $x$ denote the latitude and longitude, respectively, and in spatio-temporal models, the spatial coordinates and time {{points}} can be defined as $\mathbf s$ and $x$, respectively.

 Consider the latent factor model: 
	\begin{equation}
	\mathbf y_s(\mathbf x)=\mathbf m_s(\mathbf x)+\mathbf A_s \mathbf z(\mathbf x)+\bm \epsilon, 
	\label{equ:GOLF_model}
	\end{equation}
	where $\mathbf A_s=[\mathbf a_1,...,\mathbf a_d]$ is a ${n_1}\times d$ factor loading matrix and  $\mathbf z(\mathbf x)=( z_1(\mathbf x),..., z_d(\mathbf x))^T$ is a d-dimensional factor processes  with $d\leq {n_1}$,    $ \bm \epsilon \sim \mathcal N(0, \sigma^2_0 \mathbf I_{n_1} )$ being independent Gaussian noises. The mean function  $\mathbf m_s(\mathbf x)=(m_{s_1}(\mathbf x),...,m_{s_{n_1}}(\mathbf x) )^T$ is  typically modeled via a linear trend of regressors, which will be discussed in Section \ref{subsec:mean_function}. 
	  	
		
       	As   data are typically positively correlated at two nearby inputs,  we assume $ z_l(\cdot)$ independently follows a zero-mean Gaussian process (GP), meaning that for any $\{\mathbf x_1,...,\mathbf x_{n_2}\}$,  $\mathbf Z^T_l=(Z_l(\mathbf x_1),...,Z_l(\mathbf x_{n_2}))^T$ is a multivariate normal distribution:  
       	\begin{equation}
       	    (\mathbf Z^T_l  \mid \bm \Sigma_l) \sim \mathcal N(\mathbf 0, \bm \Sigma_l )
       	    \label{equ:Z}
       	\end{equation}
       	 where the $(i,j)$th entry of the covariance matrix is $\sigma^2_lK_{l}(
       	 \mathbf x_i, \mathbf x_j)$ with kernel function $K_l(\cdot,\cdot)$ and variance parameter $\sigma^2_l$, for $l=1,...,d$. Here we assume  independence between the factor processes {\textit{a priori}}. A detailed comparison between our approach and other related approaches is discussed in Section \ref{sec:comparison}.

      	     Note that only the d-dimensional linear subspace of factor loadings $\mathbf A_s$ can be identified  if not further specification of factor loading matrix $\mathbf A_s$ is made, as  the model (\ref{equ:GOLF_model}) is unchanged if the pair $(\mathbf A_s, \mathbf z(\mathbf x))$ is replaced by $(\mathbf A_s \mathbf G, \mathbf G^{-1} \mathbf z(\mathbf x))$  for any invertible matrix $\mathbf G$.  Besides, {{the}} computation could be challenging when the number of factors or input parameters is  large. Thus, we assume that the column of $\mathbf A_s$ is orthonormal.
	
	


	 \begin{assumption}
	 \begin{equation}
	 \mathbf A_s^T \mathbf A_s=\mathbf I_d.
	 \label{equ:A}
	 \end{equation}
	 \label{assumption:A}
	 \end{assumption}
	  Assumption (\ref{assumption:A}) may be replaced by $\mathbf A_s^T  \mathbf A_s=\bm \Lambda$, where $\bm \Lambda $ is a diagonal matrix. {Since we  estimate  variance parameters $\bm \sigma^2=(\sigma^2_1,..., \sigma^2_d)^T$ of  latent factor processes by data,  diagonal terms of  $\bm \Lambda $ are redundant.  Thus we proceed with the Assumption \ref{assumption:A}}.


	


%
%
%
	
	

      Let us first assume we have an $n_1 \times n_2$ matrix of observations $\mathbf Y=[\mathbf y_s(\mathbf x_1),...,\mathbf y_s(\mathbf x_{n_2})]$  at inputs $\{\mathbf x_1,...,\mathbf x_{n_2}\}$, and then we extend our method  to incomplete matrix observations in the  Section \ref{sec:post_sample}. Denote $\mathbf B$ the regression parameters  in the $n_1\times n_2$ mean matrix $\mathbf M=(\mathbf m_s(\mathbf x_1),...,\mathbf m_s(\mathbf x_{n_2}))$. Denote   $\bm \Theta=(\mathbf A_s, \mathbf B, \bm \sigma^2, \bm \gamma)$, which contains the factor loadings, mean parameters, variance parameters and range parameters in the kernel functions. Further let  $\mathbf A_F=[\mathbf A_s,\mathbf A_c]=[\mathbf a_1,\mathbf a_2,...,\mathbf a_{n_1}]$, where $\mathbf A_c$ is an $n_1\times (n_1-d)$ matrix of the orthogonal complement of $\mathbf A_s$.  Assumption \ref{assumption:A} allows us to decompose the marginal likelihood (after integrating out the random factor  $\mathbf Z$) into a product of multivariate normal densities of the outcomes at the projected coordinates:
	\begin{equation}
	p(\mathbf Y \mid \bm \Theta)= \prod^{d}_{l=1} \mathcal{PN}(\tilde {\mathbf y}_l; \mathbf 0, \bm {\tilde \Sigma}_l ) \prod^{n_1}_{l=d+1} \mathcal{PN}(\tilde {\mathbf y}_{l}; \mathbf 0, \sigma^2_0 \mathbf I_{n_1}),  
	\label{equ:marginal_lik}
	\end{equation}
	where  $\tilde {\mathbf y}_l= (\mathbf Y-\mathbf M)^T \mathbf a_l $ for $l=1,...,d$, and $\tilde {\mathbf y}_{l} = (\mathbf Y-\mathbf M)^T  \mathbf a_{l} $ with $\mathbf a_{l}$ being the $(l-d)$th column of $\mathbf A_c$ for $l=d+1,...,n_1$, $\bm {\tilde \Sigma}_l=\bm { \Sigma}_l+\sigma^2_0 \mathbf{I}_{n_2}$  and  $\mathcal{PN}(\cdot\,;\bm \mu,\bm \Sigma)$ denotes the density of the multivariate normal distribution with mean $\bm \mu$ and covariance matrix $\bm \Sigma$. In practice, note that we can avoid computing $\mathbf A_c$ by using the identity $\mathbf A_s \mathbf A^T_s+\mathbf A_c \mathbf A^T_c=\mathbf I_{n_1}$.
 The derivation of Equation (\ref{equ:marginal_lik})    is given in the supplementary materials.

	 The orthogonal factor loading matrix in Assumption \ref{assumption:A} and prior independence of factor processes lead to  the posterior independence of the factor processes, introduced in the following corollary.	
	

	
	%
	
	
		\begin{corollary}
	 For model (\ref{equ:GOLF_model}) with Assumption \ref{assumption:A}:
	\begin{enumerate}
	\item The covariance of the posterior marginal distributions of any two factor processes is zero: $\Cov[\mathbf Z^T_l, \mathbf Z^T_m\mid \mathbf Y, \bm \Theta]=\mathbf 0_{n_2\times n_2}$, where  $l=1,...,d$, $m=1,...,d$ and $l\neq m$.
	
	\item For $l=1,...,d$, the posterior distribution  $(\mathbf Z^T_l \mid \mathbf Y, \bm \Theta)$ follows a multivariate normal distribution
		\begin{equation}
		 \mathbf Z^T_l \mid \mathbf Y, \bm \Theta \sim \mathcal N\left( \bm { \mu}_{Z_l}, \,  { \bm \Sigma}_{Z_l}  \right), 
		 \label{equ:Z_t_l_post}
		 \end{equation}
	where $  \bm { \mu}_{Z_l}=  \bm \Sigma_l \bm {\tilde \Sigma}_l^{-1}  \mathbf {\tilde y}_l$ and ${ \bm \Sigma}_{Z_l}  =\bm \Sigma_l-\bm \Sigma_l \bm {\tilde \Sigma}_l^{-1}\bm \Sigma_l $ with $\bm {\tilde \Sigma}_l=\bm \Sigma_l +\sigma^2_0 \mathbf I_{n_2}$.

	\end{enumerate}
		\label{cor:ind_post}
	\end{corollary}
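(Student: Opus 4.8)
The plan is to reduce both claims to one structural fact: after rotating the centered data by the full orthogonal matrix $\mathbf A_F=[\mathbf A_s,\mathbf A_c]$, the inference problem decouples into $d$ independent scalar-GP denoising problems, plus $n_1-d$ pure-noise coordinates that carry no information about the factors. First I would set up the change of variables. Under Assumption \ref{assumption:A} the columns of $\mathbf A_s$ are orthonormal, and $\mathbf A_c$ is its orthogonal complement, so $\mathbf A_F$ is an $n_1\times n_1$ orthogonal matrix. Consequently the map $(\mathbf Y-\mathbf M)\mapsto(\tilde{\mathbf y}_1,\dots,\tilde{\mathbf y}_{n_1})$ defined by $\tilde{\mathbf y}_l=(\mathbf Y-\mathbf M)^T\mathbf a_l$ is a linear bijection with constant Jacobian; hence, for fixed $\bm\Theta$ (which fixes $\mathbf M$), conditioning on $\mathbf Y$ is equivalent to conditioning on the entire collection $(\tilde{\mathbf y}_1,\dots,\tilde{\mathbf y}_{n_1})$.

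Next I would rewrite each projection in terms of the latent quantities. Collecting the column-wise noises into $\mathbf E=[\bm\epsilon_1,\dots,\bm\epsilon_{n_2}]$ and using $\mathbf A_s^T\mathbf a_l=\mathbf e_l$ for $l\le d$ and $\mathbf A_s^T\mathbf a_l=\mathbf 0$ for $l>d$, I obtain $\tilde{\mathbf y}_l=\mathbf Z_l^T+\mathbf w_l$ for $l=1,\dots,d$ and $\tilde{\mathbf y}_l=\mathbf w_l$ for $l=d+1,\dots,n_1$, where $\mathbf w_l=\mathbf E^T\mathbf a_l$. The key calculation, and what I expect to be the main obstacle, is establishing that the projected noises are mutually independent: because the columns $\bm\epsilon_j$ are independent with isotropic covariance $\sigma_0^2\mathbf I_{n_1}$, one gets $\Cov[\mathbf w_l,\mathbf w_m]=\sigma_0^2(\mathbf a_l^T\mathbf a_m)\mathbf I_{n_2}=\sigma_0^2\delta_{lm}\mathbf I_{n_2}$, so the jointly Gaussian vectors $\mathbf w_1,\dots,\mathbf w_{n_1}$ are independent with $\mathbf w_l\sim\mathcal N(\mathbf 0,\sigma_0^2\mathbf I_{n_2})$. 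This is precisely where both the isotropy of $\bm\epsilon$ and the completion of $\mathbf A_s$ to a full orthogonal basis are essential; absent either, the coordinates would remain coupled and the posterior would not factorize. Combined with the prior independence of the factor processes and their independence from $\mathbf E$, the whole collection $\{\mathbf Z_l^T\}_{l\le d}\cup\{\mathbf w_l\}_{l\le n_1}$ becomes mutually independent.

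With mutual independence in hand, the posterior factorizes. The joint density of $(\mathbf Z_1^T,\dots,\mathbf Z_d^T,\tilde{\mathbf y}_1,\dots,\tilde{\mathbf y}_{n_1})$ splits as $\prod_{l=1}^d p(\mathbf Z_l^T)\,p(\tilde{\mathbf y}_l\mid\mathbf Z_l^T)\prod_{l=d+1}^{n_1}p(\tilde{\mathbf y}_l)$, since $\tilde{\mathbf y}_l$ depends only on block $l$, and the pure-noise coordinates $\tilde{\mathbf y}_{d+1},\dots,\tilde{\mathbf y}_{n_1}$ are independent of every $\mathbf Z_l^T$ and factor out of the conditional. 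Normalizing gives $p(\mathbf Z_1^T,\dots,\mathbf Z_d^T\mid\mathbf Y,\bm\Theta)=\prod_{l=1}^d p(\mathbf Z_l^T\mid\tilde{\mathbf y}_l)$, so the factor processes are posterior-independent. In particular $\Cov[\mathbf Z_l^T,\mathbf Z_m^T\mid\mathbf Y,\bm\Theta]=\mathbf 0_{n_2\times n_2}$ for $l\neq m$, which is the first claim.

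For the second claim, I would apply the standard Gaussian conditioning identity to the jointly normal pair $(\mathbf Z_l^T,\tilde{\mathbf y}_l)$. Its covariance blocks are $\Cov[\mathbf Z_l^T]=\bm\Sigma_l$, $\Cov[\tilde{\mathbf y}_l]=\bm\Sigma_l+\sigma_0^2\mathbf I_{n_2}=\bm{\tilde\Sigma}_l$, and $\Cov[\mathbf Z_l^T,\tilde{\mathbf y}_l]=\Cov[\mathbf Z_l^T,\mathbf Z_l^T+\mathbf w_l]=\bm\Sigma_l$ (using $\mathbf w_l\perp\mathbf Z_l^T$). The conditioning formula then yields $\bm\mu_{Z_l}=\bm\Sigma_l\bm{\tilde\Sigma}_l^{-1}\tilde{\mathbf y}_l$ and $\bm\Sigma_{Z_l}=\bm\Sigma_l-\bm\Sigma_l\bm{\tilde\Sigma}_l^{-1}\bm\Sigma_l$, matching the stated expressions. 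The remaining work is purely bookkeeping: carefully tracking the $\mathbf A_s^T\mathbf a_l$ identities and the column-versus-row orientation of $\mathbf Z_l^T$, which follow directly from Assumption \ref{assumption:A} and the decomposition in Equation (\ref{equ:marginal_lik}).
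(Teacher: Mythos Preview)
Your proof is correct and follows a genuinely different route from the paper's. The paper works directly with the vectorized posterior density: it writes $p(\mathbf Z_{vt}\mid\mathbf Y,\bm\Theta)$ as a Gaussian with precision $\sigma_0^{-2}\mathbf A_v^T\mathbf A_v+\bm\Sigma_v^{-1}$, observes that Assumption~\ref{assumption:A} makes $\mathbf A_v^T\mathbf A_v=\mathbf I_{n_2 d}$ so the precision is block diagonal (since $\bm\Sigma_v$ already is), and then uses the Woodbury identity to rewrite each block's mean and covariance in the stated $\bm\Sigma_l\bm{\tilde\Sigma}_l^{-1}$ form. You instead rotate the data by the full orthogonal matrix $\mathbf A_F$ \emph{before} forming the posterior, showing that the model decouples into $d$ independent signal-plus-noise channels $\tilde{\mathbf y}_l=\mathbf Z_l^T+\mathbf w_l$ and $n_1-d$ pure-noise channels, then invoke the standard Gaussian conditioning formula on each pair $(\mathbf Z_l^T,\tilde{\mathbf y}_l)$. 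Your argument is more probabilistic and isolates exactly where isotropy of $\bm\epsilon$ and orthogonality of $\mathbf A_s$ are used; it also lands directly on the $\bm\Sigma_l-\bm\Sigma_l\bm{\tilde\Sigma}_l^{-1}\bm\Sigma_l$ expressions without a separate Woodbury step. The paper's algebraic approach, on the other hand, avoids introducing the completion $\mathbf A_c$ explicitly and stays entirely within the $n_2 d$-dimensional vectorized factor space, which is closer to how the computations are organized downstream.
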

	We call the latent factor processes in (\ref{equ:GOLF_model}) with Assumption \ref{assumption:A}  \textit{Gaussian orthogonal latent factor} (GOLF) processes, because of  orthogonal decomposition of the likelihood function and  posterior independence between two factor processes.  {The main idea  is to decompose the likelihood of GP models with multi-dimensional inputs by a product of densities with low dimension input and to utilize the continuous-time Kalman filter for fast computation}. As we will see in Section \ref{sec:post_sample}, these two properties dramatically ease the computational burden.

\subsection{Flexible mean function and marginalization}
\label{subsec:mean_function}



The mean function $m_s(\cdot)$ plays an important role in modeling and predicting correlated data.   Computer models (such as {the} numerical solution of partial differential equations), for example, can be included as a part of the mean in an inverse problem \citep{kennedy2001bayesian}. 
Here for simplicity, we use only a linear basis function of $\mathbf s$ and $\mathbf x$, whereas additional terms may be included in the mean if available.

In a GP model, the regression coefficients are often assumed to  be the same across one basis function. 
For instance, the mean function may be modeled as $\mathbf m_{s}(\mathbf x)=\mathbf h_{1}(\mathbf s) \mathbf b_{1,0}$, or $\mathbf m_{s}(\mathbf x)=\mathbf h_{2}(\mathbf x) \mathbf b_{2,0}$, where $\mathbf h_{1}(\mathbf s)$ and   $\mathbf h_{2}(\mathbf x)$ are a set of $1\times q_1$ and $1\times q_2$ mean basis functions with $\mathbf b_{1,0}$ and $\mathbf b_{2,0}$ being $q_1\times 1$ and $q_2\times 1$ regression coefficients, respectively. The regression coefficients $\mathbf b_{1,0}$,  for example,  are shared across each $\mathbf x$.

\begin{figure}[t]
		\begin{tabular}{cc}
		 \vspace{-.4in}
		\includegraphics[height=.3\textwidth,width=.45\textwidth ]{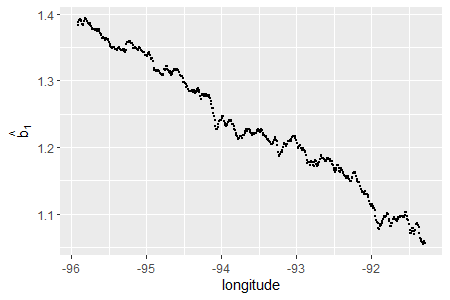} \hspace{.1in}
		\includegraphics[height=.3\textwidth,width=.45\textwidth ]{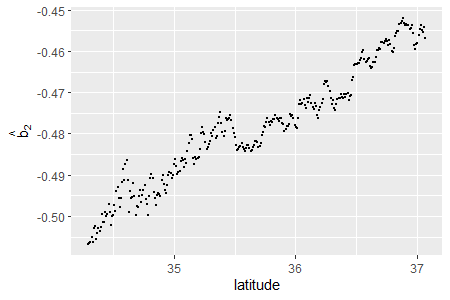}
		\vspace{.25in}

		\end{tabular}
	\caption{Estimated linear coefficients for temperature observations in \cite{heaton2019case}. In the left panel, the dots are the estimated coefficients in a linear regression of observations at each longitude separately using  latitudes as  regressors. The estimated linear coefficients for the observations at each latitude are graphed in the right panel, where  longitudes are used as  regressors.}
	\label{fig:Temp_obs_EDA}
\end{figure}



 The shared regression coefficients may  be a restrictive assumption when data sets are large.  Consider, for instance, the temperature data set used in \cite{heaton2019case}, where the temperature values are shown in Figure \ref{fig:heaton_pred}. In Figure \ref{fig:Temp_obs_EDA}, we graph the fitted linear regression coefficients using latitudes or longitudes as regressors. The estimated regression coefficients are not the same across latitude or longitude.   
 A natural extension of modeling the mean function, therefore, is to allow the mean parameters at each row or column of the observations to be different, e.g. $\mathbf m_{s_i}(\mathbf x_j)=\mathbf h_{1}(\mathbf s_i) \mathbf b_{1,j}$, or $\mathbf m_{s_i}(\mathbf x_j)=\mathbf h_{2}(\mathbf x_j) \mathbf b_{2,i}$, for $i=1,...,n_1$ and $j=1,...,n_2$. Some choices of the  individual mean functions are summarized in Table \ref{tab:mean_function}. 
The mean function may be specified based on model interpretation or exploratory data analysis. Models with different regression coefficients across different types of coordinates are  more suitable to model a large number of observations, as they are more flexible to capture the trend. 
 \begin{table}[t]
\begin{center}
\begin{tabular}{lcccc}
  \hline
    Individual mean & $\mathbf m_{s_i}(\mathbf x_j)$ & $\mathbf M$ & coefficients $\mathbf B$\\
   \hline
  Linear trend of $\mathbf s$   &$\mathbf h_{1}(\mathbf s_i)  \mathbf b_{1,j}$ & $\mathbf H_1 \mathbf B_1$ & $\mathbf B_1$ \\
    Linear trend of $\mathbf x$ &$\mathbf h_{2}(\mathbf x_j)  \mathbf b_{2,i}$  &$(\mathbf H_2 \mathbf B_2)^T$ & $\mathbf B_2$  \\
   Mixed linear trend  &$\mathbf h_{1}(\mathbf s_i)  \mathbf b_{1,j}+\mathbf h_{2}(\mathbf x_j)  \mathbf b_{2,i}$& $\mathbf H_1 \mathbf B_1+(\mathbf H_2 \mathbf B_2)^T$ & $[\mathbf B_1,\mathbf B_2]$ \\
%
%
\hline
\end{tabular}
\end{center}
   \caption{Summary of the mean function studied in this work. In the third column, $\mathbf H_1=(\mathbf h^T_{1}(\mathbf s_1),...,\mathbf h^T_{1}(\mathbf s_{n_1}))^T$ and  $\mathbf H_2=(\mathbf h^T_{2}(\mathbf x_1),...,\mathbf h^T_{2}(\mathbf x_{n_2}))^T$ are $n_1\times q_1$ and $n_2\times q_2$ mean basis matrices, respectively. Regression coefficients are denoted as $\mathbf B_1=(\mathbf b_{1,1}, ...., \mathbf b_{1,n_2})$ and $\mathbf B_2=(\mathbf b_{2,1}, ...., \mathbf b_{2,n_1})$ for the basis function $\mathbf h_1(\cdot)$ and $\mathbf h_2(\cdot)$, respectively.   }

   \label{tab:mean_function}
\end{table}

 To implement full Bayesian inference of the parameters, one may sample from the  posterior distribution of regression parameters $p(\mathbf B\mid \bm \Theta_{-B}, \mathbf Y, \mathbf Z)$. However, we found a severe  identifiability problem between the mean $\mathbf M$  and $\mathbf A \mathbf Z$, when the regression coefficients $\mathbf B$  are sampled from the full posterior distribution.  This is because the likelihood function of the mean parameters is flat when data are very correlated. Consequently, the absolute values of the entries of these two matrices can be both big, making the MCMC algorithm  very unstable. To alleviate the identifiability problem, we first integrate out factors  and sample regression parameters from the marginal posterior distribution $p(\mathbf B\mid \bm \Theta_{-B}, \mathbf Y)$. The marginal posterior distributions of the regression parameters are given in the following Theorem \ref{thm:marginal_post_B_1_B_2} and Theorem \ref{thm:sample_B12}.





\begin{theorem}
\label{thm:marginal_post_B_1_B_2}
\begin{enumerate}
\item (Row regression coefficients). Assume  $\mathbf M=\mathbf H_1 \mathbf B_1$ and the objective prior $\pi(\mathbf B_1)\propto 1$ for $\mathbf B_1$. After marginalizing out the factor $\mathbf Z$, the posterior samples of $\mathbf {  B}_1$  from $p(\mathbf {  B}_1 \mid \mathbf Y, \bm \Theta_{-B_1})$  can be obtained by 
\begin{equation}
\mathbf B_1= \mathbf {\hat B}_1 +  (\mathbf H^T_1 \mathbf H_1)^{-1} \mathbf H^T_1 \mathbf A_s \mathbf {\tilde B}^T_{1,0,s}+ \sigma_0 (\mathbf H^T_1 \mathbf H_1)^{-1} \mathbf H^T_1 (\mathbf I_{n_1}- \mathbf A_s \mathbf A^{T}_s ) \mathbf Z_{0,1}
\label{equ:sample_B_1}
\end{equation}
where $\mathbf {\hat B}_1= (\mathbf H^T_1 \mathbf H_1)^{-1} \mathbf H^T_1 \mathbf Y$, $ \mathbf {\tilde B}_{1,0,s}$ is an $n_2 \times d$ matrix with the $lth$ column independently sampled from $\mathcal N(\mathbf 0, \bm {\tilde \Sigma}_l)$ for $l=1,...,d$, and $\mathbf Z_{0,1}$ is an $n_1 \times n_2$ matrix with each entry independently sampled from the standard normal distribution.

\item (Column regression coefficients). 
Assume $\mathbf M=(\mathbf H_2 \mathbf B_2)^T$ and the objective prior $\pi(\mathbf B_2)\propto 1$ for the regression parameters $\mathbf B_2$. After marginalizing out the factor $\mathbf Z$, the posterior samples of $\mathbf {  B}_2$  from $p(\mathbf {  B}_2 \mid \mathbf Y, \bm \Theta_{-B_2})$ can be obtained by	
	\begin{equation}
\mathbf B_2= \mathbf {\hat B}_2 +    \mathbf {\tilde B}_{2,0,s} \mathbf A^T_s+ \sigma_0 \mathbf L_{H_2} \mathbf Z_{0,2} (\mathbf I_{n_1}- \mathbf A_s \mathbf A^{T}_s ), 
\label{equ:sample_B_2}
\end{equation}
where $\mathbf {\hat B}_2= \sum^d_{l=1} (\mathbf H^T_2 \bm{\tilde \Sigma}_l^{-1} \mathbf H_2)^{-1} \mathbf H^T_2 \bm{\tilde \Sigma}^{-1}_l \mathbf Y^T\mathbf a_l \mathbf a^T_l + (\mathbf H^T_2 \mathbf H_2)^{-1} \mathbf H^T_2 \mathbf Y^T (\mathbf I_{n_1}-\mathbf A_s \mathbf A^T_s)  $ and  $\mathbf {\tilde B}_{2,0,s}$ is a $q_2\times d$ matrix with the $l$th column independently sampled from $\mathcal N(\mathbf 0, (\mathbf H^T_2 \bm {\tilde \Sigma}^{-1}_l  \mathbf H_2)^{-1})$ for $l=1,...,d$. $\mathbf L_{H_2}$ is a $q_2 \times q_2$ matrix such that  $\mathbf L_{H_2}\mathbf L^T_{H_2}=(\mathbf H^T_2 \mathbf H_2)^{-1}$ and   $\mathbf Z_{0,2}$ is a $q_2 \times n_1$ matrix with each entry independently sampled from the standard normal distribution.

	
\end{enumerate}

\end{theorem}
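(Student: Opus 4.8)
The plan is to exploit the fact that, under the flat prior $\pi(\mathbf{B}_1)\propto 1$, the marginal posterior $p(\mathbf{B}_1\mid\mathbf{Y},\bm{\Theta}_{-B_1})$ obtained after integrating out $\mathbf{Z}$ is a matrix-variate Gaussian; it therefore suffices to (a) read off its mean and covariance and (b) verify that the affine-in-Gaussians expression on the right-hand side of (\ref{equ:sample_B_1}) has exactly those first two moments, since a Gaussian is determined by them. First I would substitute $\mathbf{M}=\mathbf{H}_1\mathbf{B}_1$ into the decomposed marginal likelihood (\ref{equ:marginal_lik}), so that $\tilde{\mathbf{y}}_l=\mathbf{Y}^T\mathbf{a}_l-\mathbf{B}_1^T\mathbf{H}_1^T\mathbf{a}_l$ and the negative log-likelihood becomes a single quadratic form in $\mathbf{B}_1$. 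This quadratic form splits into a sum over the $d$ ``signal'' projections $l=1,\dots,d$, each weighted by $\tilde{\bm{\Sigma}}_l^{-1}$, and the $n_1-d$ ``complement'' projections $l=d+1,\dots,n_1$, each weighted by $\sigma_0^{-2}\mathbf{I}_{n_2}$. Completing the square in $\mathbf{B}_1$ then identifies the posterior as Gaussian and produces its mean and covariance, which are the quantities the sampling formula must reproduce.

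The key structural device throughout is the resolution of the identity $\mathbf{A}_s\mathbf{A}_s^T+\mathbf{A}_c\mathbf{A}_c^T=\mathbf{I}_{n_1}$, valid because $\mathbf{A}_F=[\mathbf{A}_s,\mathbf{A}_c]$ is orthogonal. This lets me replace sums of the form $\sum_{l=d+1}^{n_1}\mathbf{a}_l\mathbf{a}_l^T$ by $\mathbf{I}_{n_1}-\mathbf{A}_s\mathbf{A}_s^T$, so that $\mathbf{A}_c$ never needs to be formed explicitly, and it cleanly separates the $\mathbf{A}_s$-directions, which carry the factor processes and hence the covariances $\tilde{\bm{\Sigma}}_l$, from the orthogonal complement, which carries only the white noise of variance $\sigma_0^2$. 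Because data in orthogonal projections are independent, the posterior covariance of $\mathbf{B}_1$ decomposes into a signal contribution plus a complement contribution; I would construct a draw by generating the signal part through the $d$ vectors with covariance $\tilde{\bm{\Sigma}}_l$ assembled into $\tilde{\mathbf{B}}_{1,0,s}$, the complement part through the standard-normal matrix $\mathbf{Z}_{0,1}$ scaled by $\sigma_0$, and mapping both back into coefficient space with the projection operator $(\mathbf{H}_1^T\mathbf{H}_1)^{-1}\mathbf{H}_1^T$.

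The verification then reduces to a moment computation: the right-hand side of (\ref{equ:sample_B_1}) is an affine function of the independent Gaussians $\tilde{\mathbf{B}}_{1,0,s}$ and $\mathbf{Z}_{0,1}$, so I compute its mean and covariance and check they coincide with the completed-square posterior moments. Since Term A involves $\mathbf{A}_s$ and Term B involves $\mathbf{I}_{n_1}-\mathbf{A}_s\mathbf{A}_s^T$, the cross terms vanish automatically from $\mathbf{A}_s^T(\mathbf{I}_{n_1}-\mathbf{A}_s\mathbf{A}_s^T)=\mathbf{0}$, which is exactly where orthogonality (Assumption \ref{assumption:A}) enters. Part 2 follows the same template applied to $\mathbf{M}=(\mathbf{H}_2\mathbf{B}_2)^T$; the only structural difference is that the mean of each projected vector $\tilde{\mathbf{y}}_l$ is now confined to the $q_2$-dimensional column space of $\mathbf{H}_2$, so the weighting $\tilde{\bm{\Sigma}}_l^{-1}$ no longer cancels and the signal part of $\hat{\mathbf{B}}_2$ retains the generalized-least-squares form $(\mathbf{H}_2^T\tilde{\bm{\Sigma}}_l^{-1}\mathbf{H}_2)^{-1}\mathbf{H}_2^T\tilde{\bm{\Sigma}}_l^{-1}$, with a correspondingly modified covariance $(\mathbf{H}_2^T\tilde{\bm{\Sigma}}_l^{-1}\mathbf{H}_2)^{-1}$ for the columns of $\tilde{\mathbf{B}}_{2,0,s}$.

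The main obstacle I anticipate is the matrix-variate covariance bookkeeping in the verification step: showing that the affine combination of $\tilde{\mathbf{B}}_{1,0,s}$ and $\mathbf{Z}_{0,1}$ reproduces the \emph{entire} posterior covariance, and not merely its block-diagonal part, requires carefully tracking the row--column (Kronecker) structure of the two independent noise sources and invoking $\mathbf{A}_s\mathbf{A}_s^T+\mathbf{A}_c\mathbf{A}_c^T=\mathbf{I}_{n_1}$ at several points to collapse the $\mathbf{A}_c$-dependent terms. Keeping the $l$-dependent heteroscedastic weights $\tilde{\bm{\Sigma}}_l$ correctly aligned with the projection directions $\mathbf{a}_l$ while performing these collapses is the delicate part of the argument.
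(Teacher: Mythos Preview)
Your proposal is correct and uses the same essential ingredients as the paper: the decomposed marginal likelihood (\ref{equ:marginal_lik}), the resolution of the identity $\mathbf{A}_s\mathbf{A}_s^T+\mathbf{A}_c\mathbf{A}_c^T=\mathbf{I}_{n_1}$ to separate signal from complement directions, and the back-projection $(\mathbf{H}_1^T\mathbf{H}_1)^{-1}\mathbf{H}_1^T$ into coefficient space. The organizational difference is only in the verification step: the paper routes the argument through an auxiliary lemma on the transformed coefficients $\tilde{\mathbf{B}}_1=\mathbf{B}_1^T\mathbf{H}_1^T\mathbf{A}_F$, whose posterior factorizes column-by-column, and then checks at the density level that the back-transform reproduces $p(\mathbf{B}_1\mid\mathbf{Y},\bm\Theta_{-B_1})$; you instead propose to complete the square directly in $\mathbf{B}_1$ and match first and second moments. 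Since both distributions are Gaussian, the two verifications are equivalent; your moment-matching route has the minor advantage of avoiding the paper's augmentation device (the transformed $\tilde{\mathbf{B}}_1$ is $n_2\times n_1$-dimensional while $\mathbf{B}_1$ has only $q_1 n_2$ free parameters, which the paper handles by embedding $\mathbf{B}_1$ into a larger $\mathbf{B}_1^{aug}$), at the cost of the Kronecker bookkeeping you already flagged.
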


When both the row regression coefficients and column regression coefficients are in the model, we found that  $\mathbf M_1=\mathbf H_1 \mathbf B_1$ and  $\mathbf M_2=(\mathbf H_2 \mathbf B_2)^T$ are not identifiable,  if we sample $\mathbf B_1$ and $\mathbf B_2$ from the full conditional distribution. To avoid this problem, we first marginalizing out $\mathbf B_2$ and  $\mathbf Z$ to sample $\mathbf B_1$ and then we condition $\mathbf B_1$ to sample $\mathbf B_2$. 

\begin{theorem}
\label{thm:both_mean}
Assume $\mathbf M=\mathbf H_1 \mathbf B_1 + (\mathbf H_2 \mathbf B_2)^T$ and let the objective prior $\pi(\mathbf B_1, \mathbf B_2)\propto 1$ for the regression parameters $\mathbf B_1$ and $\mathbf B_2$. 
\begin{enumerate}
\item After marginalizing out $\mathbf Z$ and $\mathbf B_2$, the marginal posterior sample of $\mathbf B_1$ from 	$p(\mathbf {  B}_1 \mid \mathbf Y, \bm \Theta_{-B_1,-B_2})$ can be obtained by 
\begin{equation}
\mathbf B_1= \mathbf {\hat B}_1 +  (\mathbf H^T_1 \mathbf H_1)^{-1} \mathbf H^T_1 \mathbf A_s \mathbf {\tilde B}^T_{1,Q}+ \sigma_0 (\mathbf H^T_1 \mathbf H_1)^{-1} \mathbf H^T_1 (\mathbf I_{n_1}- \mathbf A_s \mathbf A^{T}_s ) \mathbf Z_{0,1} \mathbf P_0,
\label{equ:sample_B_1_both}
\end{equation}
where $\mathbf {\hat B}_1= (\mathbf H^T_1 \mathbf H_1)^{-1} \mathbf H^T_1 \mathbf Y$, $ \mathbf {\tilde B}_{1,Q}$ is an $n_2 \times d$ matrix with the $lth$ column independently sampled from $\mathcal N(\mathbf 0, \mathbf Q_{1,l})$, with $\mathbf Q_{1,l}=\mathbf P_l \bm {\tilde \Sigma}^{-1}_l \mathbf P_l$ where $\mathbf P_l= \mathbf I_{n_2} - \mathbf{H}_{2}(\mathbf{H}^T_{2} \bm {\tilde \Sigma}_l^{-1} \mathbf{H}_{2})^{-1}\mathbf{H}^T_{2}  \bm {\tilde \Sigma}_l^{-1}$ for $l=1,...,d$. $\mathbf Z_{0,1}$ is an $n_1 \times n_2$ matrix with each entry independently sampled from standard normal distribution and $\mathbf P_0=(\mathbf{I}_{n_2} - \mathbf{H}_2(\mathbf{H}^T_2\mathbf{H}_2)^{-1}\mathbf{H}^T_2)$. 

\item  Posterior samples of $\mathbf {  B}_2$ from $p(\mathbf {  B}_2 \mid \mathbf Y_{B_1}, \bm \Theta_{-B_2})$ can be obtained through equation (\ref{equ:sample_B_2}) by replacing $\mathbf Y$ by $ \mathbf Y- \mathbf H_1 \mathbf B_1$.

\end{enumerate}

\label{thm:sample_B12} 
\end{theorem}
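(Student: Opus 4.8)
The plan is to handle the two parts by very different means: part~2 is essentially a corollary of the already-established column-coefficient result, whereas part~1 is where the marginalization must actually be carried out. For part~2 I would first note that fixing $\mathbf{B}_1$ reduces the mean to $\mathbf{M} = (\mathbf{H}_2\mathbf{B}_2)^T + \mathbf{H}_1\mathbf{B}_1$, so that $\mathbf{Y} - \mathbf{H}_1\mathbf{B}_1 = (\mathbf{H}_2\mathbf{B}_2)^T + \mathbf{A}_s\mathbf{Z} + \bm{\epsilon}$ is exactly the model of part~2 of Theorem~\ref{thm:marginal_post_B_1_B_2} applied to the residual data $\mathbf{Y} - \mathbf{H}_1\mathbf{B}_1$. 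Since the objective prior factorizes, conditioning on $\mathbf{B}_1$ leaves $\pi(\mathbf{B}_2) \propto 1$, and the sampling formula~\eqref{equ:sample_B_2} transfers verbatim with $\mathbf{Y}$ replaced by $\mathbf{Y} - \mathbf{H}_1\mathbf{B}_1$; the only thing to check is that no $\mathbf{B}_1$-dependence survives outside this residual, which is immediate.

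For part~1 I would begin from the orthogonal decomposition~\eqref{equ:marginal_lik}: after integrating out $\mathbf{Z}$, the likelihood is a product over the $n_1$ directions $\mathbf{a}_l$ of $\mathbf{A}_F$, the $l$-th factor being Gaussian in $\tilde{\mathbf{y}}_l = \mathbf{Y}^T\mathbf{a}_l - \mathbf{B}_1^T\mathbf{H}_1^T\mathbf{a}_l - \mathbf{H}_2\mathbf{B}_2\mathbf{a}_l$ with covariance $\bm{\tilde\Sigma}_l$ for $l \le d$ and $\sigma_0^2\mathbf{I}_{n_2}$ for $l > d$. The device that makes the $\mathbf{B}_2$-marginalization tractable is to reparametrize $\mathbf{B}_2$ by its projections $\bm{\beta}_l = \mathbf{B}_2\mathbf{a}_l$: because $\mathbf{A}_F$ is orthogonal under Assumption~\ref{assumption:A}, the map $\mathbf{B}_2 \mapsto \mathbf{B}_2\mathbf{A}_F$ is a unit-Jacobian bijection sending the flat prior to a flat prior, and the $l$-th likelihood factor depends on $\mathbf{B}_2$ only through $\bm{\beta}_l$. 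Integrating out $\mathbf{B}_2$ therefore splits into $n_1$ independent flat-prior Gaussian integrals, one per direction, each of which profiles out the column space of $\mathbf{H}_2$. For $l \le d$ this profiling is taken in the $\bm{\tilde\Sigma}_l^{-1}$ inner product and produces the oblique projector $\mathbf{P}_l$ together with the per-direction dispersion $\mathbf{Q}_{1,l}$; for $l > d$ it is the ordinary projection and produces the Euclidean projector $\mathbf{P}_0$.

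What survives is a Gaussian marginal likelihood in $\mathbf{B}_1$ alone, and the remaining task is to read off its first two moments and recognize them in the representation~\eqref{equ:sample_B_1_both}. I would complete the square in $\mathbf{B}_1$ to obtain the posterior mean, aiming to show it collapses to $\hat{\mathbf{B}}_1 = (\mathbf{H}_1^T\mathbf{H}_1)^{-1}\mathbf{H}_1^T\mathbf{Y}$, and then verify that adding the two zero-mean noise terms of~\eqref{equ:sample_B_1_both}---the factor-direction term built from $\tilde{\mathbf{B}}_{1,Q}$ with per-column dispersion $\mathbf{Q}_{1,l}$, and the complement-direction term $\mathbf{Z}_{0,1}\mathbf{P}_0$---reproduces the posterior covariance. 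The hard part will be this final reassembly: the factor and complement directions are profiled by genuinely different operators (the oblique $\mathbf{P}_l$ acting in the $\bm{\tilde\Sigma}_l^{-1}$ metric versus the orthogonal $\mathbf{P}_0$), and one must show that, once repacked through the identity $\mathbf{A}_s\mathbf{A}_s^T + \mathbf{A}_c\mathbf{A}_c^T = \mathbf{I}_{n_1}$, their contributions recombine into the single premultiplied form $(\mathbf{H}_1^T\mathbf{H}_1)^{-1}\mathbf{H}_1^T(\cdot)$ of~\eqref{equ:sample_B_1_both} with exactly the stated covariance, and not merely a block-diagonal surrogate for it.
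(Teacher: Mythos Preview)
Your proposal is correct and tracks the paper's proof closely: part~2 is handled exactly as you describe, and for part~1 the paper also integrates out $\mathbf B_2$ direction-by-direction via the reparametrization $\tilde{\mathbf b}_{2,l}=\mathbf B_2\mathbf a_l$ (your $\bm\beta_l$), arriving at the same per-direction quadratic form $\mathbf g_l^T\mathbf Q_{1,l}\mathbf g_l$ with $\mathbf g_l=(\mathbf Y-\mathbf H_1\mathbf B_1)^T\mathbf a_l$.

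The only real difference is in the step you flag as the ``hard part.'' Instead of completing the square in the $q_1\times n_2$ matrix $\mathbf B_1$ and then verifying that the covariance of the right-hand side of~\eqref{equ:sample_B_1_both} matches, the paper introduces an over-parametrization $\tilde{\mathbf B}_1=\mathbf B_1^T\mathbf H_1^T\mathbf A_F$, an $n_2\times n_1$ matrix whose columns are precisely the quantities $\mathbf B_1^T\mathbf H_1^T\mathbf a_l$ appearing in $\mathbf g_l$. In this enlarged space the posterior factorizes column-by-column into the degenerate Gaussians $\mathcal N(\tilde{\mathbf y}_l,\mathbf Q_{1,l})$, so one simply samples each column (splitting into $\mathbf A_s$- and $\mathbf A_c$-blocks) and then pushes back through the linear map $\mathbf B_1=(\mathbf H_1^T\mathbf H_1)^{-1}\mathbf H_1^T\mathbf A_F\tilde{\mathbf B}_1^T$. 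This sidesteps your covariance reassembly entirely: the identity $\mathbf A_s\mathbf A_s^T+\mathbf A_c\mathbf A_c^T=\mathbf I_{n_1}$ is used only to rewrite $\mathbf A_c\tilde{\mathbf B}_{1,0,c}^T$ as $\sigma_0(\mathbf I_{n_1}-\mathbf A_s\mathbf A_s^T)\mathbf Z_{0,1}\mathbf P_0$, not to recombine two covariance blocks. Your direct verification would also work, but the augmentation device is what makes the paper's argument short.
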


In Theorem \ref{thm:marginal_post_B_1_B_2} and Theorem \ref{thm:sample_B12}, the marginal posterior distribution of the regression coefficients  depends on  the $n_1\times d$ factor loading matrix, but not the complement of the factor loading matrix ($\mathbf A_c$). Since we do not need to  compute $\mathbf A_c$,  the most computationally intensive terms are those containing the  covariance matrix $\bm \Sigma_l$ and its inverse. Fortunately, each term can be computed with linear complexity with respect to $n_2$ instead of $n_2^3$ when the Mat{\'e}rn covariance is used, discussed in Section \ref{subsec:KF}.




\subsection{Spatial latent factor loading matrix}


This section discusses a model of the latent factor loading matrix $\mathbf A_s$ that satisfies the orthogonal constraint in (\ref{equ:A}).  As  output values are marginally correlated at  two inputs $\mathbf s_a$ and $\mathbf s_b$, 
a natural choice is to  let $\mathbf A_s$ be the eigenvectors corresponding to the largest $d$ eigenvalues in the eigendecomposition  of the correlation matrix $\mathbf R_s$,  where the $(i,j)$th entry is specified by a kernel function $K_s(\mathbf s_i, \mathbf s_j)$, for $1\leq i,j\leq n_1$. We give a few examples of models that can be written as special cases of the GOLF model when the $\mathbf A_s$ is specified as eigenvectors of $\mathbf R_s$. For simplicity, we assume the mean is zero.  The first and second classes of models are the GP models with  separable covariance functions of input with two dimensions and three dimensions, respectively.	
	
	\begin{example}[Spatial model with separable covariance]
	\label{eg:separable_spatial_model}
	Consider a spatial model of $\mathbf Y$ at  a regular $n_1 \times n_2$ lattice, where the $(i,j)$th input is $(s_{i},x_{j})$ with $s_i$ and $x_j$ denoting the $i$th latitude coordinate and $j$th longitude coordinate, respectively. Assume the covariance of the spatial process is separable, meaning that $\mathbf Y \sim \mathcal N(\mathbf 0, \sigma^2 \mathbf R_s \otimes \mathbf R_x +\sigma^2_0\mathbf I_{n_1n_2})$, where the  $(l_1,m_1)$ term of $\mathbf R_s$ is parameterized by the kernel function $K_s(s_{l_1},s_{m_1} )$ and the $(l_2,m_2)$ term of $\mathbf R_x$ is $K_x(x_{l_2},x_{m_2} )$ for $1\leq l_1,m_1\leq n_1$ and $1\leq l_2,m_2\leq n_2$. Let $\mathbf R_s=\mathbf U_s \bm \Lambda_s \mathbf U_s^T$, where $\mathbf U_s$   is a matrix of eigenvectors and  $\bm \Lambda_s$ is a diagonal matrix of eigenvalues of $\mathbf R_s$ with the $l$th diagonal term  $ \lambda_{l}$.
	  The density  of this spatial model is equivalent to model (\ref{equ:GOLF_model}) with $\mathbf A_s=\mathbf U_s$, $\bm \Sigma_l=\sigma^2 \lambda_{l} \mathbf R_x$ and $d=n_1$. 
	\end{example}

	\begin{example}[Spatio-temporal model with separable covariance]
	\label{eg:separable_spatio_temporal_model}
Consider a spatio-temporal model of $\mathbf Y$ at $n_{1,1} \times n_{1,2} \times n_2$ lattice, where the $(i,j,k)$th input is $(s_{1,i},s_{2,j},x_{k} )$, with $s_{1,i}$ and $s_{2,j}$ denoting the $i$th latitude coordinate and $j$th longitude coordinate, respectively, and $x_k$ denoting the $k$th time point. Let $n_1=n_{1,1}\times n_{1,2}$. Assume the covariance of the spatio-temporal process is separable, meaning that $\mathbf Y \sim \mathcal N(0, \sigma^2 \mathbf R_{s_1} \otimes \mathbf R_{s_2}\otimes \mathbf R_x +\sigma^2_0\mathbf I_{n_1\times n_2})$ with the $(l_i, m_i)$th term of $\mathbf R_{s_i}$ parameterized by the kernel function $K_s(s_{l_i}, s_{m_i})$ with $1\leq l_i,m_i\leq n_{1,i}$ for $i=1,2$, and the  $(l_3, m_3)$th term of $\mathbf R_{x}$ being $K_x(x_{l_3}, s_{m_3})$ with $1\leq l_3,m_3\leq n_2$. Let $\mathbf R_{s_i}=\mathbf U_i \bm \Lambda_i \mathbf U^T_i$ where $\mathbf U_i$  is a matrix of eigenvectors and  $\bm \Lambda_i$ is a diagonal matrix of eigenvalues $\lambda_{l_i}$ for $1\leq l_i\leq n_{1,i}$ and $i=1,2$. 	  The density  of this  spatio-temporal model is equivalent  to model (\ref{equ:GOLF_model}) with $\mathbf A_s=\mathbf U_1 \otimes \mathbf U_2$, $\bm \Sigma_l=\sigma^2 \lambda_{l_1} \lambda_{l_2} \mathbf R_x$ with $1\leq l_i,m_i\leq n_{1,i}$ for $i=1,2$, $l=l_1+(l_2-1)n_{1,2}$ and $d=n_1$.

	\end{example}



 The separable covariance is widely used in emulating and calibrating computationally expensive computer models with scalar output \citep{sacks1989design} and vector output \citep{conti2010bayesian,paulo2012calibration}, whereas the isotropic covariance, {{i.e.,}} the covariance as a function of Euclidean distance of inputs, is used more often in modeling spatially correlated data \citep{gelfand2010handbook}. Some  anisotropic kernels, such as the geometrically {{anisotropic}} kernel, were studied in \cite{zimmerman1993another} for modeling spatially correlated observations.  Note that  the covariance of GOLF processes  in (\ref{equ:GOLF_model}) is not separable in general, as the variance and kernel parameters of each factor process $z_l(\cdot)$ can be different. Different {{kernel}} parameters make the model more flexible, as the factor processes corresponding to large eigenvalues are often found to be smoother than the ones corresponding to small eigenvalues. Separable covariance may be restrictive in this regard as  factor processes are assumed to have the same kernel and parameters. 
 
 

 



Computing the likelihood of GP with separable covariance on a complete  $n\times n$ lattice data generally takes $O(N^{3/2})$ operations through eigen-decomposition of sub covariance matrices. This work generalizes this approach to nonseparable covariance for  both complete and incomplete lattice observations.  {One can further reduce the computational complexity by selecting $d$ eigenvectors corresponding to the $d$ largest eigenvectors from the eigendecomposition of the correlation matrix $\mathbf R_s$. The proportion of summation of the $d$ largest eigenvalues over the summation of total eigenvalues  shall be chosen as large as possible to allow the model to explain the most variability of the signal \citep{higdon2008computer}. We  found that using more factors than the truth typically will not incur {{a}} large reduction of predictive accuracy, whereas using {{a}} much smaller number of factors than the truth will cause a large predictive error  (Example \ref{eg:compare_diff_factors} in simulated studies). Thus one should be cautious {{about}} using {{a}} very small number of factors.}



	\subsection{Kernel functions}
	\label{subsec:kernel}
			We first discuss the kernel function for the factor process $Z_l(\cdot)$, $l=1,...,d$. 
			We assume a product kernel between the inputs \citep{sacks1989design}, i.e. for any input $\mathbf x_a=(x_{a1},...,x_{ap_2})$ and $\mathbf x_b=(x_{b1},...,x_{bp_2})$,  $K_{l}(\mathbf x_a, \mathbf x_b  )=\prod^{p_2}_{i=1}K_{l,i}(|x_{ai}-x_{bi}|)$, where $K_{l,i}(\cdot)$ is a kernel of the $l$th coordinate of the input for $l=1,...,d$ and $i=1,...,p_2$. 
			
			We focus on Mat{\'e}rn covariance \citep{handcock1993bayesian} as kernel function $K_{l,i}(\cdot)$ in this work. Each kernel contains   positive roughness parameter $\nu_{l,i}$ and a nonnegative range parameter $\gamma_{l,i}$ for $l=1,...,d$ and $i=1,...,p_2$.  
The roughness parameter of the Mat{\'e}rn kernel controls the smoothness of the process. When  $\nu_{l,i}=\frac{1}{2}$, the Mat{\'e}rn kernel becomes the exponential kernel:  $K_{l,i}(|x_{ai}-x_{bi}|)=\exp(-|x_{ai}-x_{bi}|/\gamma_{l,i})$, and when $\nu_{l,i}\to \infty$, the  Mat{\'e}rn kernel becomes  the Gaussian kernel:  $K_{l,i}(|x_{ai}-x_{bi}|)=\exp(-|x_{ai}-x_{bi}|^2/(2\gamma^2_{l,i}))$. The half-integer Mat{\'e}rn kernel (i.e. $(2\nu_{l,i}+1)/2 \in \mathbb N$) has a closed form expression.
	When  $\nu_{l,i}=5/2$, for example, the Mat{\'e}rn kernel is
 \begin{equation}
K_{l,i}(|x_{ai}-x_{bi}|)=\left(1+\frac{\sqrt{5}|x_{ai}-x_{bi}|}{\gamma_{l,i}}+\frac{5|x_{ai}-x_{bi}|^2}{3\gamma_{l,i}^2}\right)\exp\left(-\frac{\sqrt{5}|x_{ai}-x_{bi}|}{\gamma_{l,i}}\right) \,,
\label{equ:matern_5_2}
\end{equation}
for $l=1,...,d$ and $i=1,...,p_2$. 

	In constructing GOLF processes, we decompose the density of the GP model with multi-dimensional input into a product of the orthogonal components with {{lower-dimensional}} input. This is because the likelihood and the predictive distribution of a GP model with a half-integer Mat{\'e}rn covariance can be computed through linear operations with respect to the sample size by the continuous-time Kalman filter \citep{sarkka2012infinite} when $p_2=1$. 
	The computational advantage will be discussed in Section \ref{subsec:KF}. 
	
	
	For the factor loading matrix, we let $\mathbf A_s$ be the first $d$ eigenvectors of $\mathbf R_s$. The kernel functions for $\mathbf R_s$ can be chosen similarly as the kernel for the latent factor processes. Without the loss of generality, we assume $\mathbf R_s$ is parameterized by a product kernel  with the range parameters $\bm \gamma_0$, and the Mat{\'e}rn kernel being used for each coordinate of $\mathbf s$.

	\section{Posterior sampling for GOLF processes}
	\label{sec:post_sample}
	
	\subsection{A Markov chain Monte Carlo approach}
	\label{subsec:MCMC}
	In many applications, the observations contain missing values. Denote  $\mathbf Y^o_v$ and $\mathbf Y^u_v$ the vectors of  observed data and missing data in matrix $\mathbf Y$ with size $N_o$ and $N_u$, respectively. 
	Directly computing the likelihood includes calculating the inverse and determinant of an $N_o\times N_o$ covariance matrix, which has computational operations $O(N^3_o)$ in general, making it infeasible for large  number of observations. 	Here we discuss a computationally feasible way for the GOLF model when observations are from  incomplete matrices. 
	
		
		
	    We start with a set of initial values at the locations with missing observations. Denote  $\mathbf Y^{(t)}_v=\mbox{vec}(\mathbf Y^{(t)})=[(\mathbf Y^o_v)^T, (\mathbf Y^{u,{(t)}}_v)^T]^T$  an $N$-vector, where $\mathbf Y^o_v$   and $\mathbf Y^{u,{(t)}}_v$ are vectors of observations and samples at the missing locations in the  $t$th iteration, $t=1,...,T$. First, we use a Metropolis algorithm to sample  $\bm \Theta^{(t+1)}$ from the marginal posterior distribution $p(\bm \Theta \mid \mathbf Y^{(t)})$, where the marginal density  is given in Equation (\ref{equ:marginal_lik}). 
		In the second step, we sample $\mathbf Z^{(t+1)}_l$ from  $p(\mathbf Z_l^{(t+1)} \mid \mathbf Y^{(t)},\bm \Theta^{(t+1)} )$ by Equation (\ref{equ:Z_t_l_post}) for $l=1,...,d$, and then we generate $\mathbf Y^{(t+1)}=\mathbf A^{(t+1)} \mathbf Z^{(t+1)} + \mathbf E^{(t+1)}$, where $\mathbf E^{(t+1)}$ is an $n_1 \times n_2$ matrix of white noise with variance $\sigma^{(t+1)}_0$ and $\mathbf A^{(t+1)}$ is a $n_1\times d$ matrix of the $d$ eigenvectors corresponding to the $d$ largest eigenvalues from the eigendecomposition of the correlation matrix $\mathbf R_s$ in the $(t+1)th$ iteration. We can obtain $\mathbf Y^{u,(t+1)}_v$ by the last $N_u$ terms in $\mathbf Y^{(t+1)}_v$, for $t=1,...,T$. Note that the observed data $\mathbf Y^o_v$  is never changed.


	 
	 For computational reasons, we define the nugget parameter in each kernel (i.e. the inverse of the signal variance to the noise variance ratio  parameter) $\eta_l=\sigma^2_0/\sigma^2_l$ for $l=1,2,...,d$, and the inverse range parameter $\beta_{l,i}=1/\gamma_{l,i}$, where $i=1,...,p_1$ when $l=0$, and $i=1,...,p_2$ when $l\geq 1$. The transformed parameters $\bm {\tilde \Theta}$ contain the mean parameters $\mathbf B$, inverse range parameters $\bm \beta=(\bm \beta_0,...,\bm \beta_d)$, nugget parameters $\bm \eta=(\eta_1,...,\eta_d)$ of the factor processes and the variance of the noise $\sigma^2_0$. 
	 
	 
	 For  mean and noise variance parameters, we use an objective prior $\pi^{R}(\mathbf B, \sigma^2_0)\propto 1/\sigma^2_0$.  We assume the jointly robust (JR) prior for the kernel parameters: $\pi^{JR}(\bm \beta_{l},\eta_l)\propto( \sum^{p_2}_{i=1}(c_{l,2}\beta_{l,i}+\eta_l) )^{c_{l,1}}\exp(-c_{l,3}\sum^{p_2}_{i=1}(c_{l,1}\beta_{l,i}+\eta_l))$  with default parameters $c_{l,1}=1/2-p_2$, $c_{l,2}=1/2$, and $c_{l,3}$ being the average distance between the $l$th coordinate of two inputs for $l=1,...,d$ \citep{gu2018jointly}. Note here  $c_{l,1}=1/2-p_2$ is the default parameter for the MCMC algorithm, whereas this prior parameter is different if one maximizes the marginal posterior distribution. The jointly robust prior is equivalent to the inverse gamma prior when the input dimension is one without a nugget parameter. The inverse gamma prior is assumed for each coordinate of $\bm \beta_0$ with shape and rate parameter being $-1/2$ and $1$, respectively. The JR prior  can alleviate the potential numerical problem when the estimated range and nugget parameters are close to the boundary of the parameter space, as the density of the JR prior is close to zero at these scenarios. As the sample size is large, the bias inserted from the prior is small.

	 
  \begin{algorithm}[t]
\caption{MCMC algorithm when the kernel parameters are different}
\raggedright
(1) For $l=1,...,d$,  sample $(\bm \beta^{(t+1)}_{l}, \eta^{(t+1)}_{l}) $ from $p( \bm \beta_{l}, \eta_{l}\mid  \mathbf {\tilde y}^{(t)}_{l} )$. 

(2) Sample  $\bm \beta^{(t)}_{0}$ from
$p( \bm \beta^{(t)}_{0} \mid  \mathbf {Y}^{(t)}, \bm \beta^{(t+1)}_{1:d}, \bm \eta^{(t+1)}_{1:d}, \mathbf B^{(t)})$. 



(3) Sample   $ \sigma^{(t+1)}_{0} $ from  $p(  \sigma^{(t+1)}_0 \mid  \mathbf {Y}^{(t)}, \bm \beta^{(t+1)},\bm \eta^{(t+1)},\mathbf B^{(t)} )$. 



 (4) Sample $\mathbf B^{(t+1)}$ from $p( \mathbf B^{(t+1)} \mid  \mathbf {Y}^{(t)},\bm \beta^{(t+1)}, \bm \eta^{(t+1)})$. Update the mean matrix $\mathbf M^{(t+1)}$ and the projected observations $\mathbf {\tilde  y}^{(t)}_l=(\mathbf Y-\mathbf M^{(t+1)})^T \mathbf a_l $.

(5) For $l=1,...,d$, sample $\mathbf Z^{(t+1)}_l$ from $p(\mathbf Z_l^{(t+1)} \mid \mathbf {\tilde y}^{(t)}_{l},\bm \beta^{(t+1)}, \bm \eta^{(t+1)} )$ by Corollary \ref{cor:ind_post} and sample $\mathbf Y^{(t+1)}$ by model (\ref{equ:GOLF_model}). Update $\mathbf Y^{u,(t+1)}_v$ by the last $N_u$ terms in $\mathbf Y^{(t+1)}_v$ and let $\tilde {\bm  y}^{(t+1)}_l=(\mathbf Y^{(t+1)}-\mathbf M^{(t+1)})^T \mathbf a_l$.  

(6) Update the posterior  $p( \bm \beta^{(t+1)}_{l}, \eta^{(t+1)}_{l}\mid  \mathbf {\tilde y}^{(t+1)}_{l} )$ and go back to (1) when $t<T$.
\label{algorithm:1_dim}
\end{algorithm}

		 

	 The MCMC algorithm of the GOLF model is given in Algorithm \ref{algorithm:1_dim}. 	In step (1) to step (4) of Algorithm \ref{algorithm:1_dim}, we marginalize out the factor processes to compute the posterior distribution of the parameters. This is critically important as we found severe identifiability problems between the mean matrix $\mathbf M$ and $\mathbf A \mathbf Z$ if the parameters are sampled from the full conditional distributions. Moreover, after marginalizing out the factor processes, the covariance matrix of the distribution  $ \mathcal{PN}(\tilde {\mathbf y}_l; \mathbf 0, \bm {\tilde \Sigma}_l )$  in  (\ref{equ:marginal_lik})  contains a nugget term, which makes the computation stable. 
	

	The Algorithm \ref{algorithm:1_dim} can be easily modified for different scenarios. When the factor processes have the same covariance matrix, we can combine step (1) and step (2) to sample the shared kernel and nugget parameter. Step (4) may be skipped if one has zero-mean or modified if one has the shared regression coefficients in the model.  
	
	
	
	Denote $\bm \Sigma_l= \mathbf L_l  \mathbf L^T_l$ where $\mathbf L_l$ is a lower triangular matrix in the Cholesky decomposition of $\bm \Sigma_l$. We need to efficiently compute  the terms $| \tilde {\bm \Sigma}_l|$,  $\mathbf L^{-1}_l\mathbf v_{l}$,   $\mathbf L_l \mathbf v_{l}$ for any real-valued vector $\mathbf v_{l}:=(v_{l,1},...,v_{l,n_2})^T$ and sample $(\mathbf Z^{(t+1)}_l)^T$ from  $p((\mathbf Z_l^{(t+1)})^T \mid \mathbf {\tilde y}^{(t)}_{l},\bm \beta^{(t+1)}, \bm \eta^{(t+1)} )$ for $l=1,...,d$. Direct computation of the Cholesky decomposition of $\bm \Sigma_l$ requires $O(n^3_2)$ computational operations for each $l=1,...,d$. Luckily, for Mat{\'e}rn covariance with a half-integer roughness parameter and one-dimensional input, computing  any of these terms  only takes $O(n_2)$ operations without approximation.
	
	
	

%
%
%

	
	\subsection{Continuous-time Kalman filter}
	\label{subsec:KF}
	
	We briefly review the continuous-time Kalman filter algorithm  and the connection between the Gaussian Markov random field and GP with Mat{\'e}rn covariance. The spectral density of the Mat{\'e}rn covariance with the half-integer roughness parameter was shown to be the same as a continuous-time autoregressive process defined as a stochastic differential equation (SDE)  \citep{whittle1963stochastic}. Suppose the observations are $\tilde {\mathbf y}_l=(\tilde { y}_{1,1},...,\tilde { y}_{l,n_2})^T$. For $j=1,...,n_2$ and $l=1,...,d$, starting from the initial state $\bm \theta_l(s_{0}) \sim  {\MN}(\mathbf 0, \mathbf W_l(s_{0}) )$, the  solution of the SDE follows \citep{hartikainen2010kalman}: 
    \begin{align}
    \label{equ:ctdlm}
    \begin{split}
    \tilde y_{l,j}&= \mathbf F\bm \theta_l(x_{j}) + \epsilon_{l,j}, \\
    \bm \theta_l(x_{j})&=\mathbf G_l(x_{j-1}) \bm \theta_l(x_{j-1}) +\mathbf w_l(x_j),
    \end{split}
    \end{align}
    where $\mathbf w_l(x_j) \sim \mathcal N(\mathbf 0, \mathbf W_l(s_j))$, $\epsilon_{l,j}$ is an independent white noise for $l=1,...,d$ and $j=1,...,n_2$. For the Mat{\'e}rn kernel with a half-integer roughness parameter, the terms $\mathbf G_l(x_j) $, $\mathbf W_l(x_j)$, and $\mathbf F$ can be expressed explicitly as a function of $|x_j-x_{j-1}|$ and the range parameter of the kernel.  Thus, the forward filtering and backward smoothing algorithm (FFBS) can be applied to compute the likelihood and to make predictions with linear computational  operations of the number of observations (see e.g. Chapter 4 in \cite{West1997} and Chapter 2 in \cite{petris2009dynamic} for the FFBS algorithm). 
    The likelihood function and predictive distribution of a GP model having the Mat{\'e}rn kernel with roughness parameters being $1/2$ and $5/2$ through the FFBS algorithm are implemented in {\sf FastGaSP} package available at CRAN.  The computational complexity of the FFBS algorithm is only $O(n_2)$, with $n_2$ being the number of observations.
    

    We briefly discuss how to apply the FFBS algorithm to compute  terms $\mathbf L^{-1}_l \mathbf { \tilde y}_l$ and $| \tilde {\bm \Sigma}_l|$ needed in Algorithm \ref{algorithm:1_dim},  for $l=1,...,d$. In the FFBS algorithm, the one-step-ahead predictive distribution  $ (\tilde { y}_{l,j} \mid  \tilde { y}_{l,1:{j-1}})\sim \mathcal N(f_{l}(x_j), Q_{l}(x_j))$ can be derived  iteratively for $j=1,...,n_2$ and for each $l=1,...,d$. Closed form expressions of $f_{l}(x_j)$ and $Q_{l}(x_j)$ for the Mat{\'e}rn covariance in (\ref{equ:matern_5_2}) are given in \cite{gu2020fast}. For $l=1,...,d$, we have  following expressions for the computational expensive terms in the likelihood function:
    \[| \tilde {\bm \Sigma}_l|=\prod^{n_2}_{j=1} {Q_l(x_j)}, \quad \mbox{and} \quad \mathbf L^{-1}_l \mathbf { \tilde y}_l=\left(\frac{\tilde y_{l,1}- f_{l,1}}{\sqrt{Q_l(x_1)}},..., \frac{\tilde y_{l,1}- f_{l,n_2}}{\sqrt{Q_l(x_{n_2})}}\right)^T.\]
    
    We use the backward sampling algorithm \citep{petris2009dynamic} to sample $\bm \theta_{l,n_2}$ from $p(\bm \theta_{l,n_2}\mid \mathbf {\tilde y}^{(t)}_{l},\bm \beta^{(t+1)}, \bm \eta^{(t+1)})$ and $\bm \theta_{l,j}$ from $p(\bm \theta_{l,j}\mid \mathbf {\tilde y}^{(t)}_{l},  \theta_{l,j+1},\bm \beta^{(t+1)}, \bm \eta^{(t+1)})$ sequentially, for $j=n_2-1,...,1$. Posterior samples    $\mathbf Z_l^T=\left(\mathbf z_l(x_1),...,\mathbf z_l(x_{n_2})\right)^T$ can be obtained by the first entry of the posterior sample $\bm \theta_{l,j}$ from the backward sampling algorithm, for $j=1,...,n_2$. Furthermore, for any $n_2\times 1$ real vector $\mathbf v_{l}$, we have $\mathbf L_l \mathbf v_{l} =(f_{l,1}+\sqrt{Q_l(x_1)} v_{l,1},...,f_{l,n_2}+\sqrt{Q_l(x_{n_2})}v_{l,n_2})^T$ for $l=1,...,d$ and $j=1,...,n_2$. 
	\subsection{Computational complexity} 
\label{subsec:complexity}
Denote $p=p_1\times p_2$ the total dimension of the inputs $(\mathbf s, \mathbf x)$ and suppose the observational matrix is $n_1\times n_2$ with irregular missing values, where $n_1\leq n_2$ and $N=n_1n_2$. We discuss the computational complexity for three scenarios with $p=2$ (e.g. spatially correlated data), $p=3$ (e.g. spatio-temporal data) and $p>3$ (e.g. functional data). 


When $p=2$, the computational complexity of the GOLF model with the half-integer Mat{\'e}rn kernel is $O(Nd)$. First,  we compute the first $d$ eigenvectors of $\bm \Sigma_s$ to obtain $\mathbf A_s$, which has  $O(n^2_1d)$ operations (see e.g. Chapter 4.5.5 in \cite{Zhaojun2000eigen}). Second, computing the marginal likelihood and sampling the factor processes by the FFBS algorithm only cost  $O(n_2d)$ operations. The largest computational order is from the matrix multiplication $\tilde {\mathbf Y}^T=(\mathbf Y-\mathbf M)^T \mathbf A_s$, which is at the order of  $O(Nd)$. 

For $p=3$, we let $\mathbf A_{s}=\mathbf A_{s_1}\otimes \mathbf A_{s_2}$, where $\mathbf A_{s_1}$ and $\mathbf A_{s_2}$ are the first $d_1$ and $d_2$ eigenvectors of $n_{1,1}\times n_{1,1}$ matrix $\bm \Sigma_{s_1}$ and $n_{1,2}\times n_{1,2}$ matrix  $\bm \Sigma_{s_2}$, respectively,  with $n_{1,1}\times n_{1,2}=n_{1}$ and  $\bm \Sigma_{s_1}\otimes \bm \Sigma_{s_2}=\bm \Sigma_{s}$. Without the loss of generality, assume $d_1\leq d_2$ and $n_1\leq n_2$. Let the total number of factor processes be $d=d_1d_2$. The computational order of the GOLF model with a half-integer Mat{\'e}rn covariance function is $O(n_1n_2d_{max})$ where $d_{max}$ is the maximum of $d_1$ and $d_2$ (noting this is  smaller than  $O(n_1n_2d)$). To see this,  computing the eigendecomposition of $\bm \Sigma_{s_1}$ and $\bm \Sigma_{s_2}$ requires $O(d_1 n^2_{1,1})$ and $O(d_2 n^2_{1,2})$ operations, respectively. Second, using the FFBS algorithm to compute the marginal likelihood and to  sample factor processes  costs  $O(dn_2)$ operations. At last, we do NOT directly compute $\mathbf Y^T\mathbf A_s$ as its computation operations are $O(Nd)$. Instead, we first write the observations as an $n_{2}\times n_{1,2}\times n_{1,1}$  array $\mathbf Y^T_{ar}$, where the $(i,j,k)$th entry being the outcome at $(s_{1,i},s_{2,j},x_{k} )$. Then we  do a 3-mode matrix product followed by a 2-mode matrix product  $\mathbf {\tilde Y}^T_{ar} \times_{3} \mathbf A_{s_1} \times_2 \mathbf A_{s_2} $ \citep{kolda2009tensor}, which has the computation operations $O(n_2 n_1 d_1)$ and $O(n_2 n_{1,2}d)$, respectively.  Finally we concatenate the second and third dimensions of $\mathbf {\tilde Y}^T_{ar}$ to obtain the $n_2\times d$ matrix $\mathbf {\tilde Y}^T$. 


For the case when $p>3$, there might be two scenarios. In the first scenario, the data are observed in an $n_{1,1}\times n_{1,2}\times ...\times n_{1,k} \times n_2 $ tensor  with irregular missing values, where $n_{1,1}\times n_{1,2}\times ...\times n_{1,k}=n_1$. In this scenario, the computation will be $N d_{max}$, where $d_{max}$ is the maximum of $d_1,...,d_k$ with similar deduction for the case with $p=3$. In the second scenario, we have $p_2>1$. Examples include emulating a computationally expensive computer output with multivariate output \citep{conti2010bayesian,paulo2005default}. In this case, the Kalman filter algorithm may not be applied, so the additional computational order is  $O(n^3_2)$, when the covariance of the factor process is the same. If the covariance is not the same, we need to additionally compute the  inverse of covariance matrices of $d$ multivariate normal distributions, which is at the order of $O(dn^3_2)$. 


In sum, the computational complexity of GOLF for all scenarios considered herein is much smaller than $O(N^3_o)$ from directly inverting the covariance matrices. Besides, a few steps in the MCMC algorithm can be computed in parallel, such as FFBS algorithm to compute the product of $d$ marginal densities of projected output and the matrix multiplication $\tilde {\mathbf Y}^T=(\mathbf Y-\mathbf M)^T \mathbf A_s$, to further reduce the computational complexity.

\section{Comparison and connection with other related models}
\label{sec:comparison}

{ GOLF processes are closely connected to a wide range of approaches on approximating  GPs for modeling large correlated data.  Model (\ref{equ:GOLF_model}) is a linear model of coregionalization (LMC) \citep{gelfand2004nonstationary}, where the factor loading matrix is parameterized by input variables. 
Another widely used model for multivariate functional data is the semiparametric latent factor model (SLFM)  \citep{seeger2005semiparametric}, where the factor loading matrix can be estimated by the principal component analysis (PCA) \citep{higdon2008computer}. However, the linear subspace estimated by PCA is equivalent to maximum marginal likelihood estimator (MMLE) with independent factors (\cite{tipping1999probabilistic}), whereas the latent factors at different input variables are assumed to be correlated. The MMLE of factor loadings with correlated factors was derived in \citep{gu2018generalized}, called  the generalized {{probabilistic}} principal component analysis (GPPCA). 
Our approach has two distinctions. First, our approach applies to observations with irregular missing values, whereas  the observations are  required to be matrices in GPPCA. Second, both inputs $\mathbf s$ and  $\mathbf x$ are used for estimation, whereas only the input in latent processes  is used in GPPCA and  predictions can be  more accurate. 
 

 To overcome the computational bottleneck of GPs, we project  observations on orthogonal coordinates in a GOLF model, as the complexity of computing the likelihood of GPs with Mat{\'e}rn covariances with one dimension input is fast by the continuous-time Kalman Filter. 
The  computational complexity can be further reduced   by only using factor processes with large eigenvalues. The reduced rank approach is used widely in modeling  correlated data. For instance, the predictive process by a set  of pre-specified knots was studied in \cite{banerjee2008gaussian},  and the multiresolution local bisqaure functions were used in \cite{cressie2008fixed}. Limitations of {{the}} reduced-rank method are studied in \cite{stein2014limitations}. Note that even for the full rank covariance, the computational order of GOLF is much less than $O(N^3_o)$. The primary goal is not to propose a reduced rank model herein, but to reduce the computational complexity of a GP model with a full-rank, flexible covariance function through orthogonal projections. 



Many other approximation methods for GPs follow the framework of Vecchia's approximation \citep{katzfuss2017general,vecchia1988estimation}. 
Vecchia's approximation is a broad framework that assumes the sparsity of the inverse of Cholesky decomposition of the covariance matrix of the latent processes, where the key is on  selecting the order of the latent variables and imposing sensible conditional independence assumptions between variables.  GOLF processes with Mat{\'e}rn kernel  is closely related to Vecchia's approximation, in the sense that the model can be written as a vector autoregressive model with orthogonal factor loading matrix. Our way of computing  likelihood and predictions based on {{the}} FFBS algorithm is exact, rather than an approximation to the likelihood function. We compare our approach with a few other methods that fall into the  framework of Vecchia's approximation in Section \ref{subsec:real_spatial_data}.  
}






\section{Simulated studies}
\label{sec:simulation}

We discuss two simulated examples in this section. We first study a simulated example with a small sample size to study the predictive performance and  parameter inference between  GOLF processes and  the exact GP model by directly computing the inversion and determinant of the covariance matrix in the likelihood function.
In the second simulated example, we generate  observations from  separable  and nonseparable models to study the predictive performance of GOLF processes with {{a}} different number of factors, and with the same or different kernel parameters. For both examples, we implement $J=100$ experiments in each scenario, and we generate $T=5,000$ MCMC samples for each method with the first $20\%$ of the samples  used as the burn-in samples. 

Denote $y^*_{i,j}$ the $i$th held-out data in the $j$th simulated experiment in each scenario, for $i=1,...,n^*$ and  $j=1,...,J$. Let $\hat y^*_{ij}$ and  $C{I_{ij}}(95\% )$ be the predictive mean and  $95\%$ predictive credible interval of the $i$th held-out data at the $j$th experiment, respectively. For both simulated examples, we record the root mean square error, the percentage of held-out observations percentage  covered in the $95\%$ predictive  interval, and the  average length of the $95\%$ predictive  interval  of the $j$th experiment (${L_{CI_j}(95\%)}$): 
\begin{align}
\text{RMSE}_j&=\sqrt{\frac{\sum^{N^*}_{i=1}(\hat y^*_{ij}-  y^*_{ij})^2 }{N^*}},\, \label{equ:RMSE} \\
P_{CI_j}(95\%) &= \frac{1}{{N^{*}}} {\sum\limits_{i = 1}^{N^{*}} 1\{y^*_{ij}  \in C{I_{ij}}(95\% )\}}\,,\label{equ:PCI} \\
{L_{CI_j}(95\%)} &= \frac{1}{{{N^{*}}}} \sum\limits_{i = 1}^{{N^{*}}} {\Length\{C{I_{ij}}(95\% )\} } \,, \label{equ:LCI}
\end{align}
for $j=1,...,J$.  We compute  average values of these three quantities over $J=100$ simulations to evaluate each approach. A precise method should have {{a}} small average RMSE, $P_{CI}(95\%)$ close to the $95\%$ nominal level, and short predictive interval lengths.  Here we only consider the pairwise interval of responses at each coordinate as outputs are univariate on spatial or spatio-temporal domain. Simultaneous credible interval  can be used for applications with multivariate responses \citep{sorbye2011simultaneous}.


\begin{example}[GOLF processes and exact GP model] Data are sampled from {{a}} zero-mean separable GP model  with two-dimensional inputs at a $25\times 25$ regular lattice in $[0,1]^2$.  Two missing patterns are considered, where the data are missing at random in the first case, and  a disk in the centroid of the lattice is missing in the second case.
\label{eg:compare_exact}
\end{example}



We assume a small sample size in Example \ref{eg:compare_exact} because of the computational burden by the exact Gaussian process model. We use the unit-variance covariance matrix parameterized by the exponential kernel and the Mat{\'e}rn kernel in (\ref{equ:matern_5_2}) to generate the data. The range parameters of Mat{\'e}rn kernel are chosen as $\gamma_0=1$ and $\gamma_1=...=\gamma_d=1/3$.  The range parameters of exponential kernel are chosen to be  $\gamma_0=4$ and $\gamma_1=...=\gamma_d=1$. All the range parameters, the variance of the kernel, and noise are estimated by each method based on the MCMC algorithm. 

We compare   GOLF processes and the exact GP model where the inverse and determinant of the covariance matrix  are directly computed. Both models use the same prior and proposal distribution in the MCMC algorithm to sample the kernel parameters.   Table \ref{tab:simulation_exact_GP}  gives the predictive performance of both methods for three scenarios, where $50\%$ and $20\%$ of the output are missing at random in the first two scenarios, and approximately $20\%$ of the output {{is}} missing in a disk in the centroid of the lattice in the third scenario. Graphs of the observed data, full data,  predictions, and trace plots of the posterior samples  in one simulation are given in the supplementary materials. 






\begin{table}[t]
\centering

\resizebox{\columnwidth}{!}{%
\begin{tabular}{c|cc|ccc|ccc|ccc}
\hline
                                             & \multicolumn{2}{c|}{Missing value} & \multicolumn{3}{c|}{GOLF } & \multicolumn{3}{c|}{Exact GP model} & \multicolumn{3}{c}{Difference} \\ \hline
Kernel                                       & Percentage        & Pattern        & RMSE      & $P_{CI}(95\%)$       & $L_{CI}(95\%)$       & RMSE         & $P_{CI}(95\%)$          & $L_{CI}(95\%)$          & $\Delta$RMSE     & $\Delta$L       & $\Delta$U       \\ \hline
\multirow{3}{*}{Mat{\'e}rn} & 50\%              & random         & 0.106     & 0.954     & 0.425     & 0.106        & 0.952        & 0.423        & 0.002    &0.006   &0.006   \\
                                             & 20\%              & random         & 0.103     & 0.952     & 0.410     & 0.103        & 0.952        & 0.411        & 0.001    & 0.007   & 0.007   \\
                                             & 20\%              & disk           & 0.108     & 0.909     & 0.430     & 0.108        & 0.913        & 0.431        & 0.005    & 0.008   & 0.009   \\ \hline
\multirow{3}{*}{Exp}                         & 50\%              & random         & 0.129     & 0.955     & 0.518     & 0.128        & 0.953        & 0.513        & 0.005   & 0.009	   & 0.008   \\
                                             & 20\%              & random         & 0.120     & 0.947     & 0.472     & 0.120        & 0.948        & 0.471        & 0.003    & 0.009   & 0.009   \\
                                             & 20\%              & disk           & 0.156     & 0.941     & 0.602     & 0.154        & 0.946        & 0.605        & 0.013    & 0.019   & 0.019   \\ \hline
\end{tabular}
}

\caption{\label{tab:simulation_exact_GP} Comparison between the exact GP model and GOLF processes.  $J=100$ simulated experiments are conducted for each scenario. 
$\Delta$RMSE$=\frac{1}{J}\sum^J_{j=1}\Delta$RMSE$_j$ measures the average $L_2$ distance by the two methods, where $\Delta\mbox{RMSE}_j= (\frac{1}{N^*}\sum^{N^*}_{i=1} (\hat y^*_{ij,GOLF}- \hat y^*_{ij,GP})^2)^{1/2}$ with $\hat y^*_{ij,GOLF}$ and $\hat y^*_{ij,GP}$ denote the predictive mean by GOLF processes and exact GP model, respectively. $\Delta$L and $\Delta$U measure the average absolute difference between the lower bound and upper bound of 95\% predictive intervals of the GOLF processes and the exact GP model, respectively. }
\end{table}


As shown in  Table \ref{tab:simulation_exact_GP}, both methods have accurate predictions and uncertainty assessment for all scenarios.  {{Out-of-sample}} RMSE for predicting the held out observations is close to $0.1$, the standard deviation of the  noise. The $95\%$ predictive confidence intervals cover around $95\%$ of the held-out observations, and the average length of the predictive confidence interval is small. Predictions of both methods are more precise for the cases when the data are missing at random than the ones when a disk of output is missing in the centroid of the lattice, as the estimated correlation between the held-out test output and nearby observations are relatively accurate. 


For  Example \ref{eg:compare_exact}, note that  GOLF processes and the exact GP model are the same with two different computational strategies. For GOLF processes, we sample the missing values to use the fast computational strategy, whereas the inverse and determinant of the covariance matrix are computed in the exact GP model directly. Therefore, the two different strategies have significantly different computational operations. The computational operations of GOLF processes is $O(Nd)$ with $N=n_1\times n_2$ ($d=n_1$ in Example \ref{eg:compare_exact}), whereas the computational operations of the exact GP model is $O(N^3_o)$, where $N_o$ is the number of observations.  Thus, GOLF processes are computationally feasible for a large data set. On the other hand, the  difference in predictions and uncertainty assessment between the exact GP model and GOLF is small (last three columns in  Table \ref{tab:simulation_exact_GP}), since  we do not make any approximation in computing GOLF processes.

  \begin{figure}[t]
\centering
  \begin{tabular}{c}
	\includegraphics[height=.33\textwidth,width=1\textwidth]{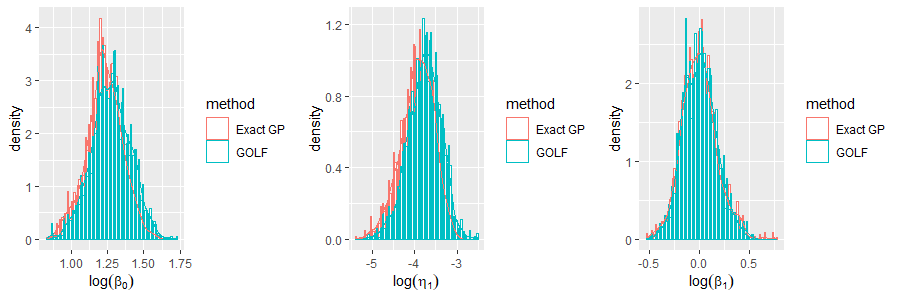}
  \end{tabular}
  \vspace{-.2in}
   \caption{The histogram of posterior samples of the logarithm of the inverse range parameters and nugget parameters in one simulation of Example \ref{eg:compare_exact}, where the data are generated using the Mat{\'e}rn kernel in (\ref{equ:matern_5_2}) with $50\%$ of the values missing at random.}
   
\label{fig:hist_par}
\end{figure}

Figure \ref{fig:hist_par} shows the histogram of the 4000 after burn-in posterior samples from the GOLF processes and exact GP model in one simulation of Example \ref{eg:compare_exact}. The posterior samples of {{the}} two methods are close to each other.  The difference becomes even smaller when we increase the number of MCMC samples.


\begin{example}[GOLF processes with different number of factors and kernel parameters]  The data are sampled from two scenarios with {{two-dimensional}} inputs being a $100\times 100$ lattice in $[0,1]^2$. In the first scenario, the range parameters of the kernel of each factor process are the same, whereas these parameters are chosen to be different  in the second scenario. In both scenarios, a disk of output in the centroid of the lattice is masked out for testing, corresponding to approximately $20\%$ of the total number of data. We use $d=30$ (low-rank)  and {\bf  $d=100$} (full-rank) factors to generate the data. We test GOLF processes with {{a}} different number of factors, same or different range parameters. 
\label{eg:compare_diff_factors}
\end{example}

 In  Example \ref{eg:compare_diff_factors}, 
the factor processes are assumed to have the Mat{\'e}rn kernel in (\ref{equ:matern_5_2}) and unit variance. The kernel parameter is shared in the first scenario, where $\gamma_0=1/4$ and $\gamma_l=1/2$, and in the second scenario $\gamma_0=1/3$ and $\gamma_l=1/l$, for $l=1,...,d$. We estimate these parameters through the posterior samples from the MCMC algorithm. 

\begin{figure}[t]
\centering
  \begin{tabular}{ccc}
	\includegraphics[height=.3\textwidth,width=.33\textwidth]{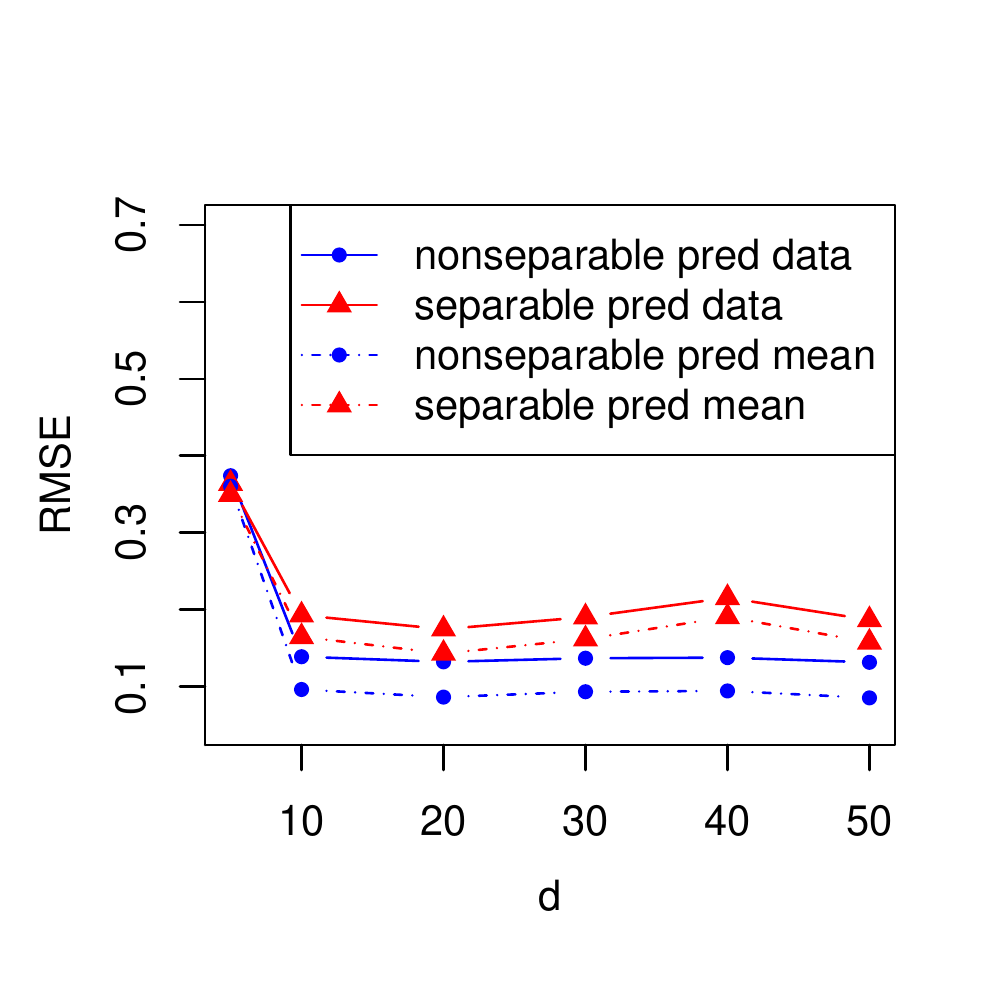}
	\includegraphics[height=.3\textwidth,width=.33\textwidth]{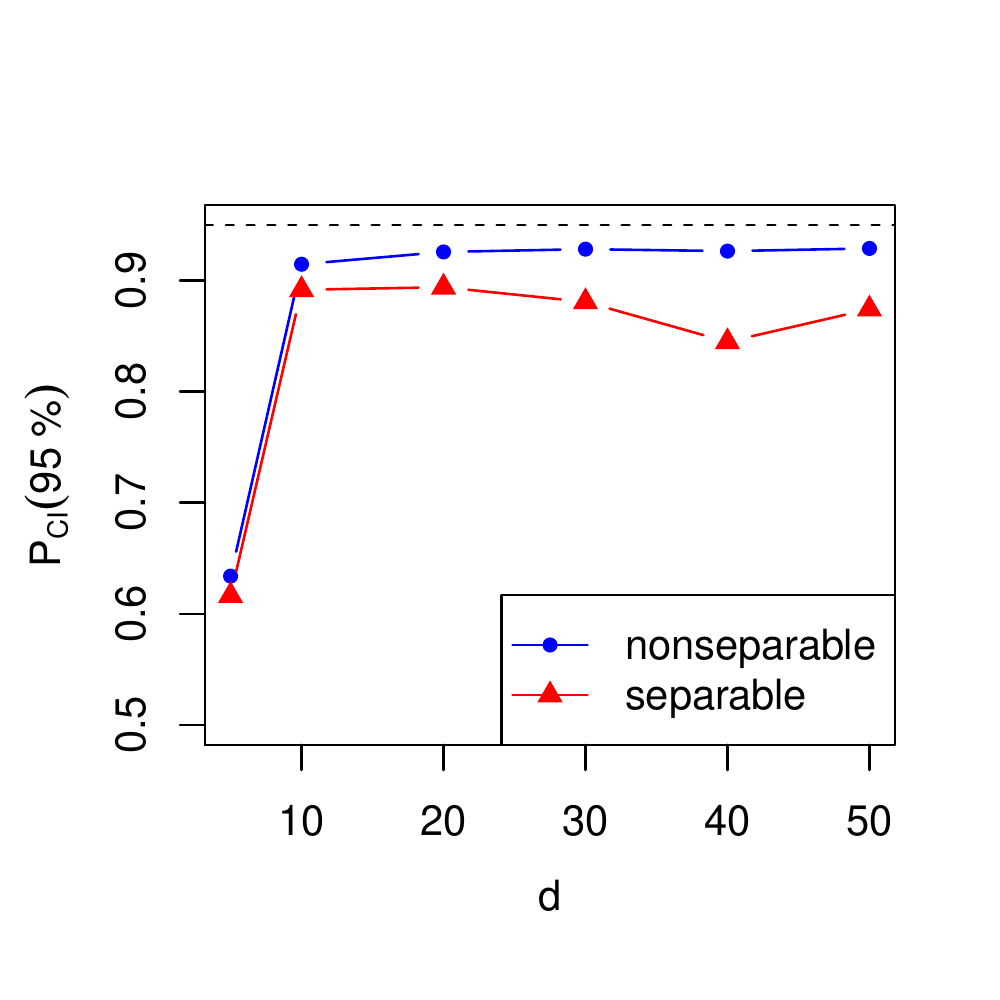} 
	\includegraphics[height=.3\textwidth,width=.33\textwidth]{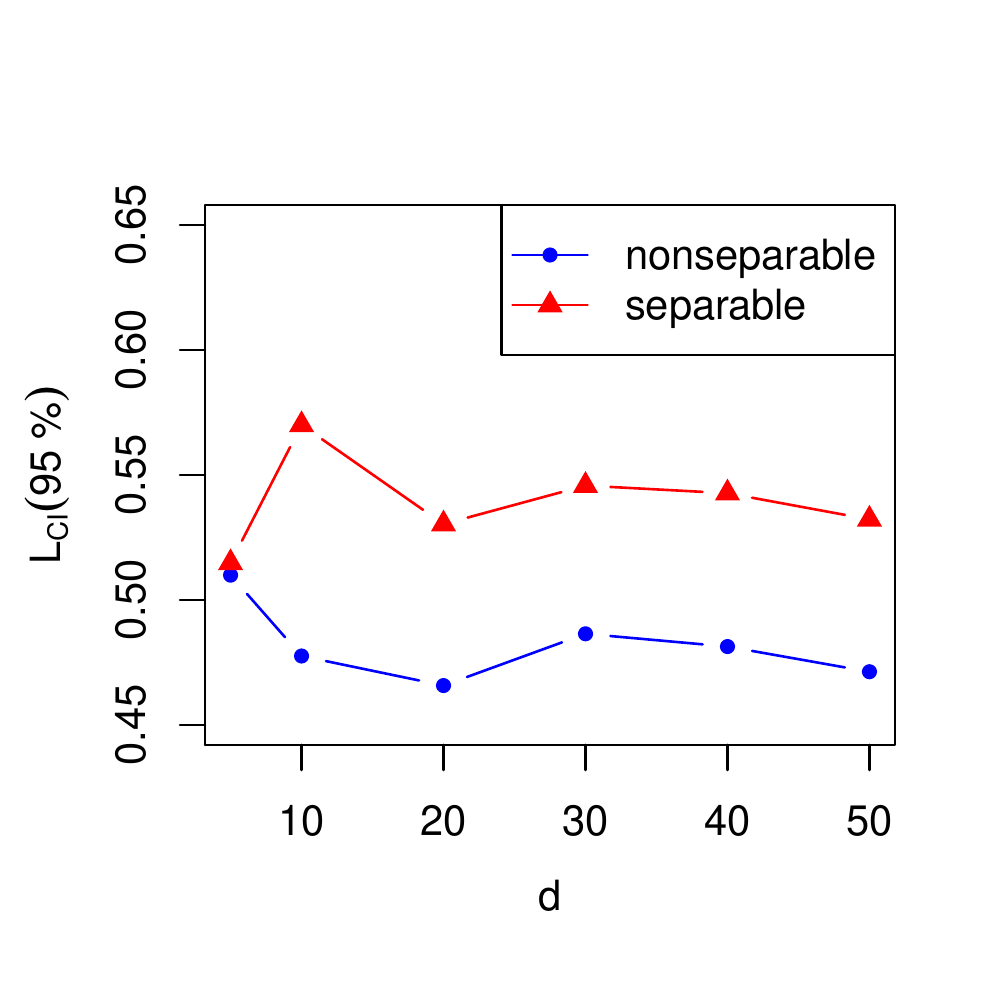}   \vspace{-.4in}
 \\
		\includegraphics[height=.3\textwidth,width=.33\textwidth]{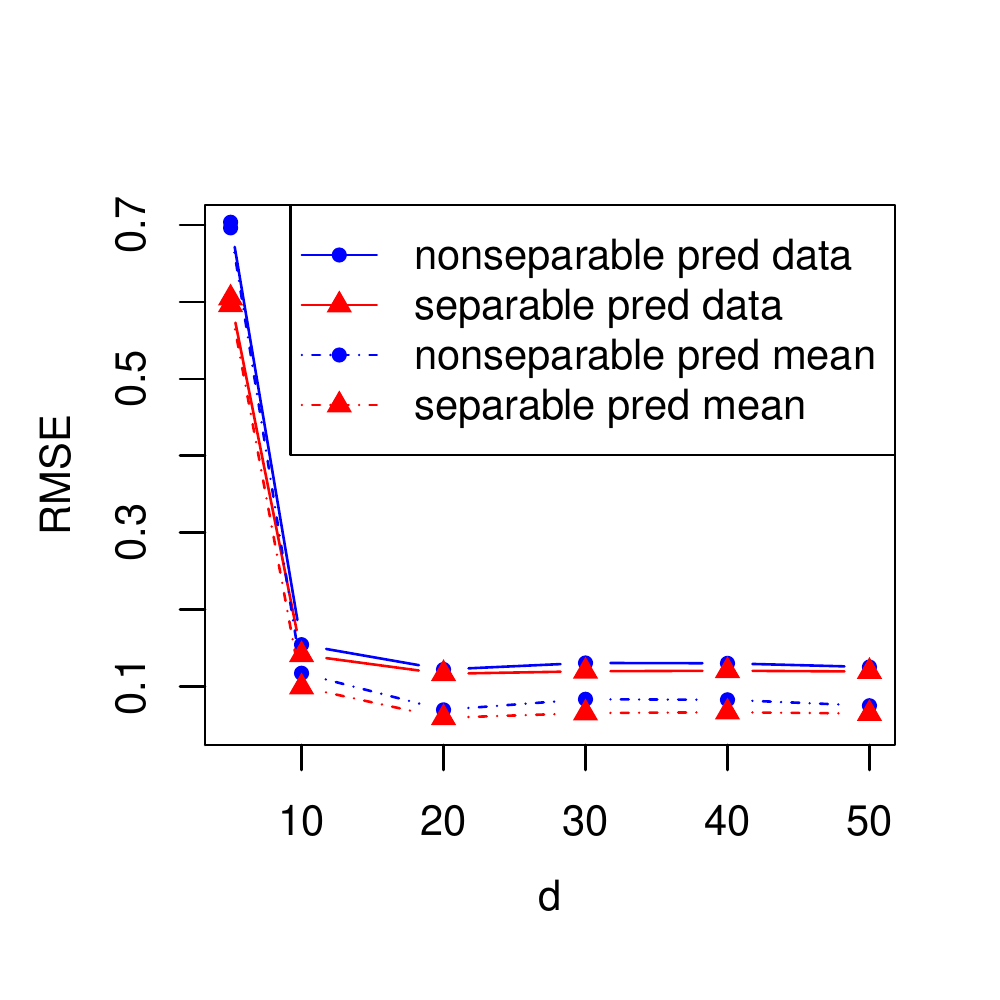}
	\includegraphics[height=.3\textwidth,width=.33\textwidth]{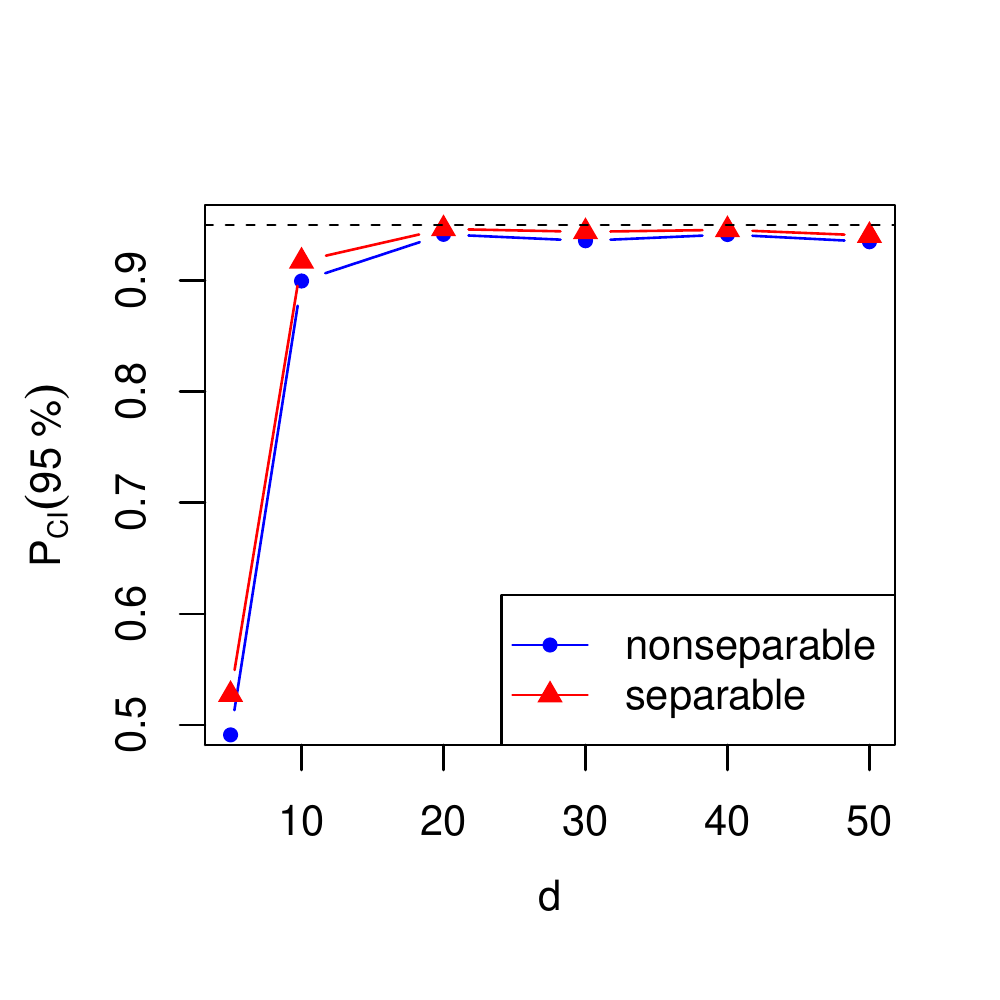} 
	\includegraphics[height=.3\textwidth,width=.33\textwidth]{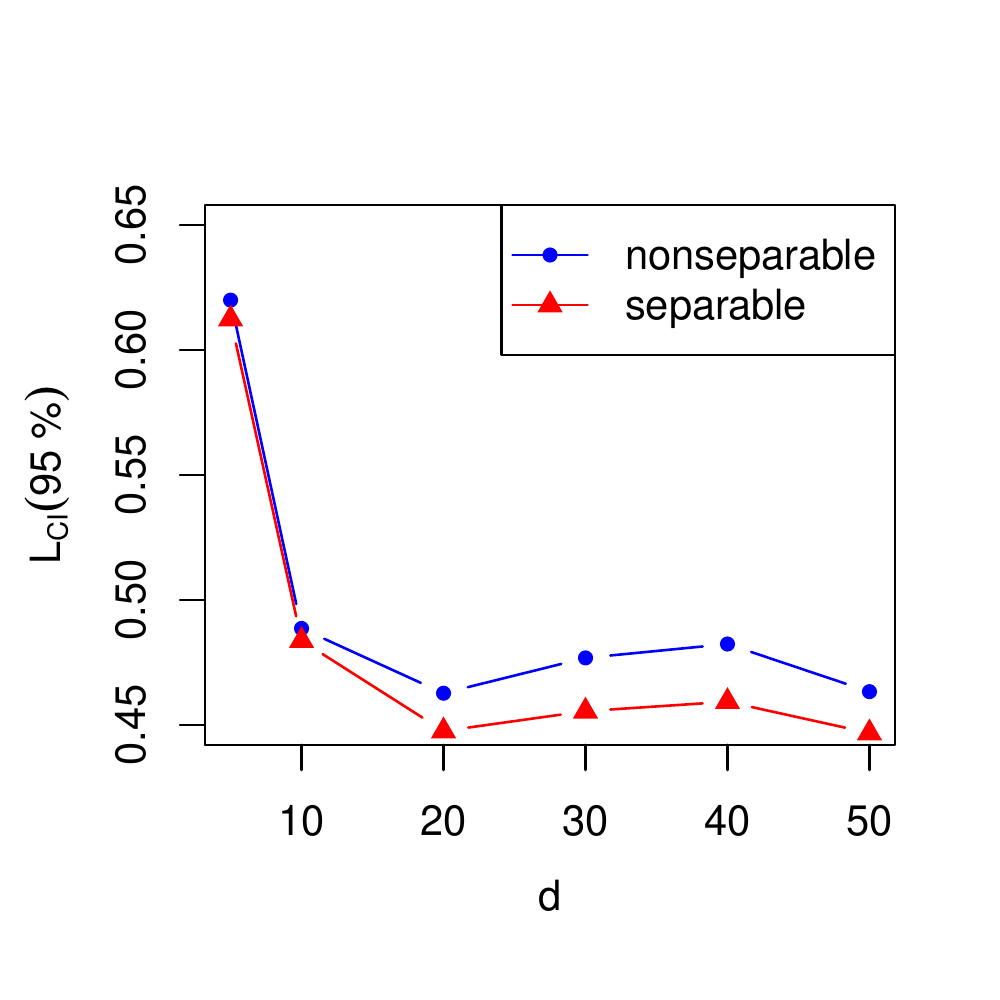}
  \end{tabular}
  \vspace{-.3in}
   \caption{The predictive performance of GOLF process with $d=5,10,20,30,40$ and $50$ factors for Example \ref{eg:compare_diff_factors}. 
   in the first row of panels, kernel parameters are different in simulating the data, whereas the parameters are the same in  for simulation in the second row of panels. 
   Blue curves and red curves denote the GOLF processes with  different kernel parameters and the same kernel parameter, respectively. In the left panels, the solid curves denote the RMSE for predicting the (noisy) observations, and the dashed curve denotes the RMSE for predicting the mean of the observations.  Proportions of observations covered in the $95\%$ predictive interval and the average length of the predictive interval are graphed in the middle and right panels, respectively.  }
   
   
\label{fig:pred_compare_diff_factors}

\end{figure}


Predictive performance of different approaches for data simulated by $d=30$ latent processes are graphed in Figure  \ref{fig:pred_compare_diff_factors}.
In the first row of the panels, since data are simulated by GOLF processes with different kernel parameters, nonseparable GOLF processes have smaller predictive RMSE and a shorter interval that covers almost $95\%$ of the data. In {{the}} second row of the panels,  GOLF processes with the same kernel parameter  seem to be slightly better, as the true factor process has the same kernel parameter. The difference between {{the}} two methods in the second row is smaller, as the GOLF model with a separable kernel {{is}} a special case of the one with different kernel parameters.


  \begin{figure}[t]
\centering
  \begin{tabular}{ccc}
  \hspace{-.15in}
	\includegraphics[height=.38\textwidth,width=.33\textwidth]{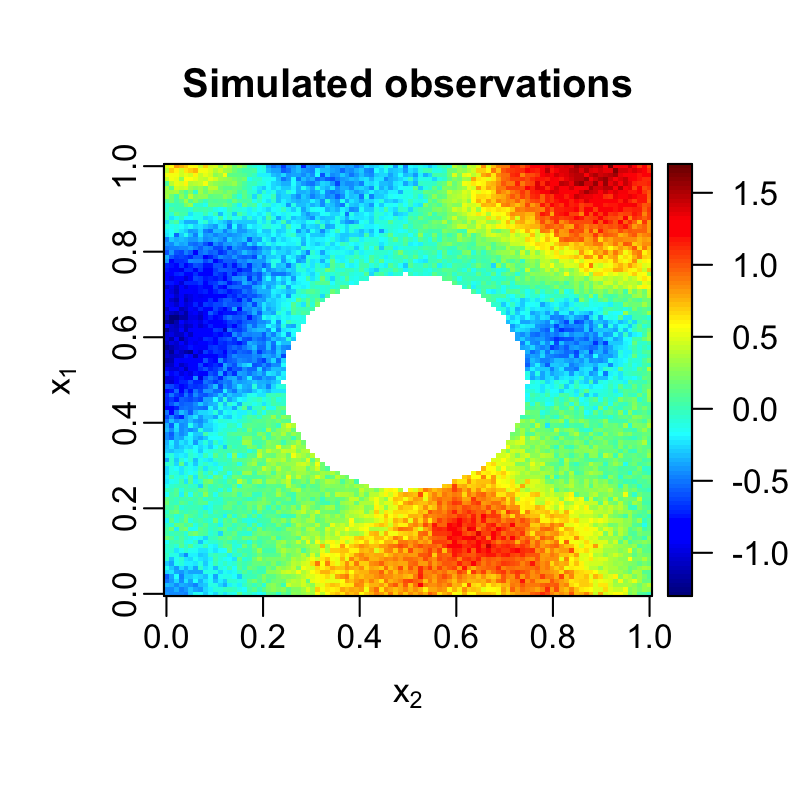} \hspace{-.12in}
	\includegraphics[height=.38\textwidth,width=.33\textwidth]{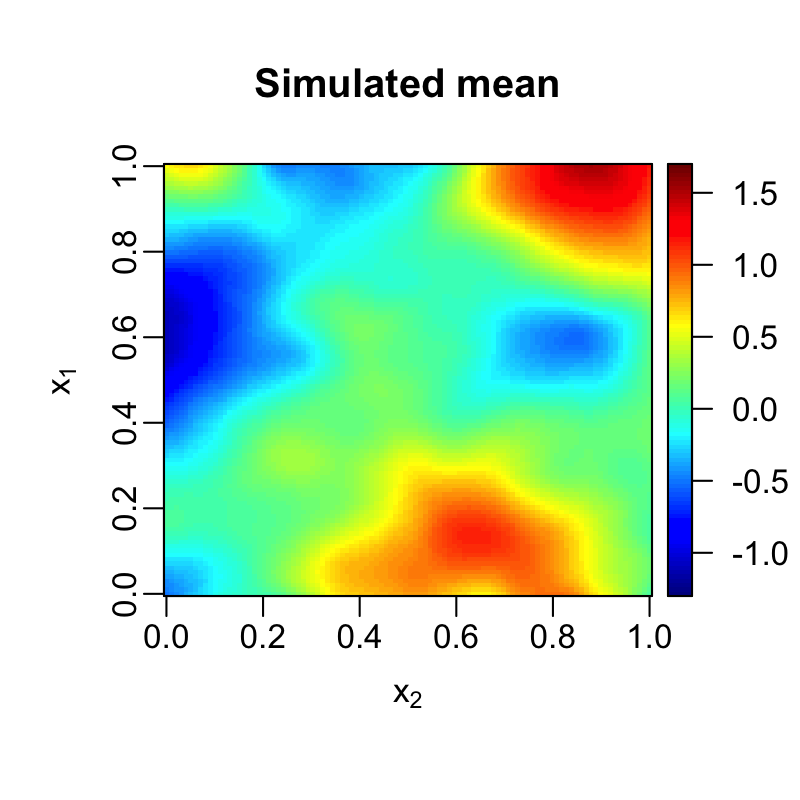} 
	\hspace{-.12in}
	\includegraphics[height=.38\textwidth,width=.33\textwidth]{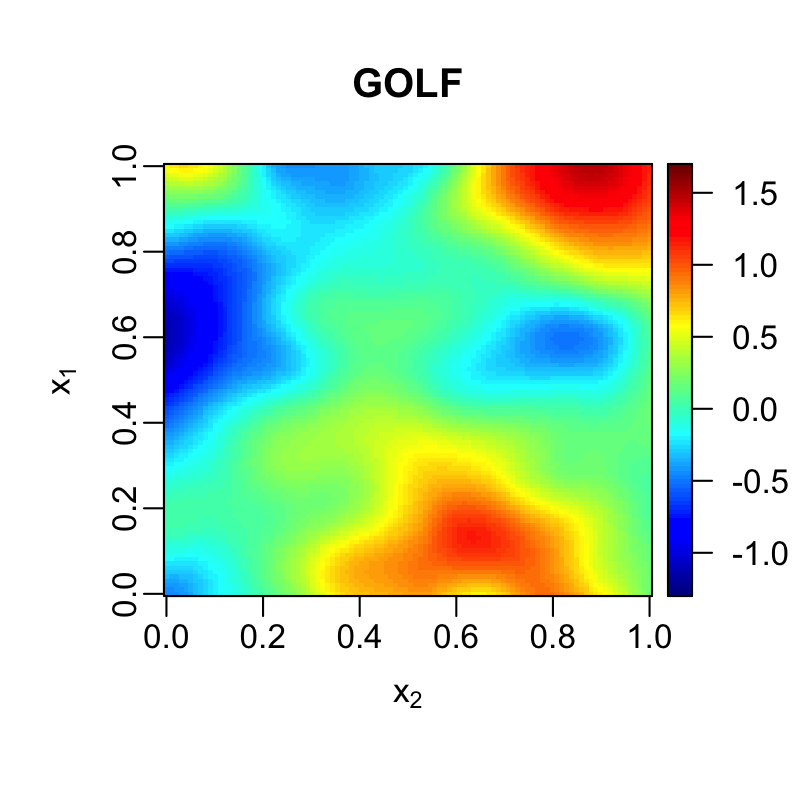} 
  \end{tabular}
  \vspace{-.25in}
   \caption{
   {The left figure shows the observed data in one simulation of Example \ref{eg:compare_diff_factors}, where a disk of observations is missing. The middle figure contains the mean of the data and the right figure is the prediction from GOLF process.}}
\label{fig:pred_eg2}
\end{figure}


 From Figure  \ref{fig:pred_compare_diff_factors}, we found that when we use $d=20$ factor processes or more, the predictive results seem to be similar, as the data are simulated using $d=30$ factor processes. The way of selecting the number of factors is currently ad-hoc. One may select the number of factors to ensure a large proportion  of the variance explained by the sum  of the eigenvalues of {{the}} correlation matrix $\mathbf R_s$. This simulation suggests that using more factors may be better in prediction than using very few factors.
 
 In Figure \ref{fig:pred_eg2}, we graph the simulated observations,  simulated mean,  and the prediction from the GOLF model with $d=30$ in one simulation.  Predictions look reasonably accurate.  Results  when the data are generated by a full rank kernel ($d=100$) are provided in Figure S3 in supplementary materials.  Results are very similar to Figure  \ref{fig:pred_compare_diff_factors}.









\section{Real applications}
\label{sec:real_data}
\subsection{Predicting large spatial data on an incomplete lattice}
\label{subsec:real_spatial_data}
We compare  GOLF processes with different approaches on predicting the missing temperature values in  \cite{heaton2019case}. This data set contains daytime land surface temperatures  on August 4, 2016, at $300\times 500$ spatial grids with the latitude and longitude  ranging from  34.30 to 37.07, and from -95.91 to -91.28, respectively.  The complete data set consists of  148,309 observations with $1,791$ missing values due to cloud cover. The  training data (plotted in the  left  panel in Figure \ref{fig:Temp_obs_EDA}) consists of  105,569 observations, whereas 42,740 observations were held out as the test data. Training observations and  full observations are graphed in  the upper panel in Figure \ref{fig:heaton_pred}.


	  \begin{table}[t]
	  \centering
	\begin{tabular}{lcccc} \hline
		Methods  & RMSE & $P_{CI}(95\%)$ & ${L_{CI}(95\%)}$ & Run time (mins) \\ \hline
		FRK & 3.16 & 0.77 &6.09 &3.53  \\
		Gapfill &1.86 &0.35  & 1.44 &6.98 \\
	{GOLF}   & {1.46} & {0.92} & {4.95} & {48.6} \\
		LAGP   & 2.07 &0.84& 5.70& 3.76\\
		LatticeKrig  & 1.68 &0.963  &6.58 &214.25 \\
		MRA    &1.85 &0.92 &5.54 &4.99 \\
		NNGP  & 1.64 & 0.95&5.84&1.14 \\
		Partition  &1.80 &0.82 &4.56 &827.37 \\
		SPDE    & 1.55  & 0.97  & 7.87 &34.8 \\
		\hline
	\end{tabular}
		 \caption{Comparison for the dataset in \cite{heaton2019case}. The standard deviation of observations  is $4.07$. 
		 For each method, we compute RMSE,  $P_{CI}(95\%)$ and $L_{CI}(95\%)$ defined in (\ref{equ:RMSE})-(\ref{equ:LCI}). A satisfying method should have small RMSE and small $L_{CI}(95\%)$  and $P_{CI}(95\%)$ closed to be $95\%$ nominal level. 
		 We compare the fixed rank kriging (FRK) (\cite{cressie2008fixed}), the Gapfill method (\cite{gerber1605predicting}), GOLF processes,  the local approximate Gaussian processes (LAGP) (\cite{gramacy2015local}), the lattice kriging (LatticeKrig) (\cite{Nychka2017LatticeKrig}), the multiresolution approximation (MRA) (\cite{katzfuss2017multi}),  the nearest neighbor Gaussian processes (NNGP) (\cite{datta2016hierarchical}), the spatial partitioning (Partition) (\cite{HeatonNonstationary2017}), and  stochastic partial differential equations (SPDE) (\cite{lindgren2011explicit}). } 
	\label{tab:comparison_pred_Heaton_data}

	\end{table}


We define GOLF processes on this dataset with $s$ being latitude and $x$ being longitude. Since areas with higher latitude typically have lower temperature on average, we assume a mean parameter for each latitude value, i.e. $\mathbf M= (\mathbf H_2 \mathbf B_2)^T$, where $\mathbf H_2=\mathbf 1_{n_2}$ and $\mathbf  B_2=(b_{2,1},...,b_{2,n_1})^T$. 
We let $d=n_1/2$ and use exponential kernels  with distinct variances and range parameters sampled from the marginal posterior distribution for GOLF processes. We compute $M=6000$ posterior samples where the first $20\%$ were used as the burn-in samples. Results of longer MCMC chains and  different initial values of the parameters are given in the supplementary materials.  

In \cite{heaton2019case}, 12 groups of researchers across the globe  implemented  their  methods to predict missing temperature values for competition.  Among this cohort of researchers are authors that conjured up some of the most popular methods for large spatially correlated data. 
 Other than GOLF processes, we implement 8 of 12 approaches based on the code provided in \citep{heaton2019case}. We could not implement the other 4 approaches due to memory limitation of {{the}} computing facility or unavailability of the code. All computations are operated on a 3.60GHz 8 cores Intel i9 processor with 32 GB of RAM on a macOS Mojave operating system. 






The predictive performance of different approaches {{is}} recorded in  Table \ref{tab:comparison_pred_Heaton_data}. 
Most of {{the}} results are  consistent with what is shown in \cite{heaton2019case}, whereas small differences remain for those requiring random starts or stochastic algorithms. E.g., 5 implementation of the SPDE method 
gives different RMSE  ranging from 1.55 to 1.88. 
Besides,  running time of some methods are  slightly different. 
For SPDE and LatticeKrig, for instance, it  takes 35 mins and 214 mins to run in our system, respectively, whereas it takes 138 mins and 78 mins to run in \cite{heaton2019case}, respectively.


We acknowledge that held-out observations were not released in \cite{heaton2019case}, adding difficulty for model specification.  The good performance {{of}} the GOLF model may be explained by two reasons. First, different mean parameters are assumed at each latitude, which is more flexible to capture information from a large number of  observations. 
Second, we assume different  range  and variance parameters of the factor processes, which {{are}} more flexible than  the separable  or isotropic kernel functions.  

The $95\%$ predictive interval of the GOLF model is the shortest, and it covers around $92\%$ of the held out test data, as shown in Table \ref{tab:comparison_pred_Heaton_data}. 
In  supplementary materials,  we provide diagnostic plots of the fitted values from the GOLF model and predictive performance based  on several configurations, including $40,000$ MCMC samples and different initial parameters. The predictive performance of the GOLF model at different configurations is similar. Besides, the computational time of GOLF per one MCMC iteration is around $0.49s$ for this example, which is comparable to NNGP (0.53s) and faster than MRA (3.29s) for  one iteration. The posterior sampling obtained here provided uncertainty quantification of model parameters, whereas most of the methods provided in Table \ref{tab:comparison_pred_Heaton_data} only provide a point estimator of the  parameters. Future works are needed to reduce the number of iterations in GOLF to achieve {{a}} similar level of predictive accuracy. 

    \begin{figure}[t]
\centering
  \begin{tabular}{ccc}
  \includegraphics[height=.25\textwidth,width=.45\textwidth ]{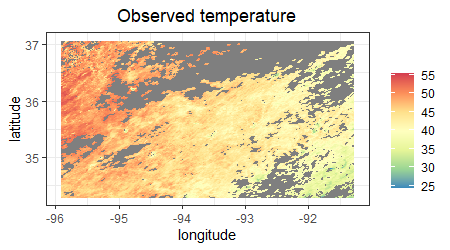}
	\includegraphics[height=.25\textwidth,width=.45\textwidth]{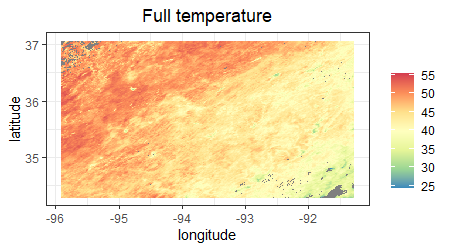} \hspace{-.1in}\\  \vspace{-.4in}
	\includegraphics[height=.25\textwidth,width=.45\textwidth]{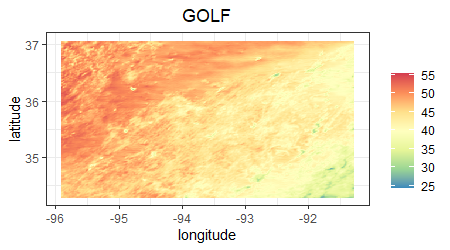} 
	\includegraphics[height=.25\textwidth,width=.45\textwidth]{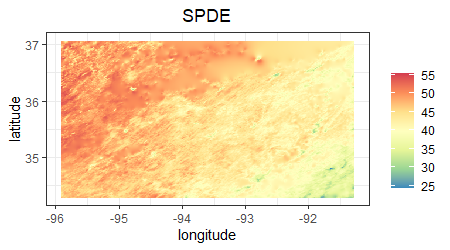} 
  \end{tabular}
  \vspace{.25in}
   \caption{
   {The top panels show the observed temperature and full temperature, respectively, where the gray area contains unobservable points. The bottom panel are the predictions from two methods, GOLF and SPDE, respectively.}
   }
   
\label{fig:heaton_pred}
\end{figure}

The predictive mean of the GOLF processes and SPDE  are graphed in {{the}} middle panel and right panel in Figure \ref{fig:heaton_pred}, respectively. Predictions from the GOLF processes are more accurate for predicting temperatures in areas with high latitude, possibly due to flexible mean parameters estimated from data.  Both methods seem to be slightly oversmoothing. Yet predicting the missing values of this data set is challenging, as the observations are missing in  spatial blocks. Both methods seem precise in prediction. 

\subsection{Analysis of large spatio-temporal data set}
We consider the monthly gridded temperature anomalies from U.S. National Oceanic and Atmospheric Administration (NOAA) \footnote{\url{ftp://ftp.ncdc.noaa.gov/pub/data/noaaglobaltemp/operational}}. The data set contains the average air and marine temperate anomalies at 5 degrees longitude-latitude grids with respect to 1981-2010 base period.  R code and examples to load NOAA gridded data can be found in \cite{SShen2017}. We compare the predictive performance using the data from Jan 1999 to Dec 2018. For each month, we observe the temperature anomalies at $n_1=36\times 28$ spatial grids with longitude ranging from 182.5 to 357.5 and with latitude ranging from -62.5 to 72.5, respectively.  There are $11,122$ missing data, leaving the total number of observations to be $230,798$. We held out $50\%$ randomly sampled temperature anomalies  as the missing data, and the rest $50\%$ is used as training data (i.e., $n=n^*=115,399$). Predicting the missing values in this scenario is more difficult  than the example in \citep{gu2018generalized}, where the data are missing in a set of locations over the same  months. 


We fit the GOLF processes with the covariance of each spatial coordinate modeled by the Mat{\'e}rn covariance, and the factors processes are defined on the temporal input with different kernel parameters. 
Due to computational limitation, we let the number of factors  be $d=0.75^2 n_1=567$ and assume the factor loadings to be a Kronecker product of the first {{three-quarters}} of the eigenvectors of the sub-covariance matrices for  longitude and latitude. Although we have a large number of factors, the computational complexity is $O(Nd_{max})$ with $d_{max}=0.75 \times 36=48$ rather than $O(Nd_1d_2)$ by the mode multiplication of tensor (see Section \ref{subsec:complexity} for the discussion).  We assume  the coefficients of the intercept and linear coefficients are different at each location, i.e. $\mathbf M=(\mathbf H_2 \mathbf B_2)^T$ where $\mathbf H_2=[\mathbf 1_{n_2}, \mathbf x]$, with $ \mathbf x$ being $240$ months  and $\mathbf B_2$ being a matrix of $ 2 \times n_1$ coefficients. We use $M=3000$ MCMC samples with the first $20\%$ as the burn-in samples, as posterior samples  converge at a small number of iterations in this example. 

	  \begin{table}[t]
	  \centering
	{\begin{tabular}{lcccc} \hline
		Methods  & RMSE & $P_{CI}(95\%)$ & ${L_{CI}(95\%)}$ & Run time (mins) \\ \hline
		FRK & 0.846  & 0.967 &3.92 &29.4  \\
	   {GOLF}   & {0.325} & {0.942} & {1.08} & {43.9} \\
	    LAGP &0.695 & 0.951&1.80 &6.18 \\
		Spatial model 1   & 0.365 &0.928 &2.09 &26.5 \\
		Spatial model 2   & 0.348 &0.928 &2.02 &42.7\\

		\hline
	\end{tabular}
	}
		 \caption{Predictive performance of different approaches for the NOAA monthly gridded temperature dataset. The standard deviation of the outcomes in this dataset is  $0.940$. Results of the FRK, GOLF,  and LAGP are given in the first to the third rows. For the results in the fourth and fifth rows,  spatial models were fitted using the $\sf RobustGaSP$ package  with one initial value and two initial values of the range and nugget parameters for finding their marginal posterior mode, respectively.
	\label{tab:comparison_NOAA_data}}

	\end{table}
	
	In Table \ref{tab:comparison_NOAA_data}, we compare the GOLF processes with a few other spatial and spatio-temporal methods for the NOAA dataset. We fit two spatial models separately for each month using the {$\sf RobustGaSP$} package available on CRAN.  
	Also implemented are FRK and  LAGP based on  their packages \citep{zammit2017fixedrank,gramacy2016lagp}. 
	
	As shown in Table \ref{tab:comparison_NOAA_data},  GOLF processes have the smallest predictive RMSE and {{the}} shortest predictive interval that covers around $94\%$ of the held-out output. Since the temporal input is not used, it is not surprising that the RMSE and the  length of {{the}} predictive interval of the two spatial models are larger than the ones by  GOLF processes. If we include the temporal inputs, the computation cost is too {{large}} for inverting the covariance matrix directly. FRK and LAGP also seem to have a larger predictive error, though both the spatial and temporal inputs are used in these methods. 
	
	Predictions from GOLF processes are more accurate due to three  reasons. 
	First,  
	we can compute the model with a large number of factors efficiently, and no further approximation of the likelihood function is required. Second, mean and trend parameters at each location are  different, making the model flexible to capture the dynamic trend of temperature values at different locations.  Finally,  Latent factor processes have different kernel parameters that fit diverse  smoothness levels of projected observations. 

 In Figure \ref{fig:Temp_NOAA}, we graph the full temperature anomalies in Jan 2018, predictions from the GOLF and spatial GP model by {$\sf RobustGaSP$} package. $50\%$ of the observation in the left panel are held out for testing. Both models seem to be accurate. Since the temporal coordinate is used in  prediction, the predictive error by  GOLF processes is  smaller.

	\begin{figure}[t]
		\begin{tabular}{lll}
		\includegraphics[height=.25\textwidth,width=.33\textwidth ]{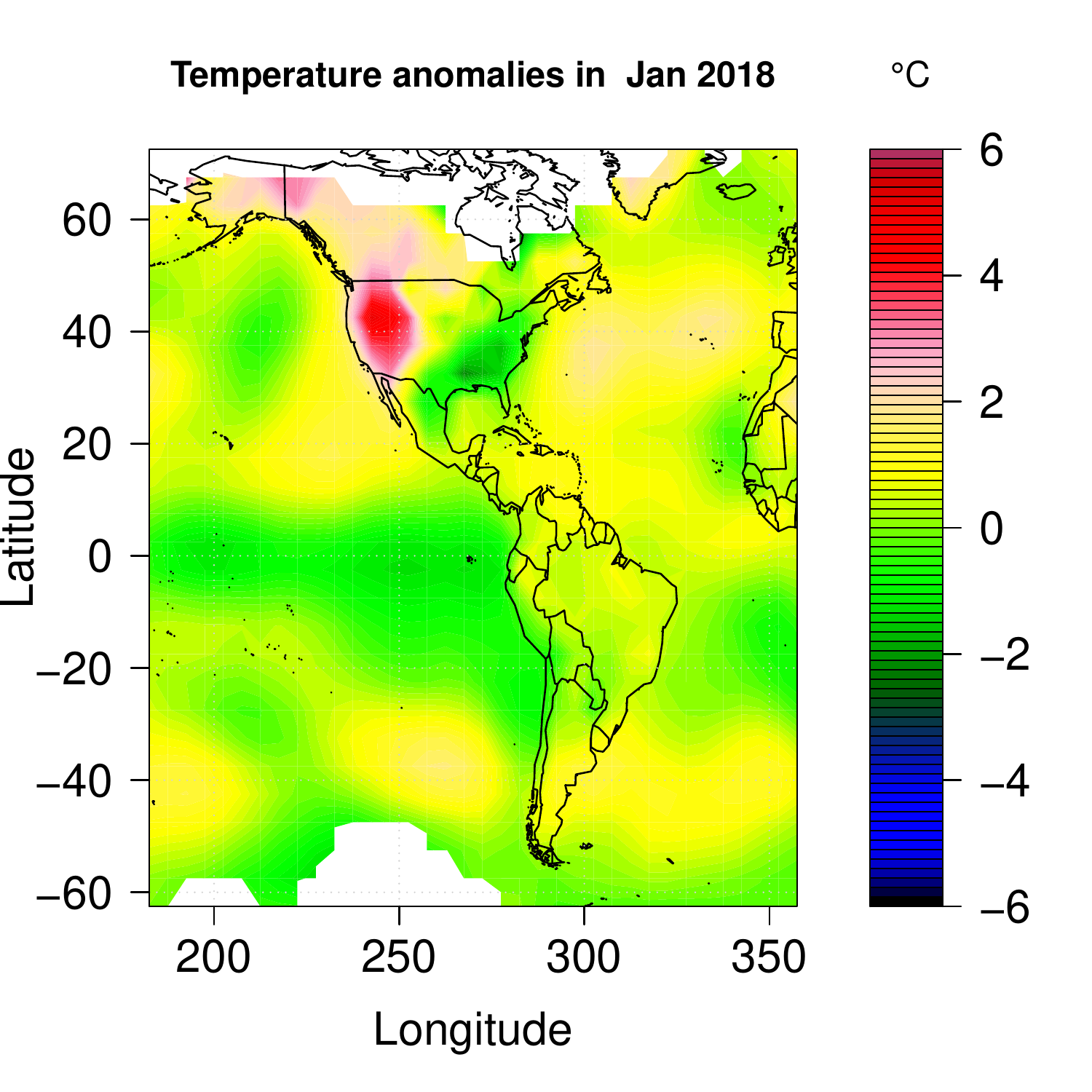}
		\includegraphics[height=.25\textwidth,width=.33\textwidth ]{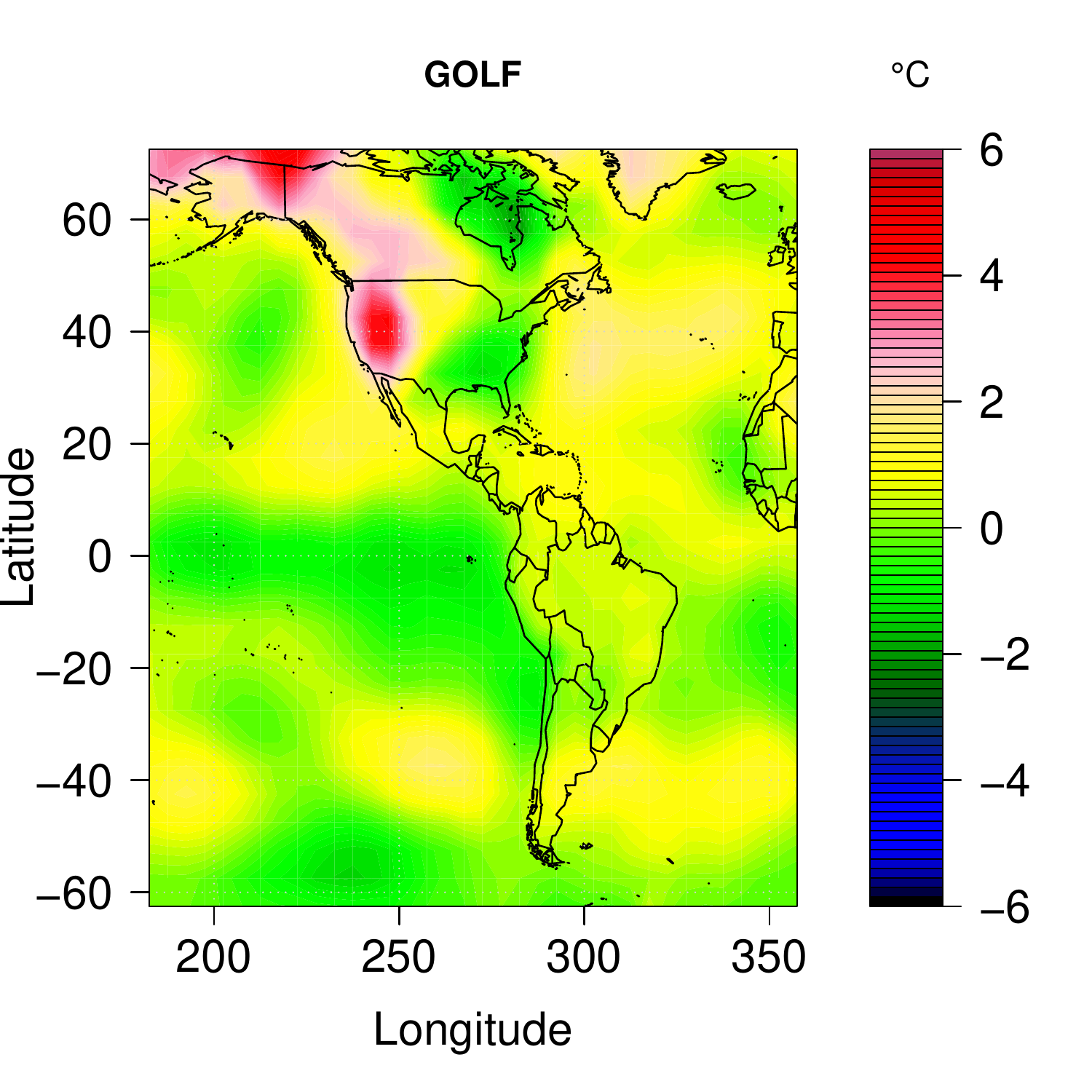}
		\includegraphics[height=.25\textwidth,width=.33\textwidth ]{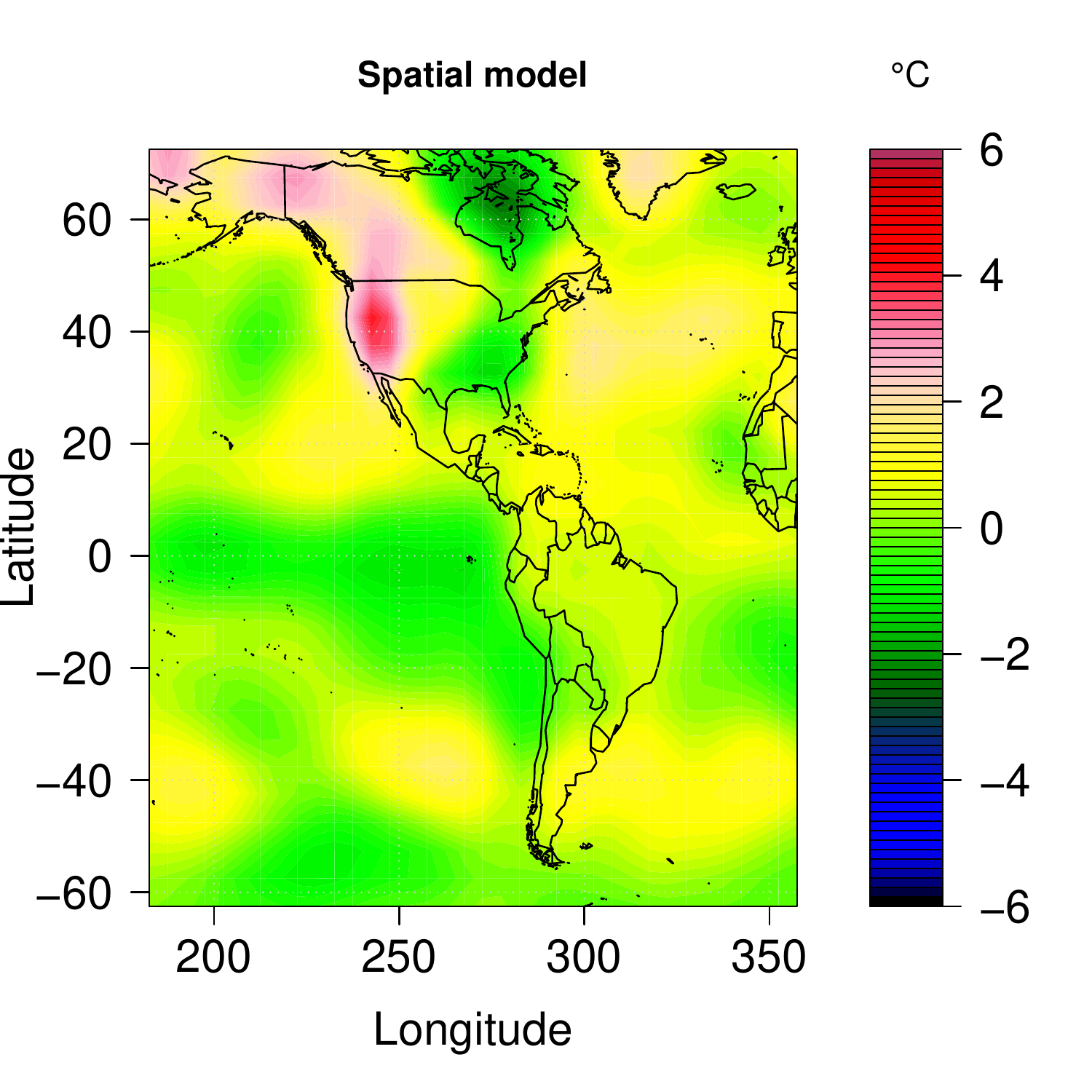}
		\vspace{-.12in}

		\end{tabular}
	\caption{Full temperature {anomalies in Jan 2018},  predictions  by the GOLF model and  the spatial model by {$\sf RobustGaSP$} package  are shown in  left, middle and right panels, respectively.  }
	\label{fig:Temp_NOAA}
\end{figure}

	
	





\section{Concluding remarks}
\label{sec:conclusion}
We have introduced  GOLF processes as a  computationally feasible approach to model  large incomplete lattice observations. For GPs with a product covariance function or LMC with orthogonal latent factor loadings,  the likelihood can be decomposed into a product of multivariate normal densities, and  prior independence of factor processes leads to posterior independence of factor processes. These two properties allow one to reduce the computational burden of GPs on incomplete lattice observations without approximating the likelihood function. Further computational reduction can be made by reducing the number of factors as well. Besides, we have introduced a flexible way to model the mean function and  the closed-form marginal likelihood is derived to alleviate the identifiability issue. Finally, we have developed an MCMC algorithm for Bayesian inference for 
large incomplete matrices of spatial and spatio-temporal data. 

The computational tools developed in this work require observations from a lattice with potential missing values. Approximation methods such as the NNGP approach may be integrated to model correlated data with a more general design. 
Besides, further computational reduction can be made by reducing the number of factors, and  a principle way to select the number of factors will be useful. 
Finally,  direct marginalization of factor processes based on an elementwise representation of GPs 
may be feasible to reduce the computation time from drawing a large number of posterior samples.





~

{\bf Acknowledgements}. We thank the editor, associate editor and referee for their comments that substantially improved the article. This research was  supported by National Science Foundation  under Award Number DMS-2053423 and National Institutes of Health under Award Number R01DK130067. We thank the editor, associate editor and referee for their great suggestions that substantially improve  this manuscript. 


\bibliographystyle{apalike}
\bibliography{References_2019}

\begin{thebibliography}{}

\bibitem[Anderson et~al., 2019]{anderson2019magma}
Anderson, K.~R., Johanson, I.~A., Patrick, M.~R., Gu, M., Segall, P., Poland,
  M.~P., Montgomery-Brown, E.~K., and Miklius, A. (2019).
\newblock Magma reservoir failure and the onset of caldera collapse at
  k{\=\i}lauea volcano in 2018.
\newblock {\em Science}, 366(6470).

\bibitem[Bai et~al., 2000]{Zhaojun2000eigen}
Bai, Z., Demmel, J., Dongarra, J., Ruhe, A., and Vorst, H. v.~d. (2000).
\newblock {\em Templates for the solution of algebraic eigenvalue problems: a
  practical guide}.
\newblock Society for Industrial and Applied Mathematics.

\bibitem[Banerjee et~al., 2014]{banerjee2014hierarchical}
Banerjee, S., Carlin, B.~P., and Gelfand, A.~E. (2014).
\newblock {\em Hierarchical modeling and analysis for spatial data}.
\newblock CRC Press.

\bibitem[Banerjee et~al., 2008]{banerjee2008gaussian}
Banerjee, S., Gelfand, A.~E., Finley, A.~O., and Sang, H. (2008).
\newblock Gaussian predictive process models for large spatial data sets.
\newblock {\em Journal of the Royal Statistical Society: Series B (Statistical
  Methodology)}, 70(4):825--848.

\bibitem[Bayarri et~al., 2007]{bayarri2007framework}
Bayarri, M.~J., Berger, J.~O., Paulo, R., Sacks, J., Cafeo, J.~A., Cavendish,
  J., Lin, C.-H., and Tu, J. (2007).
\newblock A framework for validation of computer models.
\newblock {\em Technometrics}, 49(2):138--154.

\bibitem[Cerbino and Trappe, 2008]{cerbino2008differential}
Cerbino, R. and Trappe, V. (2008).
\newblock Differential dynamic microscopy: probing wave vector dependent
  dynamics with a microscope.
\newblock {\em Physical review letters}, 100(18):188102.

\bibitem[Conti and O'Hagan, 2010]{conti2010bayesian}
Conti, S. and O'Hagan, A. (2010).
\newblock Bayesian emulation of complex multi-output and dynamic computer
  models.
\newblock {\em Journal of statistical planning and inference}, 140(3):640--651.

\bibitem[Cressie and Johannesson, 2008]{cressie2008fixed}
Cressie, N. and Johannesson, G. (2008).
\newblock Fixed rank kriging for very large spatial data sets.
\newblock {\em Journal of the Royal Statistical Society: Series B (Statistical
  Methodology)}, 70(1):209--226.

\bibitem[Cressie and Cassie, 1993]{cressie1993statistics}
Cressie, N.~A. and Cassie, N.~A. (1993).
\newblock {\em Statistics for spatial data}.
\newblock Wiley, New York.

\bibitem[Datta et~al., 2016]{datta2016hierarchical}
Datta, A., Banerjee, S., Finley, A.~O., and Gelfand, A.~E. (2016).
\newblock Hierarchical nearest-neighbor gaussian process models for large
  geostatistical datasets.
\newblock {\em Journal of the American Statistical Association},
  111(514):800--812.

\bibitem[Gelfand et~al., 2010]{gelfand2010handbook}
Gelfand, A.~E., Diggle, P., Guttorp, P., and Fuentes, M. (2010).
\newblock {\em Handbook of spatial statistics}.
\newblock CRC Press.

\bibitem[Gelfand et~al., 2004]{gelfand2004nonstationary}
Gelfand, A.~E., Schmidt, A.~M., Banerjee, S., and Sirmans, C. (2004).
\newblock Nonstationary multivariate process modeling through spatially varying
  coregionalization.
\newblock {\em Test}, 13(2):263--312.

\bibitem[Gerber et~al., 2018]{gerber1605predicting}
Gerber, F., Furrer, R., Schaepman-Strub, G., de~Jong, R., and Schaepman, M.
  (2018).
\newblock Predicting missing values in spatio-temporal satellite data.
\newblock {\em IEEE Transactions on Geoscience and Remote Sensing},
  56:2841--2853.

\bibitem[Gramacy, 2016]{gramacy2016lagp}
Gramacy, R.~B. (2016).
\newblock lagp: large-scale spatial modeling via local approximate gaussian
  processes in r.
\newblock {\em Journal of Statistical Software}, 72(1):1--46.

\bibitem[Gramacy and Apley, 2015]{gramacy2015local}
Gramacy, R.~B. and Apley, D.~W. (2015).
\newblock Local {G}aussian process approximation for large computer
  experiments.
\newblock {\em Journal of Computational and Graphical Statistics},
  24(2):561--578.

\bibitem[Gu, 2018]{gu2018jointly}
Gu, M. (2018).
\newblock Jointly robust prior for {G}aussian stochastic process in emulation,
  calibration and variable selection.
\newblock {\em Bayesian Analysis}, 14(1).

\bibitem[Gu and Shen, 2020]{gu2018generalized}
Gu, M. and Shen, W. (2020).
\newblock Generalized probabilistic principal component analysis of correlated
  data.
\newblock {\em Journal of Machine Learning Research}, 21(13).

\bibitem[Gu and Xu, 2020]{gu2020fast}
Gu, M. and Xu, Y. (2020).
\newblock Fast nonseparable {G}aussian stochastic process with application to
  methylation level interpolation.
\newblock {\em Journal of Computational and Graphical Statistics},
  29(2):250--260.

\bibitem[Guinness and Fuentes, 2017]{guinness2017circulant}
Guinness, J. and Fuentes, M. (2017).
\newblock Circulant embedding of approximate covariances for inference from
  gaussian data on large lattices.
\newblock {\em Journal of computational and Graphical Statistics},
  26(1):88--97.

\bibitem[Handcock and Stein, 1993]{handcock1993bayesian}
Handcock, M.~S. and Stein, M.~L. (1993).
\newblock A bayesian analysis of kriging.
\newblock {\em Technometrics}, 35(4):403--410.

\bibitem[Hartikainen and Sarkka, 2010]{hartikainen2010kalman}
Hartikainen, J. and Sarkka, S. (2010).
\newblock Kalman filtering and smoothing solutions to temporal gaussian process
  regression models.
\newblock In {\em Machine Learning for Signal Processing (MLSP), 2010 IEEE
  International Workshop on}, pages 379--384. IEEE.

\bibitem[Heaton et~al., 2017]{HeatonNonstationary2017}
Heaton, M.~J., Christensen, W.~F., and Terres, M.~A. (2017).
\newblock Nonstationary gaussian process models using spatial hierarchical
  clustering from finite differences.
\newblock {\em Technometrics}, 59(1):93--101.

\bibitem[Heaton et~al., 2019]{heaton2019case}
Heaton, M.~J., Datta, A., Finley, A.~O., Furrer, R., Guinness, J., Guhaniyogi,
  R., Gerber, F., Gramacy, R.~B., Hammerling, D., Katzfuss, M., et~al. (2019).
\newblock A case study competition among methods for analyzing large spatial
  data.
\newblock {\em Journal of Agricultural, Biological and Environmental
  Statistics}, 24(3):398--425.

\bibitem[Higdon et~al., 2008]{higdon2008computer}
Higdon, D., Gattiker, J., Williams, B., and Rightley, M. (2008).
\newblock Computer model calibration using high-dimensional output.
\newblock {\em Journal of the American Statistical Association},
  103(482):570--583.

\bibitem[Katzfuss, 2017]{katzfuss2017multi}
Katzfuss, M. (2017).
\newblock A multi-resolution approximation for massive spatial datasets.
\newblock {\em Journal of the American Statistical Association},
  112(517):201--214.

\bibitem[Katzfuss and Guinness, 2017]{katzfuss2017general}
Katzfuss, M. and Guinness, J. (2017).
\newblock A general framework for vecchia approximations of gaussian processes.
\newblock {\em arXiv preprint arXiv:1708.06302}.

\bibitem[Kaufman et~al., 2008]{kaufman2008covariance}
Kaufman, C.~G., Schervish, M.~J., and Nychka, D.~W. (2008).
\newblock Covariance tapering for likelihood-based estimation in large spatial
  data sets.
\newblock {\em Journal of the American Statistical Association},
  103(484):1545--1555.

\bibitem[Kennedy and O'Hagan, 2001]{kennedy2001bayesian}
Kennedy, M.~C. and O'Hagan, A. (2001).
\newblock Bayesian calibration of computer models.
\newblock {\em Journal of the Royal Statistical Society: Series B (Statistical
  Methodology)}, 63(3):425--464.

\bibitem[Kolda and Bader, 2009]{kolda2009tensor}
Kolda, T.~G. and Bader, B.~W. (2009).
\newblock Tensor decompositions and applications.
\newblock {\em SIAM review}, 51(3):455--500.

\bibitem[Lindgren et~al., 2011]{lindgren2011explicit}
Lindgren, F., Rue, H., and Lindstr{\"o}m, J. (2011).
\newblock An explicit link between gaussian fields and gaussian markov random
  fields: the stochastic partial differential equation approach.
\newblock {\em Journal of the Royal Statistical Society: Series B (Statistical
  Methodology)}, 73(4):423--498.

\bibitem[Nychka et~al., 2015]{Nychka2017LatticeKrig}
Nychka, D., Bandyopadhyay, S., Hammerling, D., Lindgren, F., and Sain, S.
  (2015).
\newblock A multiresolution gaussian process model for the analysis of large
  spatial datasets.
\newblock {\em Journal of Computational and Graphical Statistics},
  24(2):579--599.

\bibitem[Paulo, 2005]{paulo2005default}
Paulo, R. (2005).
\newblock Default priors for {G}aussian processes.
\newblock {\em Annals of statistics}, 33(2):556--582.

\bibitem[Paulo et~al., 2012]{paulo2012calibration}
Paulo, R., Garc{\'\i}a-Donato, G., and Palomo, J. (2012).
\newblock Calibration of computer models with multivariate output.
\newblock {\em Computational Statistics and Data Analysis}, 56(12):3959--3974.

\bibitem[Petris et~al., 2009]{petris2009dynamic}
Petris, G., Petrone, S., and Campagnoli, P. (2009).
\newblock {\em Dynamic linear models with R}.
\newblock Springer.

\bibitem[Rue et~al., 2009]{rue2009approximate}
Rue, H., Martino, S., and Chopin, N. (2009).
\newblock Approximate bayesian inference for latent gaussian models by using
  integrated nested laplace approximations.
\newblock {\em Journal of the Royal Statistical Society: Series B (Statistical
  Methodology)}, 71(2):319--392.

\bibitem[Sacks et~al., 1989]{sacks1989design}
Sacks, J., Welch, W.~J., Mitchell, T.~J., Wynn, H.~P., et~al. (1989).
\newblock Design and analysis of computer experiments.
\newblock {\em Statistical science}, 4(4):409--423.

\bibitem[S{\"a}rkk{\"a} and Hartikainen, 2012]{sarkka2012infinite}
S{\"a}rkk{\"a}, S. and Hartikainen, J. (2012).
\newblock Infinite-dimensional kalman filtering approach to spatio-temporal
  gaussian process regression.
\newblock In {\em International Conference on Artificial Intelligence and
  Statistics}, pages 993--1001.

\bibitem[Shen, 2017]{SShen2017}
Shen, S.~S. (2017).
\newblock {\em R programming for climate data analysis and visualization:
  computing and plotting for NOAA data applications}.
\newblock San Diego State University, USA.

\bibitem[S{\o}rbye and Rue, 2011]{sorbye2011simultaneous}
S{\o}rbye, S.~H. and Rue, H. (2011).
\newblock Simultaneous credible bands for latent gaussian models.
\newblock {\em Scandinavian Journal of Statistics}, 38(4):712--725.

\bibitem[Stein, 2014]{stein2014limitations}
Stein, M.~L. (2014).
\newblock Limitations on low rank approximations for covariance matrices of
  spatial data.
\newblock {\em Spatial Statistics}, 8:1--19.

\bibitem[Stroud et~al., 2017]{stroud2017bayesian}
Stroud, J.~R., Stein, M.~L., and Lysen, S. (2017).
\newblock Bayesian and maximum likelihood estimation for gaussian processes on
  an incomplete lattice.
\newblock {\em Journal of computational and Graphical Statistics},
  26(1):108--120.

\bibitem[Teh et~al., 2005]{seeger2005semiparametric}
Teh, Y.~W., Seeger, M., and Jordan, M.~I. (2005).
\newblock Semiparametric latent factor models.
\newblock In Cowell, R.~G. and Ghahramani, Z., editors, {\em Proceedings of the
  Tenth International Workshop on Artificial Intelligence and Statistics},
  volume~R5 of {\em Proceedings of Machine Learning Research}, pages 333--340.
  PMLR.

\bibitem[Tipping and Bishop, 1999]{tipping1999probabilistic}
Tipping, M.~E. and Bishop, C.~M. (1999).
\newblock Probabilistic principal component analysis.
\newblock {\em Journal of the Royal Statistical Society: Series B (Statistical
  Methodology)}, 61(3):611--622.

\bibitem[Vecchia, 1988]{vecchia1988estimation}
Vecchia, A.~V. (1988).
\newblock Estimation and model identification for continuous spatial processes.
\newblock {\em Journal of the Royal Statistical Society: Series B
  (Methodological)}, 50(2):297--312.

\bibitem[West and Harrison, 1997]{West1997}
West, M. and Harrison, P.~J. (1997).
\newblock {\em Bayesian Forecasting \& Dynamic Models}.
\newblock Springer Verlag, 2nd edition.

\bibitem[Whittle, 1963]{whittle1963stochastic}
Whittle, P. (1963).
\newblock Stochastic process in several dimensions.
\newblock {\em Bulletin of the International Statistical Institute},
  40(2):974--994.

\bibitem[Zammit-Mangion et~al., 2017]{zammit2017fixedrank}
Zammit-Mangion, Andrew, and Cressie, N. (2017).
\newblock Frk: An r package for spatial and spatio-temporal prediction with
  large datasets.
\newblock {\em arXiv preprint arXiv:1106.6251}.

\bibitem[Zimmerman, 1993]{zimmerman1993another}
Zimmerman, D.~L. (1993).
\newblock Another look at anisotropy in geostatistics.
\newblock {\em Mathematical Geology}, 25(4):453--470.

\end{thebibliography}
%

\newpage

\begin{center}
    {\bf \LARGE Supplementary materials}
\end{center}

\beginsupplement

\quad 

This supplementary materials contain three parts. The proof of Section \ref{sec:GOLF} is given in Section \ref{sec:proof}. The additional numerical results for the simulated studies and real applications are given in Section \ref{sec:supp_simulation}
 and  Section \ref{sec:supp_real_data}, respectively.

\section{Proofs for Section \ref{sec:GOLF}}
\label{sec:proof}
\subsection{Auxiliary facts}
\begin{enumerate}[1.]
\item Let $\mathbf A$ and $\mathbf B$ be matrices, 
\[ (\mathbf A \otimes \mathbf B)^T=  (\mathbf A^T  \otimes \mathbf B^T); \]
further assuming $\mathbf A$ and $\mathbf B$ are invertible, 
\[ (\mathbf A \otimes \mathbf B)^{-1}=  \mathbf A^{-1} \otimes  \mathbf B^{-1}. \]
\label{item:matrix_kronecker_trans_inv}
\item Let $\mathbf A$, $\mathbf B$, $\mathbf C$ and $\mathbf D$ be the matrices such that the products $\mathbf A\mathbf C$ and $\mathbf B\mathbf D$ are matrices,  
\[(\mathbf A \otimes \mathbf B)(\mathbf C\otimes \mathbf D)= (\mathbf A \mathbf C) \otimes  (\mathbf B \mathbf D).\]
\label{item:matrix_kronecker_otimes}
\item For matrices $\mathbf A$,  $\mathbf B$ and  $\mathbf C$, 
\begin{align*}
(\mathbf C^T \otimes \mathbf A) \mbox{vec}(\mathbf B)= \mbox{vec}(\mathbf A \mathbf B \mathbf C);
\end{align*}
further assuming $\mathbf A^T \mathbf B$ is a matrix, 
\begin{align*}
\tr(\mathbf A^T \mathbf B )= \mbox{vec}(\mathbf A)^T \mbox{vec}(\mathbf B).
\end{align*}
\label{item:vectorization}
\item For any invertible $n \times n$ matrix $\mathbf C$,
\[|\mathbf C+\mathbf A \mathbf B|=|\mathbf C||\mathbf I_n + \mathbf B \mathbf C^{-1} \mathbf A |. \]
\label{item:determinant}
\end{enumerate}

\subsection{Proofs for Section \ref{subsec:orthogonal_independence}}


The following denotation are used in the proof: $\mathbf {Y}_{-M}=\mathbf Y-\mathbf M$, $\mathbf { Y}_{v,-M}=vec(\mathbf Y-\mathbf M)$, $\mathbf Z_{vt} =vec(\mathbf Z^T)$ and $\mathbf A_v=[\mathbf I_{n_2} \otimes \mathbf a_1,..., \mathbf I_{n_2} \otimes \mathbf a_d]$. Let $\bm \Sigma_v$ be an $n_2 d\times n_2 d$ matrix where the $l$th diagonal block is $\bm \Sigma_l$. Denote $\etr(.)=\exp(\tr(.))$.  

\begin{proof}[Proof of Equation \ref{equ:marginal_lik}]
Denote $C_Y=(2\pi \sigma^2_0)^{-\frac{n_1 n_2}{2}  }   \prod^d_{l=1} \left|\bm \Sigma_l/\sigma^2_0+ \mathbf I_{n_2} \right|^{-1/2}$. 
Directly marginalizing out $\mathbf Z$, one has
\begin{align*}
&p(\mathbf Y \mid  \bm{\Theta}  )\\
=& C_Y \exp\left(-\frac{  \mathbf {Y}^T_{v ,-M}\left(\mathbf{I}_{n_1 n_2} - \sum_{l=1}^{d} (\sigma_0^2 \bm{\Sigma}_l^{-1} + \mathbf{I}_{n_2})^{-1} \otimes (\mathbf{a}_l \mathbf{a}_l^T) \right)  \mathbf {Y}_{v ,-M}}{2\sigma^2_0} \right) \\
=& C_Y\exp\left(- \frac{ \mathbf {  Y}^T_{v,-M} \mathbf { Y}_{v,-M} -  \mathbf {Y}^T_{v ,-M}\sum^d_{l=1}\mbox{vec}( \mathbf a_l \mathbf a^T_l \mathbf {Y}_{v ,-M}\bm (\sigma^2_0 \bm \Sigma^{-1}_l+ \mathbf I_{n_2})^{-1} ) }{2\sigma^{2}_0}\right) \\
=&C_Y\etr\left(- \frac{ \mathbf { Y}^T_{-M} \mathbf { Y}_{-M} -  \sum^d_{l=1}\mathbf { \tilde y}_{l} \mathbf { \tilde y}^T_{l} \bm (\sigma^2_0 \bm \Sigma^{-1}_l+ \mathbf I_{n_2})^{-1}   }{2\sigma^{2}_0 } \right)\\
=& C_Y \exp\left( -\frac{\sum^d_{l=1} \mathbf { \tilde y}^T_{l} (\bm \Sigma_l/\sigma^2_0 + \mathbf I_{n_2})^{-1}\mathbf { \tilde y}_l +  \sum^{n_1}_{l=d+1}\mathbf { \tilde y}^T_{l}\mathbf { \tilde y}_{l}
}{2\sigma^{2}_0} \right),
\label{equ:last_line_pf}
\nonumber 
\end{align*}
where the first equation is based on  Lemma 1 and the Woodbury matrix identity (to compute the normalizing constant $C_Y$); the second and third equations are from fact \ref{item:vectorization}; the fourth equation is from Woodbury matrix identity. The Equation (\ref{equ:marginal_lik}) follows immediately.

\end{proof}

\begin{proof}[Proof of Corollary \ref{cor:ind_post}]

The proof is implied by the proof of Theorem 4 in  \citep{gu2018generalized}. For completeness of this article, we include the proof  below.

From Equation (\ref{equ:GOLF_model}) and Equation (\ref{equ:Z}), we have 
\begin{align*}
p(\mathbf Z_{vt} \mid \mathbf Y, \bm { \Theta} )&\propto \exp\left( \frac{ (\mathbf {Y}_{v,-M} -\mathbf { A}_v \mathbf Z_{vt} )^T (\mathbf {Y}_{v,-M}  -\mathbf { A}_v \mathbf Z_{vt})}{2 \sigma^2_0}\right) \exp\left( -\frac{1}{2} \mathbf Z^T_{vt} \bm { \Sigma}^{-1}_v \mathbf Z_{vt} \right) \\
&\propto \exp\left\{ -\frac{1}{2}(\mathbf Z_{vt} -\bm \mu_{Z_{vt}} )^T \left( \frac{\mathbf { A}^T_v \mathbf { A}_v}{ \sigma^2_0 }+\bm { \Sigma}_v^{-1} \right) (\mathbf Z_{vt} -\bm \mu_{Z_{vt}})\right\},
\end{align*}
where $\bm \mu_{Z_{vt}} =  ({\mathbf { A}^T_v \mathbf { A}_v} +  \sigma^2_0 \bm { \Sigma}_v^{-1})^{-1}  \mathbf { A}^T_v \mathbf {Y}_{v,-M} $. Note  $\mathbf { A}^T_v \mathbf { A}_v=\mathbf I_{n_2 d} $, from which we have
\begin{equation}
\mathbf Z_{vt} \mid \mathbf Y, \bm { \Theta} \sim \mbox{MN}\left(\bm \mu_{Z_{vt}} , \,  \left(\frac{1}{ {\sigma}^2_0 } \mathbf I_{n_2 d}+\bm { \Sigma}_v^{-1}\right)^{-1} \right).
\label{equ:Z_posterior} 
\end{equation}

 
 Based on vectorization, one has 
\begin{align}
\bm \mu_{Z_{vt}}&= \left( {\begin{array}{*{20}{c}}
\left( \sigma^2_0 \bm {  \Sigma}^{-1}_1 +{\mathbf I_{n_2}}\right)^{-1}\otimes  \mathbf { a}^T_1  \\
\vdots \\
\left( \sigma^2_0 \bm {  \Sigma}^{-1}_d +{\mathbf I_{n_2}}\right)^{-1} \otimes  \mathbf { a}^T_d \\
 \end{array} } \right) \mbox{vec}({\mathbf Y})
 = \left( {\begin{array}{*{20}{c}}
\mbox{vec}\left(\mathbf { a}^T_1 \mathbf Y_{-M} \left( \sigma^2_0 \bm {  \Sigma}^{-1}_1 +{\mathbf I_{n_2}} \right)^{-1}\right)  \\
\vdots \\
\mbox{vec}\left(\mathbf { a}^T_d \mathbf Y_{-M} \left( \sigma^2_0 \bm {  \Sigma}^{-1}_d +{\mathbf I_{n_2}}\right)^{-1}\right)   \\
 \end{array} } \right) \nonumber\\
 &=\mbox{vec} \left( {\begin{array}{*{20}{c}}
\mathbf { a}^T_1 \mathbf Y_{-M} \left( \sigma^2_0 \bm {  \Sigma}^{-1}_1 +{\mathbf I_{n_2}} \right)^{-1}  \\
\vdots \\
\mathbf { a}^T_d \mathbf Y_{-M} \left( \sigma^2_0 \bm {  \Sigma}^{-1}_d +{\mathbf I_{n_2}}\right)^{-1}   \\
 \end{array} } \right)^T. 
 \label{equ:Z_hat_posterior}
 \end{align}
Note that the covariance matrix of $\bm \mu_{Z_{vt}}$ is a block diagonal matrix. The results follow by Equation (\ref{equ:Z_hat_posterior}) and the Woodbury matrix identity.

\end{proof}

\subsection{Proofs for Section \ref{subsec:mean_function}}
Note $\mathbf A_F=[\mathbf A_s,\mathbf A_c]=[\mathbf a_1,\mathbf a_2,...,\mathbf a_{n_1}]$, where $\mathbf A_c$ is an $n_1\times (n_1-d)$ matrix of the orthogonal complement of $\mathbf A_s$. We need the following lemma to prove Theorem \ref{thm:marginal_post_B_1_B_2}.

\begin{lemma}
After marginalizing out the factors $\mathbf Z$, we have the marginal posterior  distribution of  the transformed regression coefficients, 
\begin{enumerate}
\item (Marginal distribution of transformed row regression coefficients). Assume  $\mathbf M=\mathbf H_1 \mathbf B_1$ and the objective prior $\pi(\mathbf B_1)\propto 1$ for $\mathbf B_1$. Let $  \mathbf { \tilde B}_1=[\bm {\tilde b}_{1,1},...,\bm {\tilde b}_{1,n_1}]=\mathbf B_1^T \mathbf H_1^T \mathbf A_{F}$ be an $n_2\times n_1$ matrix of transformed coefficients. Assume the marginal posterior distribution of $\mathbf { \tilde B}_1$ follows
	\begin{equation}
	p(\mathbf { \tilde B}_1 \mid \mathbf Y, \bm \Theta_{-B_1})= \prod^{d}_{l=1} \mathcal{PN}(\bm {\tilde b}_{1,l}; \mathbf {\tilde y}_l, \bm {\tilde \Sigma}_l ) \prod^{n_1}_{l=d+1} \mathcal{PN}(\bm {\tilde b}_{1,l}; \mathbf {\tilde y}_l, \sigma^2_0 \mathbf I_{n_2}),  
	\label{equ:tilde_B1}
	\end{equation}
	where $\mathbf{\tilde y}_l$ is defined in equation (\ref{equ:marginal_lik}) and $\bm{\tilde \Sigma}_l$ is defined in corollary \ref{cor:ind_post}. Then we can sample $(\mathbf { B}_1\mid \mathbf Y, \bm \Theta_{-B_1})$ by $(\mathbf H^T_1 \mathbf H_1)^{-1}\mathbf H^T_1  \mathbf A_F \mathbf {\tilde B_1}^T$, where  $\mathbf {\tilde B_1}^T$ are sampled from the $p(\mathbf { \tilde B}_1 \mid \mathbf Y, \bm \Theta_{-B_1})$ in equation (\ref{equ:tilde_B1}). 
	
	

\item (Marginal distribution of transformed column regression coefficients). 
Assume $\mathbf M=(\mathbf H_2 \mathbf B_2)^T$ and the objective prior $\pi(\mathbf B_2)\propto 1$ for the regression parameters $\mathbf B_2$. Let $  \mathbf { \tilde B}_2=[\bm {\tilde b}_{2,1},...,\bm {\tilde b}_{2,n_1}]=\mathbf B_2 \mathbf A_{F}$ be a $q_2\times n_1$ matrix. The marginal posterior distribution of  $\mathbf {\tilde B}_2$ follows

	\begin{equation}
	p(\mathbf { \tilde B}_2 \mid \mathbf Y, \bm \Theta_{-B_2})= \prod^{n_1}_{l=1} \mathcal{PN}(\mathbf {\tilde b}_{2,l}; \bm \mu_{\tilde b_{2,l}}, \bm \Sigma_{\tilde b_{2,l}} ),  
	\label{equ:tilde_B2}
	\end{equation}
	where  $\bm \mu_{\tilde b_{2,l}}= (\mathbf H^T_2 \bm {\tilde \Sigma}^{-1}_l  \mathbf H_2)^{-1}\mathbf H^T_2\bm {\tilde \Sigma}^{-1}_l \mathbf {\tilde y}_l$ and $\bm \Sigma_{\tilde b_{2,l}}= (\mathbf H^T_2 \bm {\tilde \Sigma}^{-1}_l  \mathbf H_2)^{-1}$ for $l=1,...,d$; $\bm \mu_{\tilde b_{2,l}}=(\mathbf H^T_2\mathbf H_2 )^{-1} \mathbf H^T_2  \mathbf {\tilde y}_l$ and $\bm \Sigma_{\tilde b_{2,l}}= \sigma^2_0(\mathbf H^T_2 \mathbf H_2)^{-1}$ for $l=d+1,...,n_1$.

	
\end{enumerate}
\label{lemma:marginal_post_B_1_B_2}
\end{lemma}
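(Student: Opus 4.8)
The plan is to read both statements off the factorised marginal likelihood \eqref{equ:marginal_lik} by reparametrising the regression coefficients through the full orthogonal loading matrix $\mathbf A_F=[\mathbf A_s,\mathbf A_c]$, which by Assumption \ref{assumption:A} satisfies $\mathbf A_F^T\mathbf A_F=\mathbf A_F\mathbf A_F^T=\mathbf I_{n_1}$. Under the flat prior the marginal posterior of the coefficients is proportional to \eqref{equ:marginal_lik}, i.e. a product over the $n_1$ projected residuals $(\mathbf Y-\mathbf M)^T\mathbf a_l$, so each claim reduces to tracking how the relevant coefficient matrix enters the $l$-th Gaussian factor and then either completing the square or invoking the symmetry of the Gaussian kernel in its two arguments.

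For the row case (part 1) I would substitute $\mathbf M=\mathbf H_1\mathbf B_1$, so the $l$-th projected residual becomes $\tilde{\mathbf y}_l-\tilde{\mathbf b}_{1,l}$, where $\tilde{\mathbf y}_l=\mathbf Y^T\mathbf a_l$ and $\tilde{\mathbf b}_{1,l}=\mathbf B_1^T\mathbf H_1^T\mathbf a_l$ is the $l$-th column of $\tilde{\mathbf B}_1=\mathbf B_1^T\mathbf H_1^T\mathbf A_F$. The key step is the symmetry $\mathcal{PN}(\tilde{\mathbf y}_l-\tilde{\mathbf b}_{1,l};\mathbf 0,\bm{\tilde\Sigma}_l)=\mathcal{PN}(\tilde{\mathbf b}_{1,l};\tilde{\mathbf y}_l,\bm{\tilde\Sigma}_l)$, which reinterprets each factor as a density in the transformed coefficient and, with the flat prior, produces the product form \eqref{equ:tilde_B1}. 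To pass from $\tilde{\mathbf B}_1$ back to $\mathbf B_1$ I would use $\mathbf A_F\tilde{\mathbf B}_1^T=\mathbf A_F\mathbf A_F^T\mathbf H_1\mathbf B_1=\mathbf H_1\mathbf B_1$, whence $(\mathbf H_1^T\mathbf H_1)^{-1}\mathbf H_1^T\mathbf A_F\tilde{\mathbf B}_1^T=\mathbf B_1$, which is exactly the advertised sampler.

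The column case (part 2) is cleaner, because the transform $\mathbf B_2\mapsto\tilde{\mathbf B}_2=\mathbf B_2\mathbf A_F$ is a genuine linear bijection (as $\mathbf A_F$ is invertible), so its constant Jacobian carries the flat prior on $\mathbf B_2$ to a flat prior on $\tilde{\mathbf B}_2$ and the $n_1$ columns $\tilde{\mathbf b}_{2,l}=\mathbf B_2\mathbf a_l$ are free and, under the product likelihood, a posteriori independent. Substituting $\mathbf M=(\mathbf H_2\mathbf B_2)^T$ turns the $l$-th residual into $\mathbf Y^T\mathbf a_l-\mathbf H_2\tilde{\mathbf b}_{2,l}$, so the $l$-th factor of \eqref{equ:marginal_lik} is precisely the likelihood of a Gaussian linear model in $\tilde{\mathbf b}_{2,l}$ with design $\mathbf H_2$ and error covariance $\bm{\tilde\Sigma}_l$ (respectively $\sigma_0^2\mathbf I_{n_2}$ when $l>d$). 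Completing the square gives the generalised least squares normal with mean $(\mathbf H_2^T\bm{\tilde\Sigma}_l^{-1}\mathbf H_2)^{-1}\mathbf H_2^T\bm{\tilde\Sigma}_l^{-1}\tilde{\mathbf y}_l$ and covariance $(\mathbf H_2^T\bm{\tilde\Sigma}_l^{-1}\mathbf H_2)^{-1}$, and the product over $l$ is \eqref{equ:tilde_B2}.

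The step I expect to be most delicate is the reconciliation in part 1: unlike the column case, $\mathbf B_1\mapsto\tilde{\mathbf B}_1$ is not onto when $q_1<n_1$, so \eqref{equ:tilde_B1} is a distribution on the full $n_2\times n_1$ space while $\mathbf B_1$ carries only $q_1 n_2$ coordinates. The care needed is to verify that applying the least squares map $(\mathbf H_1^T\mathbf H_1)^{-1}\mathbf H_1^T\mathbf A_F(\cdot)^T$ pushes \eqref{equ:tilde_B1} forward to the intended marginal posterior of $\mathbf B_1$, i.e. that collapsing the components of $\tilde{\mathbf B}_1$ transverse to the column space of $\mathbf H_1$ is exactly what the projection accomplishes. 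The orthogonality $\mathbf A_s\mathbf A_s^T+\mathbf A_c\mathbf A_c^T=\mathbf I_{n_1}$ is what lets this be carried out without ever forming $\mathbf A_c$, and it is also the identity that splits the product form into the two explicit samplers \eqref{equ:sample_B_1}--\eqref{equ:sample_B_2} of Theorem \ref{thm:marginal_post_B_1_B_2}; in the column case the analogous bookkeeping is milder, amounting only to tracking the normalising constants through the completion of the square and confirming the unit Jacobian of the reparametrisation.
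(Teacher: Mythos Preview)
Your proposal is correct and follows essentially the same route as the paper: substitute the mean into the factorised marginal likelihood \eqref{equ:marginal_lik}, recognise each factor as a Gaussian in the $l$-th transformed coefficient $\tilde{\mathbf b}_{1,l}$ or $\tilde{\mathbf b}_{2,l}$, and complete the square (part 2) or use the symmetry of the Gaussian kernel (part 1). For the delicate step you flag in part 1 --- the map $\mathbf B_1\mapsto\tilde{\mathbf B}_1$ not being onto when $q_1<n_1$ --- the paper's device is to augment $\mathbf B_1$ with the last $n_1-q_1$ columns of $\tilde{\mathbf B}_1$ so that the transformation becomes a bijection with constant Jacobian, and then check that the resulting density in $\mathbf B_1$ (after dropping the auxiliary coordinates) matches the true marginal posterior coming from \eqref{equ:marginal_lik} with $\pi(\mathbf B_1)\propto 1$; this is exactly the pushforward verification you anticipated.
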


\begin{proof}[Proof of Lemma \ref{lemma:marginal_post_B_1_B_2}] 

\begin{enumerate}
\item (Marginal distribution of transformed row regression coefficients). 

Denote $(\mathbf{B}^{aug}_1)=[\mathbf B_1^T, \mathbf{\tilde B}_{1,(q_1+1):n_1}]^T$, where $\mathbf{\tilde B}_{1,(q_1+1):n_1}$ are the last $n_1-q_1$ columns of $\mathbf{\tilde B}_1$. Denote $p_{trans}(\mathbf{B}_1  \mid \mathbf{Y},\bm \Theta_{-\mathbf{B}_1})$ and $p_{trans}(\mathbf{B}^{aug}_1  \mid \mathbf{Y},\bm \Theta_{-\mathbf{B}_1})$ the transformed marginal posterior distribution of $\mathbf{B}_1$ and $\mathbf {B}^{aug}_1$ derived by transforming 	$p(\mathbf { \tilde B}_1 \mid \mathbf Y, \bm \Theta_{-B_1})$ in (\ref{equ:tilde_B1}). We have
\begin{align*}
&p_{trans}(\mathbf{B}_1  \mid \mathbf{Y},\bm \Theta_{-\mathbf{B}_1}) \propto p_{trans}(\mathbf{B}^{aug}_1  \mid \mathbf{Y},\bm \Theta_{-\mathbf{B}_1})=p(\mathbf{\tilde B}_1  \mid \mathbf{Y},\bm \Theta_{-\mathbf{B}_1}) \left| \frac{d \mathbf{\tilde B}_1}{ d \mathbf{B}^{aug}_1 }\right|\\
\propto& \exp\left\{-\frac{1}{2}\sum_{l=1}^d (\mathbf{\tilde b}_{1,l}-\mathbf{\tilde{y}}_l )^T\bm {\tilde \Sigma}^{-1}_l(\mathbf{\tilde b}_{1,l}-\mathbf{\tilde{y}}_l )-\frac{1}{2\sigma_0^2}\sum_{l=d+1}^{n_1}(\bm {\tilde b}_{1,l} -\mathbf{\tilde{y}}_l)^T(\bm {\tilde b}_{1,l}-\mathbf{\tilde{y}}_l) \right\} \\
\propto & \exp\left\{-\frac{1}{2}\sum_{l=1}^d \mathbf{a}_l^T(\mathbf{Y}-\mathbf{H}_1 \mathbf{B}_1)\bm {\tilde \Sigma}^{-1}_l(\mathbf{Y}-\mathbf{H}_1\mathbf{B}_1)^T\mathbf{a}_l \right.\\
&\hspace{1in} \left.-\frac{1}{2\sigma_0^2}\sum_{l=d+1}^{n_1}\mathbf{a}_l^T(\mathbf{Y}-\mathbf{H}_1 \mathbf{B}_1)(\mathbf{Y}-\mathbf{H}_1 \mathbf{B}_1)^T\mathbf{a}_l\right\},
\end{align*}
where the last line is the same as the posterior distribution of $\mathbf B_1$ based on the marginal likelihood in equation (\ref{equ:marginal_lik}) and the prior distribution $\pi(\mathbf{B}_1) \propto 1$. 
Thus if one sample $\mathbf {\tilde B_1}$ from (\ref{equ:tilde_B1}), one can obtain the sample for $(\mathbf{B}_1 \mid \mathbf{Y},\bm \Theta_{-\mathbf{B}_1}) $ through $(\mathbf H^T_1 \mathbf H_1)^{-1}\mathbf H^T_1  \mathbf A_F \mathbf {\tilde B_1}^T$.

\item (Marginal distribution of transformed column regression coefficients). 

Since $\pi(\mathbf{B}_2) \propto 1$ is a Jeffreys prior,   and $\mathbf{\tilde{B}}_2$ is a linear transformation of $\mathbf B_2$ with the same dimension, we have $\pi(\mathbf{\tilde{B}}_2) \propto 1$.

Based on the marginal likelihood in  equation (\ref{equ:marginal_lik}) and the prior distribution, the posterior distribution of $\mathbf{\tilde{B}}_2$ follows:
\begin{align*}
& p(\mathbf{\tilde{B}}_2 \mid \mathbf{Y},\bm \Theta_{-\mathbf{B}_2})  \\
\propto & \exp\left\{-\frac{1}{2}\sum_{l=1}^d \mathbf{a}_l^T(\mathbf{Y}-\mathbf{B}_2^T \mathbf{H}_2^T)\bm {\tilde \Sigma}^{-1}_l(\mathbf{Y}-\mathbf{B}_2^T \mathbf{H}_2^T)^T\mathbf{a}_l \right.\\
&  \hspace{1in} \left. -\frac{1}{2\sigma_0^2}\sum_{l=d+1}^{n_1}\mathbf{a}_l^T(\mathbf{Y}-\mathbf{B}_2^T \mathbf{H}_2^T)(\mathbf{Y}-\mathbf{B}_2^T \mathbf{H}_2^T)^T\mathbf{a}_l\right\} \\
\propto& \exp\left\{-\frac{1}{2}\sum_{l=1}^d (\mathbf{\tilde{y}}_l-\mathbf{H}_2 \mathbf {\tilde b}_{2,l} )^T\bm {\tilde \Sigma}^{-1}_l(\mathbf{\tilde{y}}_l-\mathbf{H}_2 \mathbf {\tilde b}_{2,l} ) \right. \\
& \hspace{1in} \left. -\frac{1}{2\sigma_0^2}\sum_{l=d+1}^{n_1}(\mathbf{\tilde{y}}_l-\mathbf{H}_2 \mathbf {\tilde b}_{2,l})^T(\mathbf{\tilde{y}}_l-\mathbf{H}_2 \mathbf {\tilde b}_{2,l} )\right\} \\
\propto& \exp\left\{-\frac{1}{2}\sum_{l=1}^d \left(\mathbf {\tilde b}_{2,l} -\bm \mu_{\tilde b_{2,l}}\right)^T \mathbf H^T_2 \bm {\tilde \Sigma}^{-1}_l  \mathbf H_2 \left(\mathbf {\tilde b}_{2,l} -\bm \mu_{\tilde b_{2,l}}\right)\right. \\ 
& \left.\hspace{1in}-\frac{1}{2\sigma_0^2}\sum_{l=d+1}^{n_1}\left(\mathbf {\tilde b}_{2,l} -\bm \mu_{\tilde b_{2,l}}\right)^T \mathbf{H}_2^T \mathbf{H}_2 \left(\mathbf {\tilde b}_{2,l}-\bm \mu_{\tilde b_{2,l}}\right) \right\},
\end{align*}
from which equation (\ref{equ:tilde_B2}) follows.

\end{enumerate}

\end{proof}

We are ready to prove Theorem \ref{thm:marginal_post_B_1_B_2}. 

\begin{proof}[Proof of Theorem \ref{thm:marginal_post_B_1_B_2}]
After marginalizing out $\mathbf{Z}$, we have

\begin{enumerate}
\item (Row regression coefficients). 

From Lemma \ref{lemma:marginal_post_B_1_B_2},   the posterior mean of $	(\mathbf { \tilde B}_1 \mid \mathbf Y, \bm \Theta_{-\mathbf{B}_1})$ is $\mathbf Y^T \mathbf A_F$, where $\mathbf{A}_F := [\mathbf{A}_s, \mathbf{A}_c]$.
 We denote the centered $\mathbf{\tilde B}_{1}$ by $\mathbf{\tilde B}_{1,0}= [\mathbf{\tilde B}_{1,0,s}, \mathbf{\tilde B}_{1,0,c}] = \mathbf{\tilde B}_1 - \mathbf Y^T \mathbf A_F$, where $\mathbf{\tilde B}_{1,0,s}$ is the first $d$ columns of $\mathbf{\tilde B}_{1,0}$ and $\mathbf{\tilde B}_{1,0,c}$ is the last $(n_1-d)$ columns of $\mathbf{\tilde B}_{1,0}$. Let $\tilde{{\mathbf{b}}}_{1,0,l}$ be the $l$-th column of $\mathbf{\tilde B}_{1,0}$. Then the posterior mean of $(\mathbf{B}_1 \mid \mathbf{Y}, \bm{\Theta}_{-\mathbf{B}_1})$ can be calculated below
\begin{align*}
    \mathbf{\hat{B}}_1  &  =  \E ( \mathbf{B}_1 \mid \mathbf{Y}, \bm{\Theta}_{-\mathbf{B}_1}) = \E\left((\mathbf{H}_1^T\mathbf{H}_1)^{-1} \mathbf{H}_1^T \mathbf{A}_F \mathbf{\tilde B}_1^T \mid \mathbf{Y}, \bm{\Theta_{-\mathbf{B}_1}} \right) \\
     & = (\mathbf{H}_1^T\mathbf{H}_1)^{-1} \mathbf{H}_1^T \mathbf{A}_F \mathbf{A}_F^T \mathbf{Y} = (\mathbf{H}_1^T\mathbf{H}_1)^{-1} \mathbf{H}_1^T \mathbf{Y}
\end{align*}


Note $\mathbf B_1=(\mathbf H^T_1\mathbf H_1)^{-1}\mathbf H_1^T \mathbf A_F \mathbf{\tilde B}_1^T$, one has 
\begin{align*}
    \mathbf{B}_1 - \mathbf{\hat{B}}_1 & = (\mathbf H^T_1\mathbf H_1)^{-1} \mathbf H_1^T \mathbf A_F (\mathbf { \tilde B}_{1,0})^T = (\mathbf H^T_1\mathbf H_1)^{-1} \mathbf H_1^T \left( \mathbf{A}_s \mathbf{\tilde B}_{1,0,s}^T + \mathbf{A}_c \mathbf{\tilde B}_{1,0,c}^T\right)
\end{align*}
where $ \mathbf {\tilde B}_{1,0,s}$ is a $n_2 \times d$ matrix with the $lth$ column independently sampled from $\mathcal N(\mathbf 0, \bm {\tilde \Sigma}_l)$ for $l=1,...,d$. For the distribution of $\mathbf{A}_c \mathbf{\tilde B}^T_{1,0,c}$, using part 1 of Lemma \ref{lemma:marginal_post_B_1_B_2}, we have
\begin{align*}
p(\mathbf{A}_c\mathbf{\tilde B}^T_{1,0,c}\mid \mathbf{Y}, \bm{\Theta}_{-\mathbf{B}_1})&\propto \exp\left\{-\frac{1}{2\sigma_0^2} \tr\left(\mathbf A^T_c \mathbf{\tilde B}_{1,0,c}\mathbf{\tilde B}^T_{1,0,c}\mathbf A_c  \right)\right\} \\
&\propto \exp\left\{-\frac{1}{2\sigma_0^2}\tr\left((\mathbf{I}_{n_1} - \mathbf{A}_s \mathbf{A}_s^T) \mathbf{\tilde B}_{1,0,c} \mathbf{\tilde B}_{1,0,c}^T\right)\right\}.
\end{align*}
Thus we can sample $\mathbf{A}_c \mathbf{\tilde B}^T_{1,0,c}$ by  $\sigma_0 (\mathbf{I}_{n_1} - \mathbf{A}_s\mathbf{A}_s^T)\mathbf{Z}_{0,1}$, where $\mathbf Z_{0,1}$ is an $n_1 \times n_2$ matrix with each entry independently sampled from standard normal distribution. The results soon follow.

\item (Column regression coefficients). 


We first compute the posterior mean of $(\mathbf{B}_2\mid \mathbf{Y}, \bm{\Theta}_{-\mathbf{B}_2})$ below
\begin{align*}
    \mathbf{\hat{B}}_2 & = \E(\mathbf{B}_2 \mid \mathbf{Y}, \bm{\Theta}_{-\mathbf{B}_2}) = \E( \mathbf{\tilde B}_{2} \mathbf{A}_F^T \mid \mathbf{Y}, \bm{\Theta_{-\mathbf{B}_2}})  \\
    & = \sum^d_{l=1} (\mathbf H^T_2 \bm{\tilde \Sigma}_l^{-1} \mathbf H_2)^{-1} \mathbf H^T_2 \bm{\tilde \Sigma}^{-1}_l \mathbf Y^T\mathbf a_l \mathbf a^T_l + (\mathbf H^T_2 \mathbf H_2)^{-1} \mathbf H^T_2 \mathbf Y^T (\mathbf I_{n_1}-\mathbf A_s \mathbf A^T_s) 
\end{align*}



We denote the centered $\mathbf{\tilde B}_{2}$ by
$\mathbf{\tilde B}_{2,0}=[ \mathbf{\tilde B}_{2,0,s}, \mathbf{\tilde B}_{2,0,c}]$. We have 
\begin{align*}
    \mathbf{B}_2 - \mathbf{\hat{B}}_2 = & \mathbf { \tilde B}_{2,0} \mathbf{A}_F^T =  \mathbf{\tilde B}_{2,0,s} \mathbf{A}_s^T + \mathbf{\tilde B}_{2,0,c} \mathbf{A}_c^T
    \end{align*}
where $\mathbf {\tilde B}_{2,0,s}$ is a $q_2\times d$ matrix with the $l$th column independently sampled from $\mathcal N(\mathbf 0, (\mathbf H^T_2 \bm {\tilde \Sigma}^{-1}_l  \mathbf H_2)^{-1})$ for $l=1,...,d$. For the distribution of  $\mathbf{\tilde B}_{2,0,c} \mathbf{A}_c^T$, we have
\begin{align*}
p(\mathbf{\tilde B}_{2,0,c} \mathbf{A}_c^T\mid \mathbf{Y}, \bm{\Theta}_{-\mathbf{B}_2})\propto  & exp\left\{-\frac{1}{2\sigma_0^2}\tr\left(\mathbf{A}_c \mathbf{A}_c^T \mathbf{\tilde  B}_{2,0,c} \mathbf{H}_2^T \mathbf{H}_2 \mathbf{\tilde  B}_{2,0,c}^T\right)\right\} \\
    \propto & exp\left\{-\frac{1}{2\sigma_0^2}\tr\left((\mathbf{I}_{n_1} - \mathbf{A}_s \mathbf{A}_s^T) \mathbf{\tilde B}_{2,0,c} \mathbf{H}_2^T \mathbf{H}_2 \mathbf{\tilde  B}_{2,0,c}^T\right)\right\}.
\end{align*}
Thus we can sample $\mathbf{\tilde B}_{2,0,c} \mathbf{A}_c^T$ by $\sigma_0 (\mathbf{I}_{n_1} - \mathbf{A}_s\mathbf{A}_s^T)\mathbf{Z}_{0,2}^T \mathbf{L}_{H_2}^T$, where $\mathbf L_{H_2}$ is a $q_2 \times q_2$ matrix such that  $\mathbf L_{H_2}\mathbf L^T_{H_2}=(\mathbf H^T_2 \mathbf H_2)^{-1}$ and $\mathbf Z_{0,2}$ is a $q_2 \times n_1$ matrix with each entry independently sampled from standard normal distribution.

\end{enumerate}

\end{proof}

\begin{lemma}
Assume $\mathbf M=\mathbf H_1 \mathbf B_1 + (\mathbf H_2 \mathbf B_2)^T$ and let the objective prior $\pi(\mathbf B_1, \mathbf B_2)\propto 1$ for the regression parameters $\mathbf B_1$ and $\mathbf B_2$. Denote $  \mathbf { \tilde B}_1=[ \tilde {\mathbf{b}}_{1,1},..., \tilde {\mathbf{b}}_{1,n_1}]= \mathbf B_1^T \mathbf H_1^T \mathbf A_{F}$ and $\mathbf { \tilde B}_2=[ \tilde {\mathbf{b}}_{2,1},..., \tilde {\mathbf{b}}_{2,n_1}]=\mathbf B_2 \mathbf A_{F}$. 
\begin{enumerate}
\item After marginalizing out $\mathbf Z$ and $\mathbf B_1$, assume the marginal posterior distribution of $\mathbf { \tilde B}_1$ follows
	\begin{equation}
	p(\mathbf { \tilde B}_1 \mid \mathbf Y, \bm \Theta_{-\mathbf{B}_1, -\mathbf{B}_2})= \prod^{n_1}_{l=1} \mathcal{PN}(\bm {\tilde b}_{1,l}; \mathbf {\tilde y}_l,  \mathbf Q_{1,l} ).  
	\label{equ:tilde_B1_marginalize_B2}
	\end{equation}
where $\mathbf Q_{1,l}=\mathbf P_l^T (\bm {\tilde \Sigma}_l)^{-1} \mathbf P_l$, $\mathbf P_l= \mathbf I_{n_2} - \mathbf{H}_2(\mathbf{H}_2^T \bm {\tilde \Sigma}_l^{-1} \mathbf{H}_2)^{-1}\mathbf{H}_2^T  \bm {\tilde \Sigma}_l^{-1}$ for $l=1,...,d$ and    $\mathbf Q_{1,l}=\sigma^2_0 \mathbf P_0$ with $\mathbf P_0=(\mathbf{I}_{n_2} - \mathbf{H}_2(\mathbf{H}_2^T\mathbf{H}_2)^{-1}\mathbf{H}_2^T)$ for $l=d+1,...,n_1$. The sample $(\mathbf B_1 \mid \mathbf Y, \bm \Theta_{-\mathbf{B}_1, -\mathbf{B}_2})$ can be obtained by $(\mathbf H^T_1 \mathbf H_1)^{-1}\mathbf H^T_1  \mathbf A_F \mathbf {\tilde B}_1^T$, where  $\mathbf {\tilde B}_1$  sampled from the $p(\mathbf { \tilde B}_1 \mid \mathbf Y, \bm \Theta_{-\mathbf{B}_1, -\mathbf{B}_2})$ in equation (\ref{equ:tilde_B1_marginalize_B2}). 


\item After marginalizing out $\mathbf Z$ and conditional on $\mathbf B_1$, the marginal posterior distribution of $\mathbf { \tilde B}_2$ follows (\ref{equ:tilde_B2}) by replacing $\mathbf {\tilde y}_l$ by $\mathbf {\tilde y}_{l,B_1}= (\mathbf Y- \mathbf H_1 \mathbf B_1)^T \mathbf a_l$ for $l=1,...,d$. 
\end{enumerate}

\label{lemma:sample_B12}
\end{lemma}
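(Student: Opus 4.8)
The plan is to mirror the two-step argument already used for Lemma~\ref{lemma:marginal_post_B_1_B_2}, inserting one additional Gaussian marginalization to absorb $\mathbf B_2$. Throughout I would work in the transformed coordinates $\tilde{\mathbf b}_{1,l}=\mathbf B_1^T\mathbf H_1^T\mathbf a_l$ and $\tilde{\mathbf b}_{2,l}=\mathbf B_2\mathbf a_l$, the $l$th columns of $\mathbf{\tilde B}_1$ and $\mathbf{\tilde B}_2$, so that the projected residual in Equation~(\ref{equ:marginal_lik}) reads $\tilde{\mathbf y}_l=(\mathbf Y-\mathbf M)^T\mathbf a_l=\mathbf Y^T\mathbf a_l-\tilde{\mathbf b}_{1,l}-\mathbf H_2\tilde{\mathbf b}_{2,l}$. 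Since $\mathbf A_F$ is orthogonal, the map $\mathbf B_2\mapsto\mathbf{\tilde B}_2=\mathbf B_2\mathbf A_F$ is a bijection with unit Jacobian, so the flat prior $\pi(\mathbf B_2)\propto1$ transfers to a flat prior on $\mathbf{\tilde B}_2$ whose columns $\tilde{\mathbf b}_{2,l}$ range freely and independently.

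For part 1, the key observation is that after integrating out $\mathbf Z$ the marginal likelihood (\ref{equ:marginal_lik}) factors over $l$ and that $\tilde{\mathbf b}_{2,l}$ enters only the $l$th factor; hence marginalizing $\mathbf B_2$ decouples into $n_1$ independent Gaussian integrals over the $\tilde{\mathbf b}_{2,l}$. For $l\le d$ the $l$th factor is $\exp\{-\tfrac12(\mathbf c_l-\mathbf H_2\tilde{\mathbf b}_{2,l})^T\bm{\tilde\Sigma}_l^{-1}(\mathbf c_l-\mathbf H_2\tilde{\mathbf b}_{2,l})\}$ with $\mathbf c_l=\mathbf Y^T\mathbf a_l-\tilde{\mathbf b}_{1,l}$; completing the square gives minimizer $\hat{\mathbf b}_{2,l}=(\mathbf H_2^T\bm{\tilde\Sigma}_l^{-1}\mathbf H_2)^{-1}\mathbf H_2^T\bm{\tilde\Sigma}_l^{-1}\mathbf c_l$ and residual $\mathbf c_l-\mathbf H_2\hat{\mathbf b}_{2,l}=\mathbf P_l\mathbf c_l$, so the integral leaves the quadratic form $\mathbf c_l^T\mathbf P_l^T\bm{\tilde\Sigma}_l^{-1}\mathbf P_l\mathbf c_l=\mathbf c_l^T\mathbf Q_{1,l}\mathbf c_l$ times the $\mathbf B_1$-free constant $|\mathbf H_2^T\bm{\tilde\Sigma}_l^{-1}\mathbf H_2|^{-1/2}$. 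For $l>d$ one has $\bm{\tilde\Sigma}_l=\sigma_0^2\mathbf I_{n_2}$, so $\mathbf P_l$ collapses to the orthogonal projector $\mathbf P_0$ and the exponent becomes $\sigma_0^{-2}\mathbf c_l^T\mathbf P_0\mathbf c_l$. Reading these off as (possibly degenerate) Gaussian densities in $\tilde{\mathbf b}_{1,l}$ yields Equation~(\ref{equ:tilde_B1_marginalize_B2}). To pass back to $\mathbf B_1$, I would reuse verbatim the augmentation-and-Jacobian device from the proof of Lemma~\ref{lemma:marginal_post_B_1_B_2}(1), checking that transforming a draw of $\mathbf{\tilde B}_1$ by $(\mathbf H_1^T\mathbf H_1)^{-1}\mathbf H_1^T\mathbf A_F\mathbf{\tilde B}_1^T$ reproduces exactly the $\mathbf B_2$-marginalized posterior of $\mathbf B_1$ obtained from $\pi(\mathbf B_1)\propto1$ and $\mathbf M=\mathbf H_1\mathbf B_1+(\mathbf H_2\mathbf B_2)^T$.

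Part 2 is essentially immediate: conditioning on $\mathbf B_1$ and subtracting $\mathbf H_1\mathbf B_1$ places the model in the pure column-regression form $\mathbf M=(\mathbf H_2\mathbf B_2)^T$ of Lemma~\ref{lemma:marginal_post_B_1_B_2}(2), now with data matrix $\mathbf Y-\mathbf H_1\mathbf B_1$ in place of $\mathbf Y$. Substituting this replacement into Equation~(\ref{equ:tilde_B2}) sends $\tilde{\mathbf y}_l$ to $\tilde{\mathbf y}_{l,B_1}=(\mathbf Y-\mathbf H_1\mathbf B_1)^T\mathbf a_l$, which is the stated conclusion.

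The main obstacle I anticipate is the $\mathbf B_2$-marginalization in part 1. One must justify that the integral genuinely factorizes column-wise — which rests on the orthogonality of $\mathbf A_F$ transferring both flatness and independence to the $\tilde{\mathbf b}_{2,l}$ — identify the correct oblique projector $\mathbf P_l$ (oblique with respect to the $\bm{\tilde\Sigma}_l^{-1}$ inner product) together with its degenerate quadratic form $\mathbf Q_{1,l}$, and confirm that the Gaussian normalizing constants carry no residual dependence on $\mathbf B_1$. A secondary point requiring care is that the resulting law of $\mathbf{\tilde B}_1$ is improper along $\mathrm{col}(\mathbf H_2)$, reflecting the nonidentifiability between $\mathbf H_1\mathbf B_1$ and $(\mathbf H_2\mathbf B_2)^T$; one should verify that this is precisely the degeneracy annihilated by the linear map into $\mathbf B_1$, so that the transformed draw is a proper sample.
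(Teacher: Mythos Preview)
Your proposal is correct and follows essentially the same route as the paper: write the $\mathbf Z$-marginalized likelihood in the projected coordinates $\mathbf g_l=(\mathbf Y-\mathbf H_1\mathbf B_1)^T\mathbf a_l$ (your $\mathbf c_l$), integrate out each $\tilde{\mathbf b}_{2,l}$ by completing the square to obtain the quadratic forms $\mathbf g_l^T\mathbf Q_{1,l}\mathbf g_l$, then reuse the augmentation-and-Jacobian device from Lemma~\ref{lemma:marginal_post_B_1_B_2}(1) to pass back to $\mathbf B_1$; part~2 is likewise disposed of by direct reduction to Lemma~\ref{lemma:marginal_post_B_1_B_2}(2). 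Your added remarks on the oblique projector, the $\mathbf B_1$-independence of the normalizing constants, and the degeneracy along $\mathrm{col}(\mathbf H_2)$ make explicit points the paper leaves implicit (it only notes that $\mathbf Q_{1,l}$ is its own Moore--Penrose inverse), but the underlying argument is the same.
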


\begin{proof}[Proof of Lemma \ref{lemma:sample_B12}]
Denote $\mathbf{Y}_0 = \mathbf{Y} - \mathbf{H}_1 \mathbf{B}_1 - \mathbf{B}_2^T \mathbf{H}_2^T$. Define $\mathbf{G} = [\mathbf{g}_1, \mathbf{g}_2,...,\mathbf{g}_{n_1}]= (\mathbf{Y} - \mathbf{H}_1 \mathbf{B}_1)^T \mathbf{A}_F$. That is, $\mathbf{g}_l=  (\mathbf{Y} - \mathbf{H}_1 \mathbf{B}_1)^T \mathbf{a}_l$.

First we have the joint posterior distribution $(\mathbf{ B}_1, \mathbf{ B}_2 \mid \mathbf{Y}, \bm{\Theta}_{-\mathbf{B}_1, -\mathbf{B}_2})$
\begin{align*}
    &p(\mathbf{ B}_1, \mathbf{ B}_2 \mid \mathbf{Y}, \bm{\Theta}_{-\mathbf{B}_1, -\mathbf{B}_2}) \\
    \propto & \exp\left\{ -\frac{1}{2} \sum_{l=1}^d \mathbf{a}_l^T \mathbf{Y}_0^T \bm{\tilde \Sigma}_l^{-1} \mathbf{Y}_0 \mathbf{a}_l - \frac{1}{2\sigma_0^2} \sum_{l=d+1}^{n_1} \mathbf{a}_l^T \mathbf{Y}_0^T \mathbf{Y}_0 \mathbf{a}_l  \right\} \\
    \propto & \exp\left\{ -\frac{1}{2} \sum_{l=1}^d (\mathbf{g}_l - \mathbf{H}_2 \mathbf{\tilde b}_{2,l})^T \bm{\tilde \Sigma}_l^{-1} (\mathbf{g}_l - \mathbf{H}_2 \mathbf{\tilde b}_{2,l}) - \frac{1}{2\sigma_0^2} \sum_{l=d+1}^{n_1} (\mathbf{g}_l - \mathbf{H}_2 \mathbf{\tilde b}_{2,l})^T (\mathbf{g}_l - \mathbf{H}_2 \mathbf{\tilde b}_{2,l})  \right\},
\end{align*}
where $\mathbf{\tilde b}_{2,l}$ is a transformation of $\mathbf B_2$ defined in part 2 in Lemma \ref{lemma:marginal_post_B_1_B_2}.

After integrating out $\mathbf{\tilde b}_{2,l}$ from $p(\mathbf{ B}_1, \mathbf{ B}_2 \mid \mathbf{Y}, \bm{\Theta}_{-\mathbf{B}_1, -\mathbf{B}_2})$ for $l=1,2...,n_1$, one has
\begin{align*}
    & p(\mathbf{ B}_1 \mid \mathbf{Y}, \bm{\Theta}_{-\mathbf{B}_1, -\mathbf{B}_2}) \\
    \propto & \exp\left\{-\frac{\sum_{l=1}^{d}(\mathbf{g}_l -\mathbf{H}_2 \hat{\mathbf{b}}_{2,l} )^T \tilde{\bm \Sigma}_l^{-1} (\mathbf{g}_l -\mathbf{H}_2 \hat{\mathbf{b}}_{2,l} ) }{2}
    -\frac{\sum_{l=d+1}^{n_1}(\mathbf{g}_l -\mathbf{H}_2 \hat{\mathbf{b}}_{2,l} )^T (\mathbf{g}_l -\mathbf{H}_2 \hat{\mathbf{b}}_{2,l} )}{2 \sigma_0^2}\right\}\\
    \propto  & \exp\left\{-\frac{\sum_{l=1}^{d} \mathbf{g}_l^T \mathbf{P}_l^T (\tilde{\bm{\Sigma}}_l)^{-1} \mathbf P_l \mathbf{g}_l}{2} -\frac{\sum_{l=d+1}^{n_1} \mathbf{g}_l^T \mathbf{P}_{0} \mathbf{g}_l}{2 \sigma_0^2}\right\} \\
    \propto & \exp\left\{-\frac{\sum_{l=1}^{n_1} \mathbf{g}_l^T \mathbf Q_{1,l} \mathbf{g}_l}{2}\right\}
\end{align*}
Where
\begin{equation*}
\hat{\mathbf{b}}_{2,l} = \left\{
\begin{aligned}
& (\mathbf{H}_2^T \bm{\tilde{\Sigma}}_l^{-1} \mathbf{H}_2)^{-1} \mathbf{H}_c^T \bm{\tilde{\Sigma}}_l^{-1}\mathbf{g}_l & & l=1,2,...,d \\
& (\mathbf H^T_2 \mathbf H_2)^{-1}\mathbf H^T_2\mathbf {g}_l & & l = d+1,...,n_1\\
\end{aligned}
\right.
\end{equation*}

Denote $\mathbf{B}^{aug}_1=[\mathbf B_1^T, \mathbf{\tilde B}_{1,(q_1+1):n_1}]^T$, where $\mathbf{\tilde B}_{1,(q_1+1):n_1}$ is the last $n_1-q_1$ columns of $\mathbf{\tilde B}_1$. Denote  the marginal posterior distribution $p_{trans}(\mathbf{B}_1^T \mid \mathbf{Y}, \bm{\Theta}_{-\mathbf{B}_1, -\mathbf{B}_2})$ and 
$p_{trans}(\mathbf{B}^{aug}_1 \mid \mathbf{Y}, \bm{\Theta}_{-\mathbf{B}_1, -\mathbf{B}_2})$ derived by the transformation of $p(\mathbf{\tilde B}_1 \mid \mathbf{Y}, \bm{\Theta}_{-\mathbf{B}_1, -\mathbf{B}_2})$ . One has
\begin{align*}
p_{trans}(\mathbf{B}_1^T \mid \mathbf{Y}, \bm{\Theta}_{-\mathbf{B}_1, -\mathbf{B}_2}) \propto& p(\mathbf{B}^{aug}_1  \mid \mathbf{Y},\Theta_{-\mathbf{B}_1, -\mathbf{B}_2})\\
=&p(\mathbf{\tilde B}_1  \mid \mathbf{Y},\Theta_{-\mathbf{B}_1, -\mathbf{B}_2}) \left| \frac{d \mathbf{\tilde B}_1}{ d \mathbf{B}^{aug}_1 }\right| \\
     \propto&  \exp\left\{-\frac{\sum_{l=1}^{n_1} \mathbf{g}_l^T \mathbf Q_{1,l} \mathbf{g}_l}{2}\right\}
\end{align*}

Because $\mathbf{Q}_{1,l}$ is idempotent, i.e. $\mathbf{Q}_{1,l}\mathbf{Q}_{1,l}=\mathbf{Q}_{1,l}$, the Moore–Penrose inverse of $\mathbf{Q}_{1,l}$ is $\mathbf{Q}_{1,l}$ itself. Therefore for $l=1,...,d$, $ \tilde {\mathbf{b}}_{1,l}\mid \mathbf Y, \bm \Theta_{-\mathbf{B}_1, -\mathbf{B}_2}\sim \mathcal M( \mathbf {\tilde y}_l, \, \mathbf{Q}_{1,l}  )$, from which the part 1 follows. Part 2 follows Lemma \ref{lemma:marginal_post_B_1_B_2}.

\end{proof}

We are ready to prove Theorem \ref{thm:sample_B12}.

\begin{proof}[Proof of Theorem \ref{thm:sample_B12}]

By Lemma \ref{lemma:sample_B12}, the posterior mean of $\mathbf{\tilde B}_1\mid \mathbf{Y},\bm \Theta_{-B_1,-B_2}$ is $\mathbf{Y}^T \mathbf{A}_F$, where $\mathbf{A}_F := [\mathbf{A}_s, \mathbf{A}_c]$. We denote the centered $\mathbf{\tilde B}_1$ by $\mathbf{\tilde B}_{1,0} = [\mathbf{\tilde B}_{1,Q}, \mathbf{\tilde B}_{1,0,c}] = \mathbf{\tilde B}_1 - \mathbf{Y}^T \mathbf{A}_F$, where $\mathbf{\tilde B}_{1,Q}$ is the first $d$ columns of $\mathbf{\tilde B}_{1,0}$ and $\mathbf{\tilde B}_{1,0,c}$ is the next $(n_1-d)$ columns of $\mathbf{\tilde B}_{1,0}$. Then the posterior mean of $\mathbf{B}_1\mid \mathbf{Y}, \bm{\Theta}_{-B_1,-B_2}$ can be calculated below

\begin{align*}
    \mathbf{\hat{B}}_1 & = \E\left(\mathbf{B}_1\mid \mathbf{Y}, \bm{\Theta}_{-B_1,-B_2}\right) = \E\left((\mathbf{H}_1^T\mathbf{H}_1)^{-1} \mathbf{H}_1^T \mathbf{A}_F \mathbf{\tilde B}_1^T \mid \mathbf{Y}, \bm{\Theta}_{-B_1,-B_2}\right) \\
    & =(\mathbf{H}_1^T\mathbf{H}_1)^{-1} \mathbf{H}_1^T \mathbf{A}_F \mathbf{A}_F^T \mathbf{Y} = (\mathbf{H}_1^T\mathbf{H}_1)^{-1} \mathbf{H}_1^T \mathbf{Y}
\end{align*}




Note $\mathbf B_1=(\mathbf H^T_1\mathbf H_1)^{-1}\mathbf H_1^T \mathbf A_F \mathbf{\tilde B}_1^T$, one has 

\begin{align*}
     \mathbf{B}_1 - \mathbf{\hat{B}}_1 &= (\mathbf H^T_1\mathbf H_1)^{-1} \mathbf H_1^T \mathbf A_F (\mathbf { \tilde B}_{1,0})^T = (\mathbf H^T_1\mathbf H_1)^{-1} \mathbf H_1^T ( \mathbf{A}_s (\mathbf{\tilde B}_{1,Q})^T + \mathbf{A}_c (\mathbf{\tilde B}_{1,0,c})^T)
\end{align*}

where by Lemma \ref{lemma:sample_B12}, $ \mathbf {\tilde B}_{1,Q}$ is an $n_2 \times d$ matrix with the $lth$ column independently sampled from $\mathcal N(\mathbf 0, \mathbf{Q}_{1,l})$ for $l=1,...,d$.
For the distribution of $\mathbf{A}_c \mathbf{\tilde B}^T_{1,0,c}$, using part 1 of Lemma \ref{lemma:sample_B12}, we have 
\begin{align*}
      &p(\mathbf{A}_c \mathbf{\tilde B}_{1,0,c}^T | \mathbf{Y}, \bm{\Theta}_{-B_1,-B_2})   \\
    \propto  & \exp\left\{\frac{1}{2\sigma_0^2}\tr\left(\mathbf{A}_c \mathbf{A}_c^T \mathbf{\tilde B}_{1,0,c} \mathbf{P}_{0} (\mathbf{\tilde B}_{1,0,c})^T\right)\right\} \\
    \propto & \exp\left\{-\frac{1}{2\sigma_0^2}\tr\left((\mathbf{I}_{n_1} - \mathbf{A}_s \mathbf{A}_s^T) \mathbf{\tilde B}_{1,0,c} \mathbf{P}_{0} (\mathbf{\tilde B}_{1,0,c})^T\right)\right\}.
\end{align*}
Thus we can sample marginal posterior distribution of $\mathbf{A}_c \mathbf{\tilde B}_{1,0,c}^T$ by $\sigma_0 (\mathbf{I}_{n_1} - \mathbf{A}_s\mathbf{A}_s^T)\mathbf{Z}_{0,1} \mathbf{P}_{0}$, where $\mathbf Z_{0,1}$ is an $n_1 \times n_2$ matrix with each entry independently sampled from standard normal distribution. The results soon follow.

\end{proof}
  \begin{figure}[t]
\centering
  \begin{tabular}{ccc}
  \hspace{-.15in}
	\includegraphics[height=.4\textwidth,width=.34\textwidth]{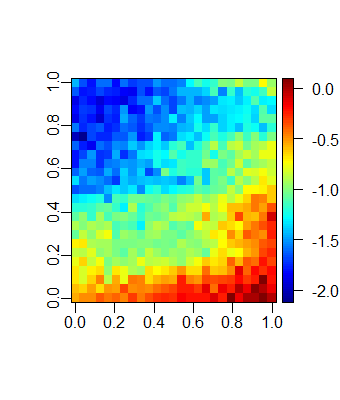} \hspace{-.1in}
	\includegraphics[height=.4\textwidth,width=.34\textwidth]{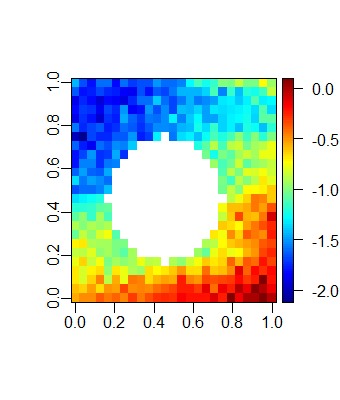} 
	\hspace{-.1in}
	\includegraphics[height=.4\textwidth,width=.34\textwidth]{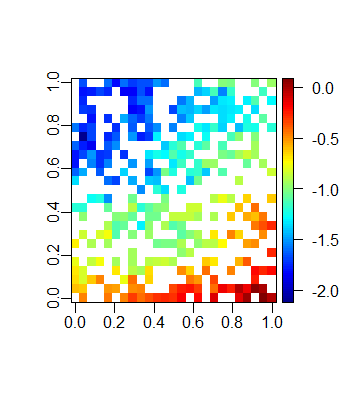}
  \end{tabular}
  \vspace{-.4in}
   \caption{The simulated data  with  full observations, disk missing pattern and missing-at-random pattern with $50\%$ of the missing values are graphed in the left, middle and right panels, respectively.}
\label{fig:missing_pattern}
\end{figure}

\section{Additional results of  simulated studies in Section \ref{sec:simulation} }
\label{sec:supp_simulation}

\begin{figure}[t]
\includegraphics[height=.7\textwidth,width=.8\textwidth]{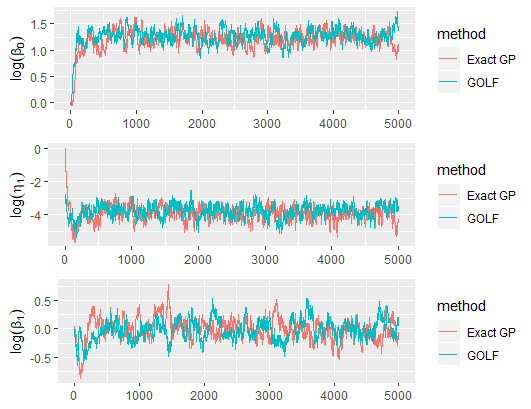}
\centering
   \caption{Trace plots of posterior samples of  parameters in the exact GP model and GOLF processes for the simulated data in figure \ref{fig:missing_pattern}.}
\label{fig:traceplots}
\end{figure}

We provide additional results for the simulated studies in Example \ref{eg:compare_exact} in Figure \ref{fig:missing_pattern} and Figure \ref{fig:traceplots}. We graph the simulated data set with  full observations, disk missing pattern and missing-at-random pattern with $50\%$ of the missing values in Figure \ref{fig:missing_pattern}.  Posterior samples of the logarithm of the inverse range parameter of factor loading matrix,   the nugget parameter and  the inverse range parameter of the factors are graphed from the upper to lower panels in Figure \ref{fig:traceplots}, respectively. The posterior samples of parameters in the exact GP model and GOLF processes are similar to each other.




\begin{figure}[t]
\centering
  \begin{tabular}{ccc}
	\includegraphics[height=.3\textwidth,width=.33\textwidth]{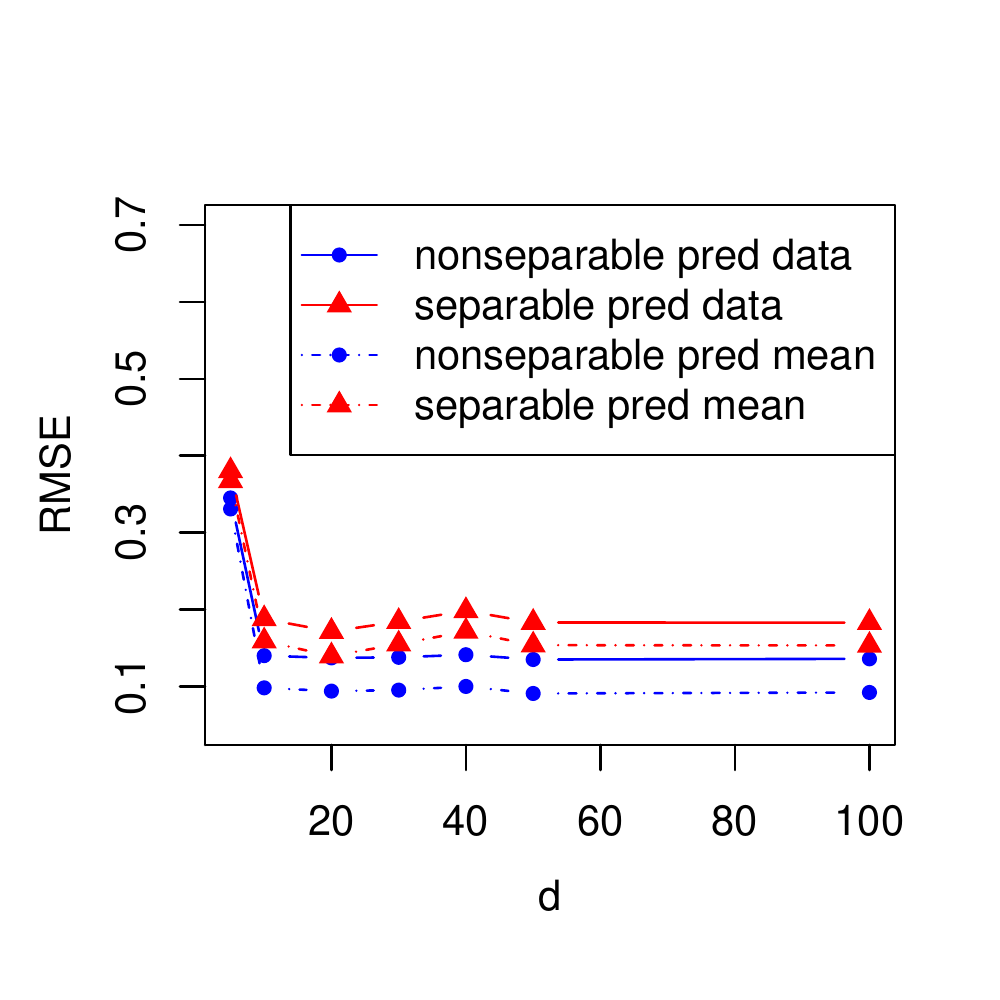}
	\includegraphics[height=.3\textwidth,width=.33\textwidth]{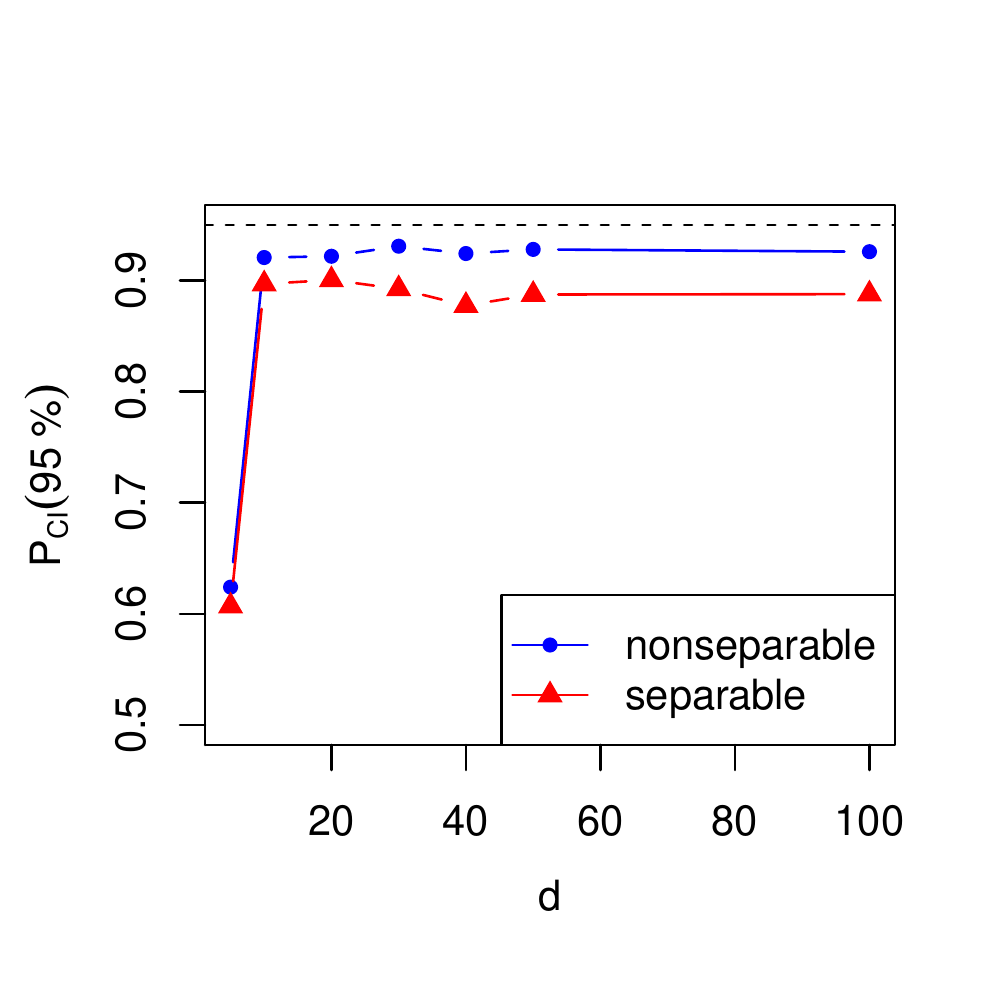} 
	\includegraphics[height=.3\textwidth,width=.33\textwidth]{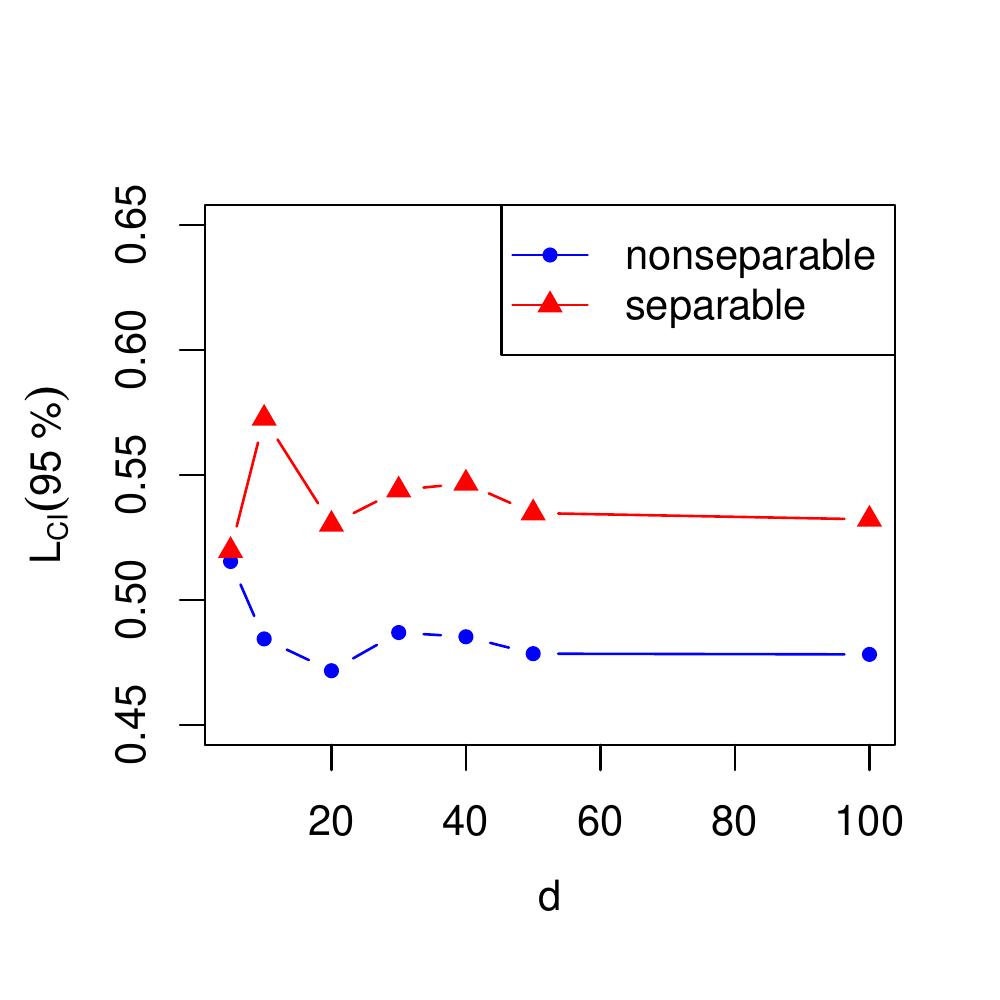}   \vspace{-.4in}
 \\
		\includegraphics[height=.3\textwidth,width=.33\textwidth]{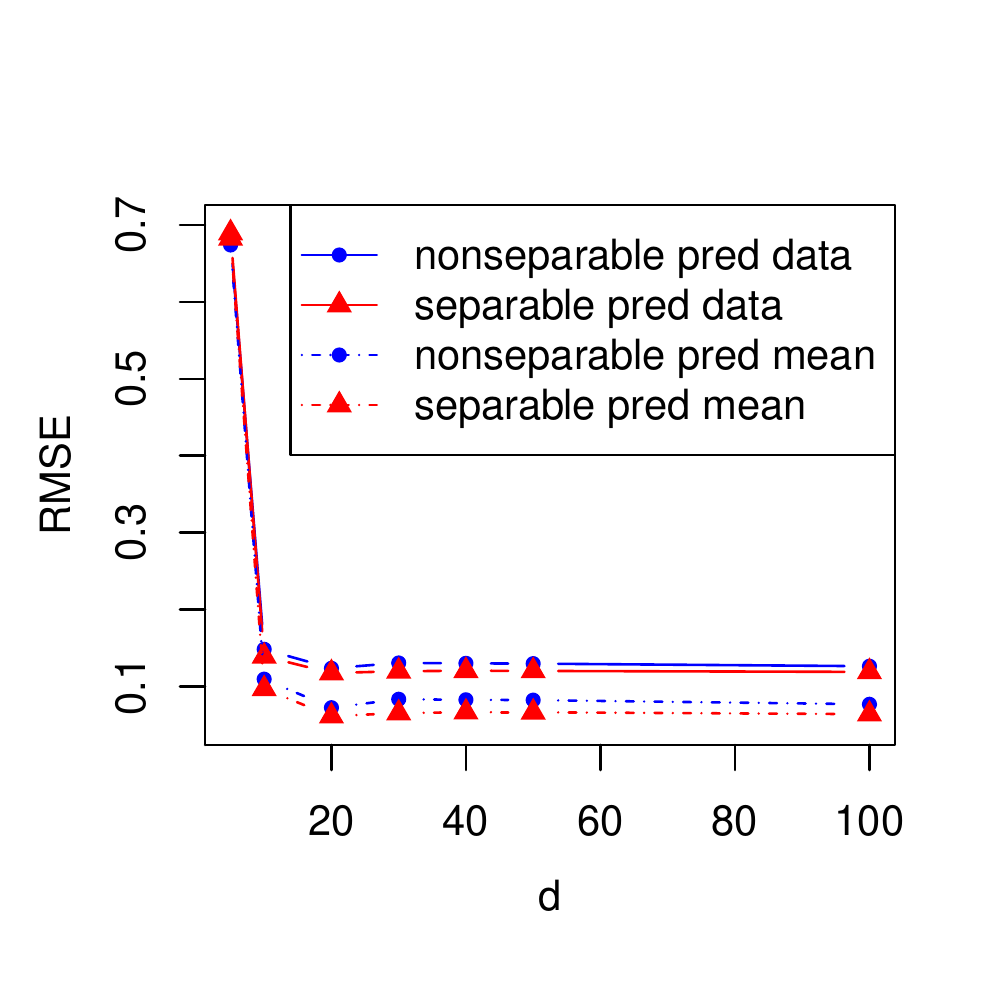}
	\includegraphics[height=.3\textwidth,width=.33\textwidth]{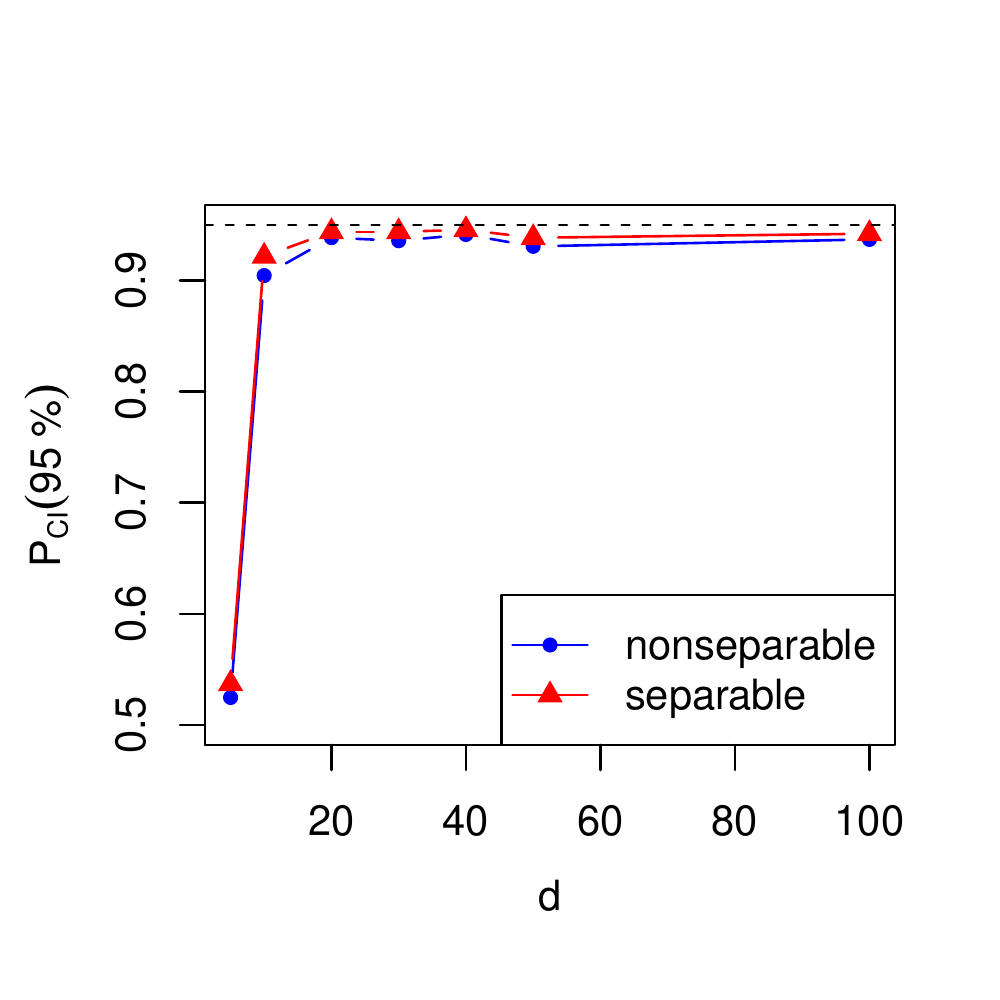} 
	\includegraphics[height=.3\textwidth,width=.33\textwidth]{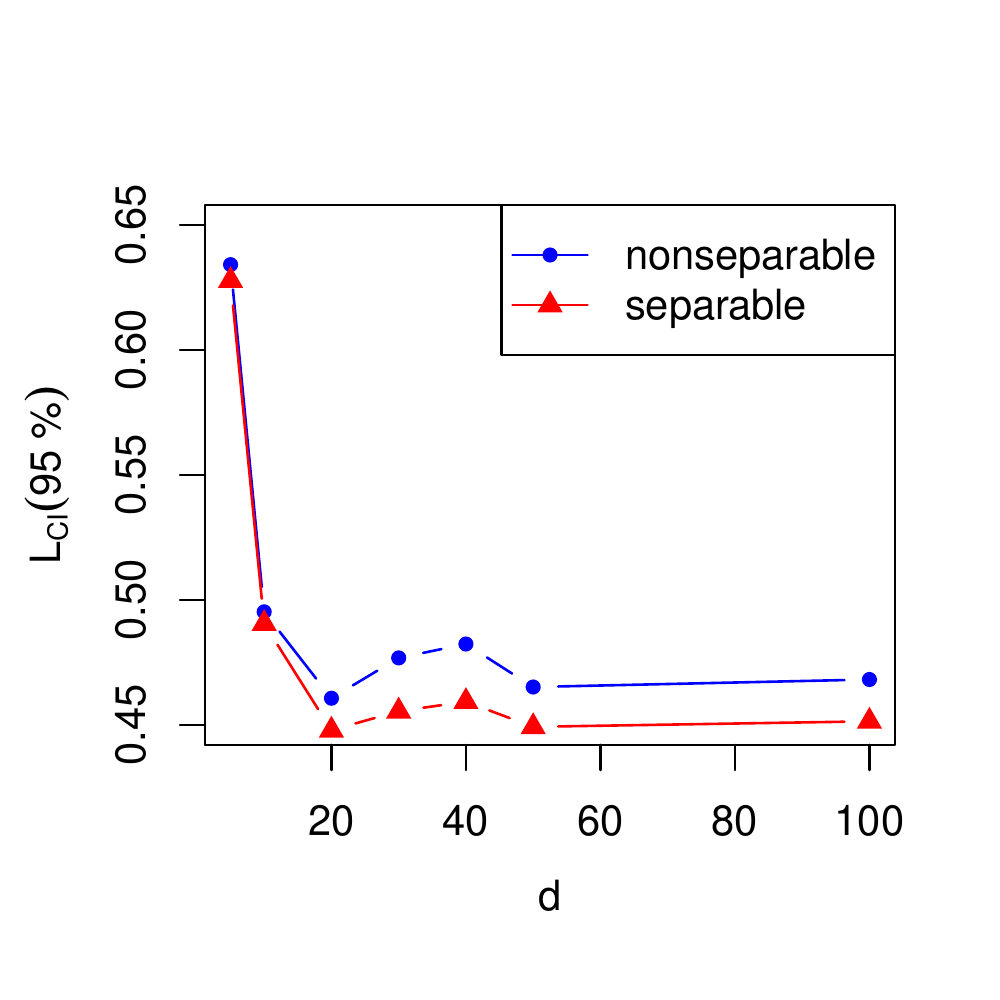}
  \end{tabular}
  \vspace{-.2in}
   \caption{The predictive performance of GOLF process with $d=5,10,20,30,40, 50$ and $100$ factors for Example \ref{eg:compare_diff_factors}, when the true number of factor is $d_{real}=100$ in generating the data. The nonseparable kernel with distinct kernel parameters is assumed to generate the data in the first row of panels, and the separable kernel with the same kernel parameter of each factor process is used for simulation in the second row of panels. The blue curves and red curves denote the performance by the GOLF processes with the different kernel parameters and the same kernel parameter, respectively. In the left panels, the solid curves denote the RMSE for predicting the (noisy) observations, and the dashed curve denotes the RMSE for predicting the mean of the observations. The proportions of observations covered in the $95\%$ predictive interval and the average length of the predictive interval are graphed in the middle and right panels, respectively.  }
   
   
\label{fig:pred_compare_diff_factors_d_real}

\end{figure}

\section{Additional results for real applications in Section \ref{subsec:real_spatial_data}}
\label{sec:supp_real_data}

In this section, we include additional results for GOLF processes predicting the missing values of the temperature data set discussed in \cite{heaton2019case}. We show the details of 5 different configurations of GOLF processes, where the result reported in the main body of the article is the configuration 1. For all the configurations, the proportion of the burn-in samples is $20\%$.
We use the normal distribution centered on the previous values as  the proposal distribution of the logarithm of the inverse range parameters and logarithm of the nugget parameters. For the logarithm of the inverse range parameters of the factor loading matrix,  the standard deviation of the proposal distribution is $40/n_1$. For the logarithm of the inverse range parameters and the nugget parameters of the factor processes, the standard deviation of the posterior distribution is set to be $40/n_2$.




\begin{table}[h]
	  \centering
	{\begin{tabular}{lccccc} \hline
		  & sample size & system & initial $Y^*_{v,i}$& initial  $log(\beta_0)$ & initial $\log(\beta_l)$   \\ \hline
		Conf. 1 & 6000 & Mac & mean  at each latitude & 3 &0\\	
		Conf. 2 & 6000 & Win & mean  at each latitude & 3 &0\\
		Conf. 3 & 40000 & Mac & mean  at each latitude & 3 &0\\
		Conf. 4 & 40000 & Mac & overall mean $+$ noise & 3 &0\\
		Conf. 5 & 40000 & Mac & mean  at each latitude & Unif[-1,1] &Unif[-1,1]\\

		\hline
	\end{tabular}
	}
		 \caption{Detailed settings of 5 different configurations of GOLF processes for the data set in \cite{heaton2019case}. The number of samples and the computing system are shown in the second column and third column, respectively. The choice of the initial values of the missing data is given in the fourth column, using either the mean of the observations at each latitude or overall mean of the observations with a small random Gaussian noise (with standard deviation being $0.1$ times of the standard deviation of the observations). The initial values of the logarithm of the inverse range parameters are either chosen to be a fixed value or randomly sampled from the uniform distribution, shown in columns 5-6.  }
\label{tab:setting_diff_conf}
	\end{table}

	  \begin{table}[h]
	  \centering
	{\begin{tabular}{lcccc} \hline
		Methods  & RMSE & $P_{CI}(95\%)$ & ${L_{CI}(95\%)}$ \\ \hline
		Configuration 1 & 1.46 & 0.92 &4.95\\
		Configuration 2 & 1.50 & 0.91 &4.92\\
		Configuration 3 & 1.44 & 0.94 &7.70\\
		Configuration 4 & 1.48 & 0.94 &7.75\\
		Configuration 5 & 1.51 & 0.93 &5.16\\

		\hline
	\end{tabular}
	}
		 \caption{Predictive performance of 5 different implementations for the data set in \cite{heaton2019case}.}
\label{tab:prediction_diff_conf}
	\end{table}
	
	The details of 5 configurations are given in Table \ref{tab:setting_diff_conf}. The predictive RMSE, $P_{CI}(95\%)$ and ${L_{CI}(95\%)}$ of the 5 configurations are given in Table \ref{tab:prediction_diff_conf}. The predictive RMSE is similar for all 5 configurations. Increasing the posterior sample size seems to slightly increase the proportion of the samples contained in the $95\%$ predictive interval.

	\begin{figure}[H]
\centering
  \begin{tabular}{c}
	\includegraphics[height=.4\textwidth,width=1\textwidth]{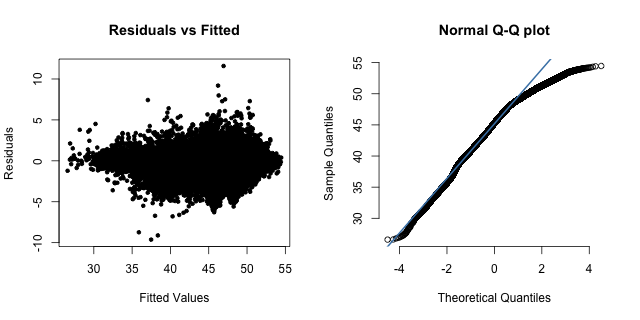}
  \end{tabular}
   \caption{Diagnostic plots of the GOLF processes for the data set in \cite{heaton2019case}.}

   
\label{fig:diagnostic_plot_real_eg_1}

\end{figure}

	The fitted values from the GOLF processes in configuration 1 against the residuals and the normal Q-Q plot  are graphed in the left panel and the right panel in Figure \ref{fig:diagnostic_plot_real_eg_1}, respectively. The Q-Q plot indicates the fitted values are slightly left-skewed and slightly under-dispersed.

\end{document}